\documentclass[twoside,final,11pt]{entics} 
\usepackage{enticsmacro}
\usepackage{graphicx}
\usepackage[all]{xy}
\usepackage{bussproofs}

\usepackage{amsmath,amssymb}
\usepackage{tikz-cd}
\usepackage{hyperref}
\usepackage{quiver}
\usepackage{enumitem}
\usepackage{array}

\usepackage{todonotes}

\def\conf{MFPS 2026} 	
\volume{6}			
\def\volu{6}			
\def\papno{12}			
\def\mydoi{19649}
\def\lastname{Mio, Riba} 
\def\runtitle{Compact Quantitative Equational Theories as Monad Restrictions} 
\def\copynames{M. Mio, C. Riba}    

\newcommand\sle\subseteq
\newcommand\gle\supseteq

\newcommand\comp{\mathbin{\circ}}
\newcommand{\into}{\hookrightarrow} 

\newcommand\Str{\operatorname{Str}}
\newcommand\Mod{\operatorname{Mod}}
\newcommand\EM{\operatorname{EM}}
\newcommand\id{\operatorname{id}}	
\newcommand\Id{\operatorname{Id}}	

\newcommand\Pres{\operatorname{Pres}}

\newcommand\cat[1]{\mathcal{#1}}

\newcommand\Set{\mathbf{Set}}
\newcommand\Met{\mathbf{Met_1}}
\newcommand\FRel{\mathbf{FRel}}

\newcommand\Sig{\mathcal{S}}		
\newcommand\Sigbis{{\Sig'}}	

\newcommand\Horn{\mathcal{H}}		
\newcommand\Hornbis{{\Horn'}}	

\newcommand\atom{\alpha}		
\newcommand\atombis{{\atom'}}		
\newcommand\atomext{\atom'}

\begin{document}
\begin{frontmatter}
    \title{Compact Quantitative Equational Theories\\ as Monad Restrictions}
         \thanks[ALL]{This work has been partially supported by the french ANR PRCI project ATQUA (ANR-25-CE48-1428-01).}  
  \author{Matteo Mio\thanksref{a}\thanksref{myemail}}	
   \author{Colin Riba\thanksref{b}\thanksref{coemail}}		
   \address[a]{CNRS and ENS de Lyon, France}  							
   \thanks[myemail]{Email: \href{mailto:matteo.mio@ens-lyon.fr} {\texttt{\normalshape
        matteo.mio@ens-lyon.fr}}} 
  \address[b]{ENS de Lyon, France}  
  \thanks[coemail]{Email:  \href{mailto:colin.riba@ens-lyon.fr} {\texttt{\normalshape
        colin.riba@ens-lyon.fr}}}
\begin{abstract} 
Proofs in quantitative algebra are, generally, infinite objects due to the presence of infinitary rules in the deductive apparatus. 
A quantitative equational theory is called compact if all its consequences are derivable by means of finite proofs, that is, proofs not using any instances of the infinitary rules.
This definition is purely logical.
In this work we give an equivalent, but categorical, characterisation based on the notion of monad restriction. 
\end{abstract}
\begin{keyword}
  Quantitative algebra, monads, monad restrictions, monad morphisms, compactness.
\end{keyword}
\end{frontmatter}

\section{Introduction}

\noindent
Quantitative algebra, introduced in \cite{DBLP:conf/lics/MardarePP16},
is an active area of investigation (see, e.g., \cite{DBLP:journals/jacm/BacciMPP26,adamek22,adv23,DBLP:journals/lmcs/BacciBLM18,bmpp18,DBLP:journals/corr/abs-2302-01224,DBLP:journals/lmcs/BacciMPP24,Gavazzo18}, 
\cite{DBLP:journals/pacmpl/GavazzoF23,DBLP:conf/lics/MardarePP17,Sarkis2024,MioEA24,MioEA22,DBLP:conf/lics/MioSV21,DBLP:conf/concur/MioV20,DBLP:conf/fscd/LagoHLP22,JurkaEA24,MiliusUrbat19,WildS20,WildS21,GoncharovEA23,DAngeloEA24,ForsterEA25,entics:16876})
aimed at extending the methods of universal algebra to the study of so-called
\emph{quantitative algebras}.
In the original formulation of \cite{DBLP:conf/lics/MardarePP16}, a quantitative algebra 
is an algebra whose carrier is an extended metric space.
More formally, it is a triple
$\mathbb{A}=(A,d_A,\{ \sigma^{A} \}_{\sigma\in\Sigma})$
where $(A,d_A)$ is an extended metric space and
$(A,\{ \sigma^{A} \}_{\sigma \in\Sigma})$ is a $\Sigma$-algebra,
in the ordinary sense of universal algebra, for
some signature $\Sigma$ of function symbols.
The key novel concept is that of \emph{quantitative equation}:
an expression of the form $s=_\epsilon t$ indexed by a real number $\epsilon\geq 0$,
whence the adjective ``quantitative'', which generalises ordinary equations ($s=t$) 
and makes it possible to axiomatize
the interplay between the algebraic and the metric structure.
The meaning of  $s=_\epsilon t$ is that the distance between any interpretations
of the terms $s$ and $t$ is bounded above by $\epsilon$,
i.e., $d_A(s^A, t^A)\leq\epsilon$. 

\vspace{0.2cm}
\emph{1.1 Monads. }
Given a collection of quantitative equations $T$ and an extended metric space $(X,d_X)$, it is possible to construct the free quantitative algebra satisfying $T$ generated by $(X,d_X)$.
This construction gives rise to a monad on the category $\mathbf{Met}$ of extended metric spaces and nonexpansive maps. Monads on $\mathbf{Met}$ that can 
be axiomatized by quantitative equations include the Kantorovich probability distribution monad \cite{DBLP:conf/lics/MardarePP17}, the Hausdorff powerset monad \cite{DBLP:conf/lics/MardarePP17} and the Kantorovich--Hausdorff convex powerset of probability distributions monad \cite{DBLP:conf/concur/MioV20}.  
These $\mathbf{Met}$ monads have applications in 
semantics and formal verification for modelling computational
effects in programs having quantitative behaviours
(such as, e.g., probabilistic programs).
The attractiveness of quantitative algebra in this context
is to provide a logical/syntactic way to handle such monads, in the same way that 
ordinary universal algebra and its logical/syntactic apparatus of
equational logic are used to handle (finitary) $\Set$ monads
(see, e.g., \cite[3.18]{Adamek_Rosicky_1994}).

\vspace{0.2cm}
\emph{1.2 Compact Quantitative Equational Theories. }
A main obstacle towards the above outlined application 
is that the logical apparatus of quantitative algebra
is based on \emph{infinitary first order logic}. 
Concretely, this means that the binary relations appearing in quantitative equations  ($=_\epsilon$, for $\epsilon \geq 0$) are subject to, among others, the following axiom:
\[
\forall x,y. \ \ \Big( \displaystyle \bigwedge_{n\in\mathbb{N}} x =_{\epsilon_n} y  \Big) \Rightarrow x =_\epsilon y \qquad \qquad \textnormal{ where } \epsilon=  \inf \{ \epsilon_n \}_{n\in \mathbb{N}}
\tag{Infinitary Axiom}
\]

\noindent
It states that if the distance between $x$ and $y$ is bounded above by a set
$\{ \epsilon_n \}_{n\in\mathbb{N}}$ of upper bounds, 
then necessarily the distance is also bounded by the infimum of all these upper bounds.  
The presence of this infinitary axiom means that, in general,
proofs are infinitely branching well-founded trees.
Hence they are not finite objects 
and therefore cannot be represented, manipulated, enumerated and verified by machines.

On the one hand, it can be shown that an infinitary axiom, such as the one above, is necessary in any sound and complete proof system for quantitative algebra. 
On the other hand, it has been recently observed in \cite{entics:16876} that, for some well behaved theories called \emph{compact quantitative equational theories} $T$, it holds that a quantitative 
equation is derivable from $T$ in the proof system if and only if it is derivable  from $T$ \emph{without} any usage of infinitary axioms. 
Hence, compact quantitative equational theories are exactly those amenable to finitary syntactic manipulation and proofs, just as ordinary equational theories of universal algebra. Perhaps surprisingly, \cite{entics:16876} has shown that rather than being an artificial or rare notion, the class 
 of compact quantitative equational theories includes many interesting examples such as those axiomatizing the $\mathbf{Met}$ monads, mentioned earlier, useful in program semantics and verification.

\vspace{0.2cm}
\emph{1.3 Technical Contribution. }
The definition of \emph{compact} quantitative equational theory
from~\cite[Def.~3.1]{entics:16876},
as briefly outlined above, is purely syntactical: a theory $T$ is compact
whenever there are finite proofs for all derivable quantitative equations.
The main achievement of the present work is to give an equivalent characterisation
of compactness using the language of category theory and monads. 

Rather than working with the notion of quantitative algebra
of~\cite{entics:16876} or~\cite{DBLP:conf/lics/MardarePP16},
it is technically convenient to work with an even more
general definition inspired from both~\cite{FordEA21} and~\cite{MioEA24}.
Namely, the base category $\mathbf{Met}$ is replaced and generalised to
$\Mod_{\Pi}(\mathbb{B})$,
the category of models of $\mathbb{B}$ and their homomorphisms, where 
 $\Pi$ is a relational signature consisting only of relation symbols and
$\mathbb{B}$ is a collection of (possibly infinitary) Horn formulas over $\Pi$.
The special case $\mathbf{Met}$ is obtained by taking
$\Pi = \{ {=_\epsilon} \mid \epsilon \geq 0 \}$
and by defining the collection $\mathbb{B}$ 
to include, among others,
the infinitary axiom mentioned above
(see Example~\ref{example:base-theory-fuzzy-relations} and~\cite[\S 2.3]{MioEA24}
for a full list).
The algebraic part, on top of the base category, 
is specified by a signature $\Sigma$ consisting only of function symbols (this includes constants, as nullary functions) and a set
$\mathbb{Q}$ of (possibly infinitary) Horn formulas over the extended signature
$\Pi+ \Sigma$ having a special shape
and playing the role of quantitative equations
(see Section~\ref{sec:quantitative-algebra-background} for details).
The category $\Mod_{\Pi+\Sigma}(\mathbb{B}+\mathbb{Q})$,
over the extended signature and theory,
plays the role of the category of quantitative algebras.
The free algebra construction then gives rise to the following adjunction
$F\dashv U$ and corresponding monad $ T= U F$ on $\Mod_\Pi(\mathbb{B})$:
\begin{center}
   \hspace{1.5cm} Base Category \qquad Category of Quantitative Algebras
\end{center}
\[\begin{tikzcd}
	{\Mod_\Pi(\mathbb{B})} & {\Mod_{\Pi+\Sigma}(\mathbb{B}+\mathbb{Q})}
	\arrow["F", shift left=2, from=1-1, to=1-2]
	\arrow["U", shift left, from=1-2, to=1-1]
\end{tikzcd}.\]
If we take the theories $\mathbb{B}$ and $\mathbb{Q}$ and remove from them all infinitary Horn formulas, thus only leaving the finite formulas, we obtain finitary sub-theories $\mathbb{B}_f\subseteq \mathbb{B}$ and $\mathbb{Q}_f\subseteq \mathbb{Q}$ and the following adjunction $F_f\dashv U_f$ and monad $T_f = U_f F_f$ on $\Mod_{\Pi+\Sigma}(\mathbb{B}_f):$
\begin{center}
	$
	\begin{tikzcd}
	{\Mod_\Pi(\mathbb{B}_f)} & {\Mod_{\Pi+\Sigma}(\mathbb{B}_f+\mathbb{Q}_f)}
	\arrow["{F_f}", shift left=2, from=1-1, to=1-2]
	\arrow["{U_f}", shift left, from=1-2, to=1-1]
	\end{tikzcd}.
	$
\end{center}

\noindent
We denote with
$G\colon \Mod_{\Pi}(\mathbb{B})\hookrightarrow  \Mod_{\Pi}(\mathbb{B}_f)$
and
\(
  I\colon
  \Mod_{\Pi+\Sigma}(\mathbb{B}+\mathbb{Q})
  \hookrightarrow
  \Mod_{\Pi+\Sigma}(\mathbb{B}_f+\mathbb{Q}_f)
\)
the obvious inclusions
(every model of $\mathbb{B}$ (resp.\ $\mathbb{B} + \mathbb{Q}$)
is also a model of $\mathbb{B}_f$
(resp.\ $\mathbb{B}_f + \mathbb{Q}_f$)).


Our main result is Theorem~\ref{thm:main-result} in Section~\ref{sec:main-result}.
It states a correspondence between a purely logical property of the theory $T$
and a categorical property of the corresponding monad: 

\textbf{Main Result:}
The quantitative equational theory given by $\mathbb{B}$ and $\mathbb{Q}$ is compact
(Definition~\ref{compactness_definition})

\hspace{2.57cm} if, and only if,
$T$ is a monad restriction\footnote{
	A monad restriction is a natural isomorphism $\gamma: T_f G\to G T$ subject to compatibility constraints (\S\ref{sec:monad-background}). 
} \hspace{-0.1cm}of $T_f$
and the restriction is compatible with $I$.

This paper is organised as follows.
In Section~\ref{sec:tech-background} the required notions from category theory, including monad restrictions, and  of Horn logic are given. 
In Section~\ref{sec:quantitative-algebra-background} we define, following \cite{FordEA21}, the generalised notion of quantitative algebra outlined above. 
In Section~\ref{sec:compact-definition}, the definition of compact quantitative equational theory of \cite{entics:16876} is adapted to this more general setting.
The definition coincides with that of \cite{entics:16876} when the quantitative algebras of \cite{entics:16876} are taken as particular case.
In Section~\ref{sec:main-result} we present and give a full and self contained proof of our main Theorem~\ref{thm:main-result}. 
Lastly, we discuss directions for future work in Section~\ref{sec:conclusions}.

\section{Technical Background}\label{sec:tech-background}
Throughout this background section, we assume basic knowledge of category theory and logic. 

\subsection{Eilenberg--Moore category and Canonical Comparison Functor} \label{sec:monad-background-comparison}
Given categories $\mathcal{C}$ and $\mathcal{D}$ and an adjunct pair
$F\dashv U \colon \mathcal{C} \to \mathcal{D}$ with unit $\eta$ and counit $\epsilon$,
\[
  F\colon \mathcal{C} \to \mathcal{D}
  \qquad
  U\colon \mathcal{D} \to \mathcal{C}
  \qquad
  \eta\colon \Id_{\mathcal{C}}\Rightarrow U F
  \qquad
  \epsilon\colon F U \Rightarrow  \Id_{\mathcal{D}}
\]
there is a corresponding monad $M=(M,\eta^M, \mu^M)$ on $\mathcal{C}$,
with underling functor $M:= U F$,
unit $\eta^M:=\eta$ and multiplication $\mu^M:=(U\epsilon_F)$ (see, e.g.,~\cite[VI.1]{maclane98book}).

Different adjunctions may induce the same monad $M$ on $\mathcal{C}$.
A canonical choice is given by the Eilenberg--Moore construction
(see, e.g.,~\cite[VI.2]{maclane98book}).
Given a monad $M=(M,\eta^M,\mu^M)$ on $\mathcal{C}$ we denote with $\EM(M)$ the Eilenberg--Moore category of $M$.
The objects of $\EM(M)$ are called $M$-algebras and are pairs
$(C,\alpha_C)$ with $C\in \mathcal{C}$ and
$\alpha_C\colon MC \to C$ in $\mathcal{C}$ satisfying:
$\alpha_C \circ \eta_C = \id_C$
and $ \alpha_C \circ M(\alpha_C) = \alpha_C \circ \mu_C$. 
The morphisms of $\EM(M)$ from $(C,\alpha_C)$ to $(C',\alpha_{C'})$ are
morphisms $f\colon C\to C'$ in $\mathcal{C}$ satisfying 
\(
f\circ \alpha_C  = \alpha_{C'}\circ M(f)
\). 
There is a forgetful functor $U_{M}\colon \EM(M)\to \mathcal{C}$ defined on objects as
$U_{M}(C,\alpha_C)=C$ and on morphisms as $U_{M}(f)=f$.
This functor has a left adjoint $F_{M}\dashv U_{M}$ defined on objects as
$F_{M}(C) = (MC, \mu_C)$ and on morphisms as $F_M(f) = M(f)$.
The adjunction $F_{M}\dashv U_{M}$ induces, as described above,
the same monad $M$ on $\mathcal{C}$ we started with.


The following is a key property of the Eilenberg--Moore construction
(see, e.g., \cite[VI.3.1]{maclane98book}).

\begin{proposition}[Comparison Functor]\label{prop:key-property-comparison}
Let $F\dashv U$ be an adjoint pair with unit $\eta$ and counit $\epsilon$. Let $M=(M,\eta^M, \mu^M)$ be the induced monad and $\EM(M)$ the associated Eilenberg-Moore category.
There is a functor $K\colon \mathcal{D}\to \EM(M)$,
called the \emph{comparison functor}, defined as follows:
\[
K(D) = (U D, U\epsilon_D)\qquad K(f) = U(f).
\]
\end{proposition}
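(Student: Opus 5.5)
To prove the statement, I will check three things in order. First, that $K(D)$ is indeed an object of $\EM(M)$. Second, that $K(f)$ is indeed a morphism of $\EM(M)$. Third, that $K$ preserves identities and composition. Throughout, I use only the triangle identities of the adjunction $F\dashv U$, the naturality of $\epsilon$, and the definitions $M=UF$, $\eta^M=\eta$ and $\mu^M = U\epsilon_F$.

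\emph{$K(D)$ is an $M$-algebra.} Take $\alpha_{UD} := U\epsilon_D \colon UFUD \to UD$, which has the right type since $MUD = UFUD$.
The unit law $U\epsilon_D \circ \eta_{UD} = \id_{UD}$ is exactly the triangle identity $U\epsilon \circ \eta_U = \id_U$, evaluated at $D$.
For the associativity law we must show
\[
U\epsilon_D \circ UFU\epsilon_D = U\epsilon_D \circ U\epsilon_{FUD}.
\]
Apply naturality of $\epsilon\colon FU\Rightarrow \Id_{\mathcal{D}}$ to the morphism $\epsilon_D\colon FUD\to D$. This gives $\epsilon_D \circ FU\epsilon_D = \epsilon_D\circ \epsilon_{FUD}$, and applying $U$ yields the required equation. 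This is the only step that uses anything beyond a direct unfolding of definitions. I regard it as the main point to get right, because one must choose the correct instance of naturality (at $\epsilon_D$ itself).

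\emph{$K(f)$ is a morphism of $\EM(M)$.} Let $f\colon D\to D'$. We need
\[
Uf \circ U\epsilon_D = U\epsilon_{D'} \circ M(Uf) = U\epsilon_{D'}\circ UFUf.
\]
This is $U$ applied to the naturality square $f\circ \epsilon_D = \epsilon_{D'}\circ FUf$ of $\epsilon$ at $f$.

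\emph{Functoriality.} The identities $K(\id_D)=U(\id_D)=\id_{UD}$ and $K(g\circ f)=U(g)\circ U(f)$ hold because $U$ is a functor, and composition and identities in $\EM(M)$ are those of $\mathcal{C}$.

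Finally, I would remark that $U_M\circ K = U$ holds on the nose and that $K\circ F = F_M$, since $K(FC)=(UFC, U\epsilon_{FC})=(MC,\mu^M_C)$. Neither fact is needed for the statement itself, but both are the properties used later.
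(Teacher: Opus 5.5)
Your proposal is correct and is essentially the standard verification that the paper delegates to Mac Lane [VI.3] without giving a proof of its own. The unit law follows from the triangle identity, and both the associativity law (naturality of $\epsilon$ at $\epsilon_D$) and the morphism condition (naturality of $\epsilon$ at $f$) follow after applying $U$.
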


With the above definitions in place, one easily derives the following proposition
(see, e.g.~\cite[VI.3.1]{maclane98book}).

\begin{proposition}\label{prop:useful-fact-EM}
Fix, as described above, an adjunction $F\dashv U$, the associated monad $M$,
the $\EM(M)$ category, the adjunction $F_M\dashv U_M$ and the comparison functor
$K\colon \mathcal{D}\to \EM(M)$.
Then it holds that: 
\[
F_M = K F \qquad \textnormal{ and } \qquad U = U_M K.
\]  
\end{proposition}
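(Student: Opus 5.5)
The plan is to verify both equalities directly from the definitions, first on objects and then on morphisms. Throughout, the monad $M = UF$ has multiplication $\mu^M = U\epsilon_F$, as stated above.

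For the first equality, take an object $C$ of $\mathcal{C}$. Proposition~\ref{prop:key-property-comparison} gives $KF(C) = (UFC, U\epsilon_{FC})$. By definition $F_M(C) = (MC, \mu^M_C)$, and since $MC = UFC$ and $\mu^M_C = U\epsilon_{FC}$, the two pairs coincide. On a morphism $f$, we have $KF(f) = UF(f) = M(f) = F_M(f)$. So $F_M = KF$.

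For the second equality, take an object $D$ of $\mathcal{D}$. Then $U_M K(D) = U_M(UD, U\epsilon_D) = UD$. On a morphism $g$ of $\mathcal{D}$, $U_M K(g) = U_M(U g) = U g$. So $U = U_M K$.

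Both equalities are therefore strict equalities of functors, not merely isomorphisms. The only point that needs care is that $K$ is well defined, i.e.\ that $(UD, U\epsilon_D)$ is an $M$-algebra and that $U(f)$ is an algebra morphism. This is already part of Proposition~\ref{prop:key-property-comparison}, so I can assume it. If I wanted to check it anyway, the unit law is the triangle identity $U\epsilon_D \circ \eta_{UD} = \id_{UD}$. The associativity law, $U\epsilon_D \circ UFU\epsilon_D = U\epsilon_D \circ U\epsilon_{FUD}$, is $U$ applied to naturality of $\epsilon$ at $\epsilon_D$. The morphism condition for $U(f)$ is $U$ applied to naturality of $\epsilon$ at $f$.

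I expect no real obstacle, since every step is unfolding definitions.
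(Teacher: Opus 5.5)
Your proposal is correct and matches the paper's approach: the paper simply says the equalities are easily derived from the definitions (citing Mac Lane VI.3), and you carry out exactly that unfolding on objects and morphisms, using $\mu^M_C = U\epsilon_{FC}$ for the first equality. Your optional check that $K$ is well defined, via the triangle identity and naturality of $\epsilon$, is also correct.
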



This work relies on a strict notion of monadicity,
as in, e.g.,~\cite[VI.7]{maclane98book}.

\begin{definition}[Strictly Monadic Functor]\label{def:strict-monadic-functor}
We say that a functor $U\colon \mathcal{D}\to \mathcal{C}$ is \emph{strictly monadic}
if $U$ has a left adjoint $F\dashv U$ 
and, denoting with $M=(M,\eta^M, \mu^M)$ the induced monad,
the comparison functor $K\colon \mathcal{D}\to \EM(M)$ is an isomorphism of categories.
\end{definition}

The functor $U_M\colon \EM(M)\to \mathcal{C}$ is, trivially, strictly monadic.
For the following fact see, e.g.,~\cite[3.3.2]{BarrWells2005}.

\begin{proposition}\label{iso-reflection}
Every strictly monadic functor $U\colon\mathcal{D}\to \mathcal{C}$ reflects isomorphisms,
i.e., if $U(f)$ is an isomorphism in $\mathcal{C}$
then $f$ is an isomorphism in $\mathcal{D}$.
\end{proposition}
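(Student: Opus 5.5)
Since $U$ is strictly monadic, the comparison functor $K\colon \mathcal{D}\to\EM(M)$ is an isomorphism of categories, and by Proposition~\ref{prop:useful-fact-EM} we have $U = U_M K$. An isomorphism of categories reflects isomorphisms. So the plan is to reduce the statement to the case of $U_M$: if $U(f)$ is an isomorphism, then $U_M(K f)$ is an isomorphism. If we show that $U_M$ reflects isomorphisms, then $Kf$ is an isomorphism in $\EM(M)$, and hence $f = K^{-1}(Kf)$ is an isomorphism in $\mathcal{D}$.

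The remaining step is to show directly that $U_M$ reflects isomorphisms. Let $f\colon (C,\alpha_C)\to(C',\alpha_{C'})$ be a morphism of $M$-algebras whose underlying map $f\colon C\to C'$ has an inverse $g\colon C'\to C$ in $\mathcal{C}$. It suffices to check that $g$ is itself an $M$-algebra morphism, i.e.\ $g\circ\alpha_{C'} = \alpha_C\circ M(g)$. Functoriality of $U_M$ then gives $f\circ g=\id$ and $g\circ f=\id$ in $\EM(M)$, because composition and identities there are those of $\mathcal{C}$.

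For the algebra-morphism condition, start from $f\circ\alpha_C = \alpha_{C'}\circ M(f)$. Precompose with $M(g)$ and use functoriality of $M$, namely $M(f)\circ M(g)=M(f\circ g)=\id$. This gives $f\circ\alpha_C\circ M(g) = \alpha_{C'}$. Postcomposing with $g$ yields $\alpha_C\circ M(g) = g\circ\alpha_{C'}$, as required.

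No step is a real obstacle; the only point needing care is to note that the inverse must be shown to live in $\EM(M)$, which is the short calculation above.
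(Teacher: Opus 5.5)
Your argument is correct. The paper gives no proof of its own and simply cites \cite[3.3.2]{BarrWells2005}; your reduction via $U = U_M K$ to checking that the inverse of an $M$-algebra morphism is again an algebra morphism is the standard argument behind that citation, and it is the same check the paper alludes to before Proposition~\ref{prop:isomoprhism-preservation}. Note also that you only use that $K$ reflects isomorphisms, so your proof works unchanged when $K$ is merely an equivalence, i.e.\ for monadic functors in the non-strict sense.
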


\subsection{Monad Morphisms and Restrictions} \label{sec:monad-background}
The material presented in this section is folklore.
For instance, the notion of monad morphism between monads on a given category
$\mathcal{C}$ is developed in \cite[\S 3]{BarrWells2005}.
In this work, however,
we require monad morphisms between monads on different categories. 
This theory is naturally developed at the level of 2-categories
(see, e.g., \cite{STREET1972149}),
but we only need an ordinary 1-categorical treatment.

An \emph{embedding of categories}
is a functor which is full, faithful and injective on objects.

\begin{definition}[Monad Morphism]\label{def:monad_morphism}
Let $\mathcal{C}$ and $\mathcal{D}$ be categories, 
$(N,\eta^N,\mu^N)$ a monad on $\mathcal{C}$ and  
$(M,\eta^M,\mu^M)$ a monad on $\mathcal{D}$. 
A pair $(G,\gamma)$, with
$G\colon\mathcal{C}\to \mathcal{D}$
a functor and
$\gamma\colon MG \Rightarrow GN$
a natural transformation, is called a \emph{monad morphism}
$(G,\gamma) \colon (\mathcal{C},N)\to (\mathcal{D},M)$ if it satisfies:
\begin{enumerate}
\item \emph{(Unit compatibility)}
$
\gamma\circ (\eta^M G) = G(\eta^N) $, i.e., 
\[
\gamma_C\circ \eta^M_{GC} = G(\eta^N_C)\colon GC\to G(NC)
\tag{for all $C\in\mathcal{C}$}
\]
\item \emph{(Multiplication compatibility)}
$
\gamma \circ (\mu^M G) = (G\mu^N) \circ (\gamma N) \circ (M\gamma)
$, i.e., 
\[
  \gamma_C \circ \mu^M_{GC}
  =
  G(\mu^N_C) \circ \gamma_{NC} \circ (M\gamma_C)
  \colon MMGC \to GNC 
\tag{for all $C\in\mathcal{C}$}
\]
\end{enumerate}
\end{definition}

The next two propositions link the definition of monad morphism, as above, with the Eilenberg--Moore construction. 

\begin{proposition}\label{prop:from_morphism_to_functor}
Let $(G,\gamma)\colon (\mathcal{C},N)\to ( \mathcal{D},M)$ be a monad morphism.
Then there is a canonical functor $\Gamma_{\gamma}\colon \EM(N)\to \EM(M)$
defined:
\begin{itemize}
\item on objects $(C,\alpha_C\colon NC\to C)$ as:
\[
\Gamma_{\gamma}(C,\alpha_C) := (GC, G\alpha_C\circ \gamma_C \colon MGC \to GC)
\]
\item on morphisms of $N$-algebras $f\colon (C,\alpha_C)\to (C',\alpha_{C'})$ as:
\[
\Gamma_{\gamma}(f) := G(f)\colon GC\to GC'
\]
\end{itemize}
such that $\Gamma_{\gamma}$ lifts $G$, i.e., $U_M\circ  \Gamma_{\gamma}  = G\circ U_N $,
\[\begin{tikzcd}
	{\mathcal{C}} & {\EM(N)} \\
	{\mathcal{D}} & {\EM(M)}
	\arrow["G"', from=1-1, to=2-1]
	\arrow["{U_N}"', from=1-2, to=1-1]
	\arrow["\Gamma_\gamma", from=1-2, to=2-2]
	\arrow["{U_M}"', from=2-2, to=2-1]
\end{tikzcd}\]
\end{proposition}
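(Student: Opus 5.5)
We have to check three things: $\Gamma_\gamma$ is well defined on objects (the proposed structure map is an $M$-algebra), it is well defined on morphisms (it sends $N$-algebra morphisms to $M$-algebra morphisms), and it is functorial and lifts $G$. Write $\beta := G\alpha_C \circ \gamma_C \colon MGC \to GC$.

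\emph{The unit law for $\beta$.} Using unit compatibility of Definition~\ref{def:monad_morphism} and the unit law for $\alpha_C$,
\[
\beta\circ\eta^M_{GC} = G\alpha_C\circ\gamma_C\circ\eta^M_{GC} = G\alpha_C\circ G(\eta^N_C) = G(\alpha_C\circ\eta^N_C) = \id_{GC}.
\]

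\emph{The associativity law $\beta\circ M\beta = \beta\circ\mu^M_{GC}$.} This is the step needing the most care. Start from the right-hand side and apply multiplication compatibility:
\[
\beta\circ\mu^M_{GC} = G\alpha_C\circ G(\mu^N_C)\circ\gamma_{NC}\circ M\gamma_C .
\]
The associativity law for $\alpha_C$ turns $G\alpha_C\circ G\mu^N_C$ into $G\alpha_C\circ GN\alpha_C$. Naturality of $\gamma$ at the morphism $\alpha_C\colon NC\to C$ gives $GN\alpha_C\circ\gamma_{NC} = \gamma_C\circ MG\alpha_C$. Hence the right-hand side equals
\[
G\alpha_C\circ\gamma_C\circ MG\alpha_C\circ M\gamma_C = \beta\circ M(G\alpha_C\circ\gamma_C) = \beta\circ M\beta,
\]
by functoriality of $M$.

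\emph{Morphisms.} Let $f\colon(C,\alpha_C)\to(C',\alpha_{C'})$ be a morphism of $N$-algebras. We need $Gf\circ G\alpha_C\circ\gamma_C = G\alpha_{C'}\circ\gamma_{C'}\circ MGf$. Apply $G$ to the equation $f\circ\alpha_C=\alpha_{C'}\circ Nf$ to rewrite the left-hand side as $G\alpha_{C'}\circ GNf\circ\gamma_C$. Naturality of $\gamma$ gives $GNf\circ\gamma_C=\gamma_{C'}\circ MGf$, which yields the right-hand side.

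\emph{Functoriality and the lifting property.} These follow at once from functoriality of $G$, since $\Gamma_\gamma$ acts as $G$ on underlying objects and morphisms. For the same reason, $U_M\circ\Gamma_\gamma = G\circ U_N$ holds strictly, on objects and on morphisms, directly from the definitions of the forgetful functors.
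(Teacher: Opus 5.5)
Your proof is correct and complete. The unit law comes from unit compatibility, associativity from multiplication compatibility together with the $N$-algebra law and naturality of $\gamma$ at $\alpha_C$, and preservation of algebra morphisms from naturality of $\gamma$ at $f$. The paper gives no proof of its own: it treats the result as folklore, citing~\cite[Theorem~3.6.3]{BarrWells2005} for the case $\mathcal{C}=\mathcal{D}$. Your direct verification is the standard argument behind it.
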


\begin{proposition}\label{prop:from_functor_to_morphism}
Let $\Gamma\colon\EM(N)\to \EM(M)$ be a functor lifting
$G \colon \mathcal{C}\to\mathcal{D}$,
i.e., such that $U_M\circ  \Gamma  = G\circ U_N $.
Denote with $\xi_C\colon M(GNC)\to GNC$ the $M$-algebra structure of $\Gamma(F_N(C))$,
for $C\in\mathcal{C}$:
\[
\begin{array}{rclr}
\Gamma(F_N(C))& = & \Big( U_M(\Gamma(F_N(C))), \xi_C\Big)\qquad & \text{$M$-algebra with structure $\xi_C$}\\ 
& = & \Big( G(U_N (F_N(C))), \xi_C\Big)\qquad & \text{lifting: $U_M\circ  \Gamma  = G\circ U_N$}\\ 
& = & \big( G N (C), \xi_C\big)\qquad & \text{definition: $N=U_N F_N$}\\
\end{array}
\]
Then there is a canonical natural transformation
$\widehat{\gamma}^{\Gamma}\colon F_M G \Rightarrow \Gamma F_N$ defined,
for every $C\in\mathcal{C}$, by the equation:
\[
  U_M(\widehat{\gamma}^{\Gamma}_C) := \xi_C\circ M(G(\eta^N_C))
  \qquad \qquad
  \widehat\gamma^\Gamma_C \colon (MGC, \mu^M_{GC}) \to (GNC, \xi_C).
\]
Furthermore, the underling natural transformation
\[\gamma^{\Gamma}:=U_M(\widehat{\gamma^\Gamma})\colon MG\Rightarrow GN 
\]
gives a canonical monad morphism $(G,\gamma^\Gamma)$,
i.e., $\gamma^{\Gamma}$ satisfies the unit and multiplication compatibility equations of
Definition~\ref{def:monad_morphism}.
\end{proposition}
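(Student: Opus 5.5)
We have to check three things: that $\widehat{\gamma}^\Gamma_C$ is a morphism of $M$-algebras, that it is natural in $C$, and that its underlying $\gamma^\Gamma$ satisfies the two compatibility equations of Definition~\ref{def:monad_morphism}. Throughout we use only the $M$-algebra laws for $\xi_C$, the unit and multiplication laws for $N$, and functoriality of $\Gamma$ together with the lifting equation $U_M\circ\Gamma = G\circ U_N$.

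\emph{Algebra morphism.} We must show $\xi_C\circ MG\eta^N_C\circ\mu^M_{GC} = \xi_C\circ M(\xi_C\circ MG\eta^N_C)$. Naturality of $\mu^M$ rewrites the left side as $\xi_C\circ\mu^M_{GNC}\circ MMG\eta^N_C$. The algebra law $\xi_C\circ\mu^M = \xi_C\circ M\xi_C$ then turns this into the right side. Equivalently, $\widehat\gamma^\Gamma_C$ is the transpose under $F_M\dashv U_M$ of $G\eta^N_C\colon GC\to GNC = U_M\Gamma F_N C$, which gives a cleaner way to organise everything that follows.

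\emph{Naturality.} For $f\colon C\to C'$, apply $\Gamma$ to the $N$-algebra map $F_N f = Nf$. Since $U_M\Gamma F_N f = GNf$, the map $GNf$ is an $M$-algebra map $\xi_C\to\xi_{C'}$, i.e.\ $GNf\circ\xi_C = \xi_{C'}\circ MGNf$. Combining this with naturality of $\eta^N$ gives $GNf\circ\gamma^\Gamma_C = \gamma^\Gamma_{C'}\circ MGf$.

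\emph{Unit compatibility.} Naturality of $\eta^M$ gives $\xi_C\circ MG\eta^N_C\circ\eta^M_{GC} = \xi_C\circ\eta^M_{GNC}\circ G\eta^N_C$. The unit law $\xi_C\circ\eta^M = \id$ reduces this to $G\eta^N_C$.

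\emph{Multiplication compatibility.} This is the main obstacle, and we handle it with two key facts, both coming from the functor $\Gamma$.

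First, $\mu^N_C\colon F_N NC\to F_N C$ is an $N$-algebra map. Applying $\Gamma$ shows that $G\mu^N_C$ is an $M$-algebra map $\xi_{NC}\to\xi_C$.

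Second, $\xi_C$ is itself recovered from $\gamma^\Gamma$ by
\[
\xi_C = G\mu^N_C\circ \gamma^\Gamma_{NC}.
\]
To see this, expand $G\mu^N_C\circ\xi_{NC}\circ MG\eta^N_{NC}$. By the first fact it equals $\xi_C\circ MG\mu^N_C\circ MG\eta^N_{NC}$, and the unit law $\mu^N\circ\eta^N N=\id$ reduces this to $\xi_C$.

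With these in hand, the right side of the multiplication equation is
\[
G\mu^N_C\circ\gamma^\Gamma_{NC}\circ M\gamma^\Gamma_C = \xi_C\circ M\xi_C\circ MMG\eta^N_C .
\]
Using the algebra law $\xi_C\circ M\xi_C = \xi_C\circ\mu^M_{GNC}$ and then naturality of $\mu^M$, this becomes $\xi_C\circ MG\eta^N_C\circ\mu^M_{GC} = \gamma^\Gamma_C\circ\mu^M_{GC}$, which is the left side.
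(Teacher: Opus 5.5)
Your proof is correct. All four verifications go through as written: the algebra-morphism property, naturality, unit compatibility, and multiplication compatibility via the identity $\xi_C = G\mu^N_C\circ\gamma^\Gamma_{NC}$.

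The paper gives no proof of this proposition. It treats the material of \S\ref{sec:monad-background} as folklore and cites \cite[Theorem~3.6.3]{BarrWells2005} for the case $\mathcal{C}=\mathcal{D}$. Your argument is the standard direct verification, and it carries over unchanged to a functor $G$ between different base categories.

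Three small remarks:
\begin{itemize}
\item Naturality of $\widehat{\gamma}^\Gamma$ as a transformation in $\EM(M)$ follows from the naturality of $\gamma^\Gamma$ you check, because $U_M$ is faithful. It is worth saying so explicitly.
\item Your last computation in the multiplication step is the identity $\xi_C\circ M\gamma^\Gamma_C=\gamma^\Gamma_C\circ\mu^M_{GC}$ from your first step, so you can simply cite it.
\item Your identity $\xi_C = G\mu^N_C\circ\gamma^\Gamma_{NC}$ is the free-algebra case of Proposition~\ref{prop:construction-inversion}(2). The same argument, applied to the $N$-algebra map $\alpha\colon (NC,\mu^N_C)\to(C,\alpha)$ in place of $\mu^N_C$, proves that proposition in general.
\end{itemize}
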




\begin{proposition}\label{prop:construction-inversion}
The two constructions $\gamma\mapsto \Gamma_\gamma$ (Proposition~\ref{prop:from_morphism_to_functor}) and $\Gamma \mapsto \gamma^{\Gamma}$ (Proposition~\ref{prop:from_functor_to_morphism}) are inverse to each other, i.e.:
\begin{enumerate}
	\item $\gamma = \gamma^{\Gamma_{\gamma}}$, for every monad morphism $(G,\gamma)$, 
	\item $\Gamma = \Gamma_{\gamma^\Gamma}$, for every functor $\Gamma\colon \EM(N)\to \EM(M)$ lifting $G$ (i.e., $U_M \Gamma = G U_N$).
\end{enumerate}
\end{proposition}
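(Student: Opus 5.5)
Both identities come from unfolding the definitions and using the free--forgetful adjunction $F_N\dashv U_N$: on underlying objects, the unit $\eta^N$ together with the algebra laws pins down an $N$-algebra map out of a free algebra. Throughout, write $\eta^N,\mu^N$ for the monad structure of $N$ and $\eta^M,\mu^M$ for that of $M$, and recall that $F_N(C)=(NC,\mu^N_C)$ and $F_M(D)=(MD,\mu^M_D)$.

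\emph{Item 1.} Start from a monad morphism $(G,\gamma)$. By Proposition~\ref{prop:from_morphism_to_functor}, $\Gamma_\gamma(F_N C)$ has structure map $\xi_C = G(\mu^N_C)\circ\gamma_{NC}$. Then Proposition~\ref{prop:from_functor_to_morphism} gives
\[
\gamma^{\Gamma_\gamma}_C = G(\mu^N_C)\circ\gamma_{NC}\circ M(G(\eta^N_C)).
\]
By naturality of $\gamma$ applied to $\eta^N_C\colon C\to NC$, we have $\gamma_{NC}\circ MG(\eta^N_C) = GN(\eta^N_C)\circ\gamma_C$. Hence the expression becomes $G(\mu^N_C\circ N\eta^N_C)\circ\gamma_C$. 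The right unit law of $N$ gives $\mu^N_C\circ N\eta^N_C=\id$, so this equals $\gamma_C$. Item 1 therefore needs only naturality and a monad law; the compatibility axioms are not used.

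\emph{Item 2.} Let $\Gamma$ lift $G$. The two functors agree on morphisms, since both send $f$ to $G(f)$ by the lifting condition. Each object of $\EM(M)$ is determined by its carrier and structure map, and both functors send $(C,\alpha)$ to an algebra with carrier $GC$. So it suffices to show, for each $N$-algebra $(C,\alpha)$ with $\Gamma(C,\alpha)=(GC,\beta)$, that
\[
\beta = G\alpha\circ\xi_C\circ MG(\eta^N_C).
\]

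The key step is to observe that $\alpha\colon F_N C=(NC,\mu^N_C)\to(C,\alpha)$ is a morphism of $N$-algebras; this is exactly the associativity axiom for $\alpha$. Applying $\Gamma$ and using the lifting condition, $G\alpha\colon(GNC,\xi_C)\to(GC,\beta)$ is a morphism of $M$-algebras, that is, $G\alpha\circ\xi_C=\beta\circ MG\alpha$. Substituting this into the right-hand side gives
\[
\beta\circ MG(\alpha\circ\eta^N_C)=\beta\circ MG(\id)=\beta,
\]
by the unit axiom for $\alpha$. This proves Item 2.

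The main obstacle is conceptual rather than computational. In Item 2 one must recognise that the algebra map $\alpha$ itself, viewed as a homomorphism out of the free algebra, transports the information in $\xi_C$ to the structure map $\beta$. Functoriality of $\Gamma$ is what lets this work without knowing $\Gamma$ explicitly. The remaining steps are routine bookkeeping with naturality and the monad and algebra laws.
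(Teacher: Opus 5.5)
Your proof is correct. Item 1 follows from naturality of $\gamma$ at $\eta^N_C$ together with the unit law $\mu^N_C\circ N\eta^N_C=\id$. Item 2 follows by applying $\Gamma$ to the $N$-algebra map $\alpha\colon F_N C\to(C,\alpha)$ and then using $\alpha\circ\eta^N_C=\id$.

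The paper gives no proof of this proposition; it treats it as folklore and cites Barr--Wells (Theorem~3.6.3) for the case $\mathcal{C}=\mathcal{D}$. Your argument is the standard one, and it works unchanged for a functor $G$ between different categories, which is the generality the paper needs.
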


\begin{remark}
The three propositions above, for the particular case when
$\mathcal{C}= \mathcal{D}$
(i.e., when $M$ and $N$ are monads on the same category)
appear as~\cite[Theorem~3.6.3]{BarrWells2005}. 
In fact, the latter further shows that the one-to-one correspondence
$\gamma \leftrightarrow \Gamma_\gamma$ also preserves composition.
We will not need this additional fact. 
\end{remark}

Consider now a monad morphism $(G,\gamma)$ such that $\gamma$ is a natural isomorphism.  
By applying Proposition~\ref{prop:from_morphism_to_functor},
we obtain a canonical functor $\Gamma_{\gamma}\colon \EM(N)\to \EM(N)$ lifting $G$, 
and by Proposition~\ref{prop:from_functor_to_morphism} a canonical natural transformation 
$\widehat{\gamma}\colon F_M M\Rightarrow \Gamma_{\gamma} F_N$
such that, by Proposition~\ref{prop:construction-inversion},
$U_M(\widehat{\gamma})$ is $\gamma$.  
It is not obvious, \emph{a priori},
that $\widehat{\gamma}$ is itself a natural isomorphism. 
This is because the existence on an inverse of $\gamma$ in $\mathcal{D}$
does not necessarily imply the existence of an inverse of $\widehat{\gamma}$ in $\EM(M)$. 
This requires showing that $\gamma^{-1}$ lives in $\EM(M)$ as well, i.e.,
that it satisfies the constraints of $M$-algebra morphisms. 
The following direct consequence of Proposition~\ref{iso-reflection}
states that this is always the case.

\begin{proposition}\label{prop:isomoprhism-preservation}
Let $(G,\gamma)$ be a monad morphism with $\gamma$ a natural isomorphism.
Then the canonical
$\widehat{\gamma}\colon F_M M\Rightarrow \Gamma_{\gamma} F_N$
is a natural isomorphism. 
\end{proposition}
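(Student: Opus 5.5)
The plan is to apply Proposition~\ref{iso-reflection} to the strictly monadic forgetful functor $U_M\colon \EM(M)\to\mathcal{D}$, one component at a time. (In the statement, $F_M M$ should be read as $F_M G$, and the codomain $\EM(N)$ of $\Gamma_\gamma$ as $\EM(M)$; these are the types given by Propositions~\ref{prop:from_morphism_to_functor} and~\ref{prop:from_functor_to_morphism}.)

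The first step is to set up the natural transformation. Proposition~\ref{prop:from_functor_to_morphism}, applied to $\Gamma=\Gamma_\gamma$, provides $\widehat{\gamma}\colon F_M G\Rightarrow \Gamma_\gamma F_N$. For each $C\in\mathcal{C}$ its component $\widehat{\gamma}_C\colon (MGC,\mu^M_{GC})\to (GNC,\xi_C)$ is a morphism of $M$-algebras, and by definition $U_M(\widehat{\gamma}_C)=\gamma^{\Gamma_\gamma}_C$. Proposition~\ref{prop:construction-inversion}(1) gives $\gamma^{\Gamma_\gamma}=\gamma$, so $U_M(\widehat{\gamma}_C)=\gamma_C$.

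The second step uses the hypothesis. Since $\gamma$ is a natural isomorphism, each $\gamma_C$ is an isomorphism in $\mathcal{D}$. As observed after Definition~\ref{def:strict-monadic-functor}, $U_M$ is strictly monadic, because its comparison functor is the identity. Proposition~\ref{iso-reflection} therefore shows that each $\widehat{\gamma}_C$ is an isomorphism in $\EM(M)$.

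The last step is the standard fact that a natural transformation whose components are all isomorphisms is a natural isomorphism: the componentwise inverses are automatically natural. This finishes the argument, so the only real content is the reflection of isomorphisms.

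If one prefers not to rely on Proposition~\ref{iso-reflection}, the direct check is that $\gamma_C^{-1}$ is an $M$-algebra morphism, i.e.\ $\gamma_C^{-1}\circ\xi_C=\mu^M_{GC}\circ M(\gamma_C^{-1})$. This follows by precomposing the known equation $\gamma_C\circ\mu^M_{GC}=\xi_C\circ M(\gamma_C)$ with $M(\gamma_C^{-1})$ and postcomposing with $\gamma_C^{-1}$, using functoriality of $M$.
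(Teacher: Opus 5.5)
Your proposal is correct and matches the paper, which presents this proposition as a direct consequence of Proposition~\ref{iso-reflection} applied to the strictly monadic $U_M$, with $U_M(\widehat{\gamma})=\gamma$ coming from Proposition~\ref{prop:construction-inversion}, exactly as you argue. Your reading of the typos ($F_M G$ for $F_M M$, and $\EM(M)$ as the codomain of $\Gamma_\gamma$) is right. Your direct check that $\gamma_C^{-1}$ is an $M$-algebra morphism is also a valid self-contained alternative.
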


\begin{definition}\label{def:monad-restriction}[Monad Restriction]
A \emph{restriction of monads} is a monad morphism
$(G,\gamma)\colon (\mathcal{C}, N) \to (\mathcal{D}, M)$
such that $G$ is an embedding of categories and $\gamma$ a natural isomorphism.  
\end{definition}

The next fact states that a restriction $(G,\gamma)$,
if it exists, is unique up to isomorphism.
Hence, when a restriction exists,
we can talk about \emph{the} restriction of $N$ along the embedding $G$.
And we will say that $M$ is a monad restriction of $N$ along $G$ if \emph{the}
restriction $(G,\gamma)$ exists. 

\begin{proposition}\label{prop:restriction_is_unique}
Let $G\colon \mathcal{C}\to\mathcal{D}$ be an embedding of categories.
Any two restrictions $(G,\gamma)$ and $(G,\delta)$,
are equal up to isomorphism, i.e., there is a natural isomorphism
$\iota \colon GN\Rightarrow GN$ such that $\iota\circ \gamma=\delta$, defined as $\iota_C =\delta_C \circ \gamma_C^{-1}$. 
\end{proposition}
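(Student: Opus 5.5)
The plan is to verify directly that the proposed $\iota$ is a well-defined natural isomorphism satisfying the required equation. The other structure (unit and multiplication compatibility) is not needed for this statement.

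Fix two restrictions $(G,\gamma)$ and $(G,\delta)$ from $(\mathcal{C},N)$ to $(\mathcal{D},M)$. By Definition~\ref{def:monad-restriction}, both $\gamma,\delta\colon MG\Rightarrow GN$ are natural isomorphisms, so every component $\gamma_C$ has an inverse $\gamma_C^{-1}\colon GNC\to MGC$. The family $\gamma^{-1}=(\gamma_C^{-1})_C$ is again natural, by the standard argument: for $f\colon C\to C'$, precompose and postcompose the naturality square $GN(f)\circ\gamma_C=\gamma_{C'}\circ MG(f)$ with the inverses.

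Next, define $\iota_C:=\delta_C\circ\gamma_C^{-1}\colon GNC\to GNC$. The following steps then complete the argument.

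\begin{enumerate}
\item Naturality of $\iota$ holds because it is a vertical composite of the natural transformations $\gamma^{-1}$ and $\delta$.
\item Each $\iota_C$ is invertible, with inverse $\gamma_C\circ\delta_C^{-1}$, so $\iota$ is a natural isomorphism $GN\Rightarrow GN$.
\item The equation $\iota\circ\gamma=\delta$ holds componentwise: $\iota_C\circ\gamma_C=\delta_C\circ\gamma_C^{-1}\circ\gamma_C=\delta_C$.
\end{enumerate}

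There is no real obstacle; the only point needing care is the naturality of $\gamma^{-1}$, which is routine. The embedding hypothesis on $G$ is not used for the stated equation. If one also wants $\iota$ to be of the form $G(\iota')$ for some $\iota'\colon N\Rightarrow N$, then fullness and faithfulness of $G$ would supply $\iota'$ together with its naturality and invertibility; this is a remark, not part of the claim.
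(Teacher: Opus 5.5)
Your proposal is correct and is exactly the routine verification the paper intends: the paper gives no separate proof, since the section is presented as folklore and the witness $\iota_C=\delta_C\circ\gamma_C^{-1}$ is already built into the statement, and naturality, invertibility and $\iota\circ\gamma=\delta$ follow just as you check. Your observation that neither the embedding hypothesis nor the monad-morphism axioms are needed for this particular claim is also accurate.
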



We will use this corollary of the preceding propositions.

\begin{corollary}\label{cor:restriction_lemma_to_be_used}
Let $G\colon\mathcal{C}\to\mathcal{D}$ be a an embedding of categories, $N$ a monad on $\mathcal{C}$ and $M$ a monad on $\mathcal{D}$. The following are equivalent:
\begin{enumerate}
\item $N$ is a restriction of $M$ along $G$, i.e., there is a monad restriction $(G,\gamma)$ (see Definition~\ref{def:monad-restriction}),
\item there is a functor $\Gamma\colon EM(N)\to EM(M)$ that lifts $G$ and such that the associated  
canonical natural transformation $\widehat{\gamma}^{\Gamma}\colon F_M G \Rightarrow \Gamma F_N$ (as in Proposition~\ref{prop:from_functor_to_morphism}) is a natural isomorphism.
\end{enumerate}
\end{corollary}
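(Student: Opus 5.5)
The equivalence follows by combining Propositions~\ref{prop:from_morphism_to_functor}, \ref{prop:from_functor_to_morphism}, \ref{prop:construction-inversion} and~\ref{prop:isomoprhism-preservation}. Note that $G$ is already assumed to be an embedding. So a monad restriction along $G$ is exactly a monad morphism $(G,\gamma)$ with $\gamma$ a natural isomorphism. I read the directions as fixed by the earlier propositions: $\gamma\colon MG\Rightarrow GN$ and $\Gamma\colon \EM(N)\to\EM(M)$.

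For (1)$\Rightarrow$(2), start from a restriction $(G,\gamma)$. Take $\Gamma:=\Gamma_\gamma$ from Proposition~\ref{prop:from_morphism_to_functor}; this functor lifts $G$. Its associated canonical transformation is $\widehat{\gamma}^{\Gamma_\gamma}$. By Proposition~\ref{prop:construction-inversion}(1), its underlying transformation $U_M(\widehat{\gamma}^{\Gamma_\gamma})=\gamma^{\Gamma_\gamma}$ equals $\gamma$. Hence it is a natural isomorphism in $\mathcal{D}$. Proposition~\ref{prop:isomoprhism-preservation} then upgrades this to a natural isomorphism in $\EM(M)$. Concretely, $U_M$ is strictly monadic, so by Proposition~\ref{iso-reflection} each component $\widehat{\gamma}^{\Gamma}_C$ is invertible in $\EM(M)$. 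Inverses of the components of a natural transformation are automatically natural.

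For (2)$\Rightarrow$(1), start from $\Gamma$ lifting $G$ with $\widehat{\gamma}^{\Gamma}$ a natural isomorphism. Proposition~\ref{prop:from_functor_to_morphism} gives the monad morphism $(G,\gamma^\Gamma)$ with $\gamma^\Gamma=U_M(\widehat{\gamma}^\Gamma)$. Any functor preserves isomorphisms, so each $\gamma^\Gamma_C$ is invertible. Thus $\gamma^\Gamma$ is a natural isomorphism, and since $G$ is an embedding, $(G,\gamma^\Gamma)$ is a restriction.

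The only delicate point is in (1)$\Rightarrow$(2): we must know that the inverse of $\gamma$ is an $M$-algebra morphism. This is exactly what Proposition~\ref{prop:isomoprhism-preservation} supplies, via reflection of isomorphisms by the strictly monadic $U_M$. Everything else is bookkeeping with the inverse correspondence of Proposition~\ref{prop:construction-inversion}.
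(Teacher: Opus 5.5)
Your proof is correct. It follows the route the paper intends: the paper gives no separate argument and presents the corollary as a direct consequence of Propositions~\ref{prop:from_morphism_to_functor}, \ref{prop:from_functor_to_morphism}, \ref{prop:construction-inversion} and~\ref{prop:isomoprhism-preservation}, which is how you assemble it, using $\Gamma_\gamma$ with $\gamma^{\Gamma_\gamma}=\gamma$ for (1)$\Rightarrow$(2) and the fact that $U_M$ preserves isomorphisms for (2)$\Rightarrow$(1). You also correctly identify the only nontrivial step: the inverse of $\gamma$ must lift to $\EM(M)$, which follows because the strictly monadic $U_M$ reflects isomorphisms.
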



\subsection{Categories of Models of Horn Theories}\label{sec:horn-background}
In the previous subsections we have considered monads on abstract categories.
We now specialise the discussion to categories of models of Horn theories.
We refer to \cite{hodges93book} and \cite{Adamek_Rosicky_1994} as standard references.

A signature $\Sig$ is a possibly infinite collection of relation symbols
$R\in \Sig$ and function symbols $\sigma \in \Sig$.
Throughout this work we assume that all symbols $R,\sigma\in \Sig$ have finite arity. 
We reserve the letters $\Pi$ and $\Sigma$ for signatures consisting uniquely of
relation symbols and function symbols, respectively.

An $\Sig$-structure is a pair:
\[
  A = (|A|, \mathcal{I}^A)
  \qquad \qquad \textnormal{ where }
  \mathcal{I}^A = \{R^A\}_{R\in \Sig}\cup \{\sigma^A\}_{\sigma \in \Sig}
\]
i.e., a carrier set $|A|$ together with interpretations $R^A\subseteq {|A|}^{ar(R)}$
and $\sigma^A\colon {|A|}^{ar(\sigma)}\to {|A|}$ of all relations and function symbols
in $\Sig$.  Given $\Sig$-structures $A$ and $B$,
an $\Sig$-homomorphism is a map $h\colon |A|\to |B|$ preserving
all symbols in $\Sig$, i.e.,
\(
(a_1,\dots, a_n)\in R^A \Rightarrow \big(h(a_1),\dots, h(a_n)\big)\in R^B
\)
and
\(
h\big( \sigma^A(a_1,\dots, a_n) \big) = \sigma^B\big( h(a_1),\dots, h(a_n)\big)
\).
We denote with $\Str(\Sig)$ the category of $\Sig$-structures and their homomorphisms.

We are interested in full subcategories of $\Str(\Sig)$ defined by
infinitary Horn sentences in the signature~$\Sig$.
Such Horn sentences are of the form
\[
  \forall \vec{x}. \ \displaystyle \Big( \bigwedge_{i\in I} \atom_i(\vec{x}) \Big)
  \Rightarrow
  \atombis(\vec{x})
\]

\noindent
allowing for infinitely many variables and formulas in the antecedent of the implication,
and where the formulas $\atom_i(\vec{x}),\atombis(\vec{x})$ are atomic formulas
in the signature $\Sig$ with variables from $\vec{x}$
(an atomic formula in $\Sig$
is either a relation $R(t_1,\dots,t_{ar(R)})$ with $R \in \Sig$
or an equality $t_1 = t_2$, where the $t_i$'s are terms in $\Sig$).
If a Horn sentence $H$ involves only finitely many variables and finitely many
formulas in the antecedent, i.e., if it belongs to ordinary first order logic,
then we refer to it as \emph{finitary}.

Satisfiability between
$\Sig$-structures $A$ and Horn sentences $H$
is defined as usual and denoted with $A \models H$.
If $\Horn$ is a set of Horn sentences,
we denote with $\Mod_{\Sig}(\Horn)$ the full subcategory of $\Str(\Sig)$
whose objects satisfy all formulas in $\Horn$.
Note that $\Str(\Sig) = \Mod_{\Sig}(\emptyset)$.

\begin{theorem}
\label{thm:horn:adj}
Let $\Sig, \Sigbis$ be signatures with $\Sigbis\supseteq \Sig$,
i.e.,  $\Sigbis$ is an extension of $\Sig$.
Let $\Horn$ be a set of Horn sentences in $\Sig$
and $\Hornbis$ be a set of Horn sentences in $\Sigbis$
with $\Hornbis \supseteq \Horn$,
i.e., $\Hornbis$ is an extension of $\Horn$.
Then:
\begin{enumerate}
\item
\label{item:horn:adj:forget}
There is a forgetful functor
$U\colon \Mod_{\Sigbis}(\Hornbis) \to \Mod_{\Sig}(\Horn)$
that forgets the additional symbols in $\Sigbis$
and the additional Horn sentences in $\Hornbis$.
Furthermore, if $\Sig=\Sigbis$ then the functor $U$ is an embedding of
$\Mod_{\Sigbis}(\Hornbis)$ as a full subcategory of $\Mod_{\Sig}(\Horn)$. 

\item
\label{item:horn:adj:adj}
$U$ has a left adjoint $F\dashv U$, so there is a monad on
$\Mod_{\Sig}(\Horn)$ induced by the adjunction $F\dashv U$.
\end{enumerate}
\end{theorem}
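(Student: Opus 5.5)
For item~\ref{item:horn:adj:forget}, I would define $U$ on objects by taking an $\Sigbis$-structure $A$ to its reduct $A|_{\Sig} = (|A|, \{R^A\}_{R\in\Sig}\cup\{\sigma^A\}_{\sigma\in\Sig})$, and on morphisms by $U(h)=h$. Two checks are needed, and both are immediate.
\begin{itemize}
\item Satisfaction of a Horn sentence in $\Sig$ only involves the interpretations of symbols of $\Sig$. Hence $A\models\Hornbis\supseteq\Horn$ gives $A|_{\Sig}\models\Horn$.
\item An $\Sigbis$-homomorphism preserves in particular all symbols of $\Sig$, so it is an $\Sig$-homomorphism.
\end{itemize}
Functoriality is trivial. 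When $\Sig=\Sigbis$, the reduct is the identity, so $U$ is the inclusion of the full subcategory $\Mod_{\Sig}(\Hornbis)\sle\Mod_{\Sig}(\Horn)$ (both are full in $\Str(\Sig)$). This inclusion is full, faithful and injective on objects, i.e.\ an embedding.

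For item~\ref{item:horn:adj:adj}, I would construct the free model $F(X)$ over $X\in\Mod_{\Sig}(\Horn)$ syntactically. The monad statement then follows from \S\ref{sec:monad-background-comparison}.
\begin{itemize}
\item Form the set $\mathrm{Term}(X)$ of $\Sigbis$-terms over the carrier $|X|$, i.e.\ the terms generated from the elements of $|X|$ by the function symbols of $\Sigbis$.
\item Let $\mathcal{D}(X)$ be the set of atomic facts, in the form $R(\vec x)$ with $R^X(\vec x)$, and $\sigma(\vec x)=\sigma^X(\vec x)$ for $\sigma\in\Sig$. This is the positive diagram of $X$.
\item Define the least set $\Theta$ of atomic $\Sigbis$-formulas over $\mathrm{Term}(X)$ that has the following properties:
  \begin{itemize}
  \item it contains $\mathcal{D}(X)$;
  \item it is closed under the equality rules (reflexivity, symmetry, transitivity, congruence for function symbols, and substitutivity into relations);
  \item it is closed under every instance of every sentence of $\Hornbis$: if all antecedents $\atom_i[\vec t/\vec x]$ lie in $\Theta$, then so does $\atombis[\vec t/\vec x]$.
  \end{itemize}
  This least set exists as the intersection of all closed sets, or equivalently as the least fixed point of a monotone operator, which may need to be iterated transfinitely since the antecedents may be infinitary.
\item Set $|F(X)| = \mathrm{Term}(X)/{\approx}$, where $t\approx s$ iff $(t=s)\in\Theta$. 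Function symbols act on classes, and relations hold iff the corresponding atom is in $\Theta$. Congruence closure makes this well defined.
\end{itemize}

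The remaining verifications are as follows.
\begin{itemize}
\item \emph{$F(X)$ is a model.} Closure of $\Theta$ under instances gives $F(X)\models\Hornbis$. An assignment into $F(X)$ can be lifted to one into terms by choosing representatives, using the axiom of choice if there are infinitely many variables.
\item \emph{Unit.} The map $\eta_X\colon X\to UF(X)$, $x\mapsto[x]$, is an $\Sig$-homomorphism because $\mathcal{D}(X)\sle\Theta$.
\item \emph{Universal property.} Let $B\in\Mod_{\Sigbis}(\Hornbis)$ and let $f\colon X\to UB$ be a homomorphism. Extend $f$ to the evaluation $\bar f\colon\mathrm{Term}(X)\to|B|$. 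The set of atoms true in $B$ under $\bar f$ contains $\mathcal{D}(X)$, because $f$ is a homomorphism. It is closed under the equality rules and under $\Hornbis$, because $B$ is a model. By minimality it therefore contains $\Theta$, so $\bar f$ descends to a $\Sigbis$-homomorphism $F(X)\to B$ extending $f$.
\item \emph{Uniqueness.} This holds because $F(X)$ is generated by the image of $\eta_X$ under the operations of $\Sigbis$.
\end{itemize}
Pointwise universal arrows assemble into a left adjoint $F\dashv U$, and the induced monad is $UF$.

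I expect the main obstacle to be the careful handling of infinitary Horn sentences. The closure must be taken as a genuine least fixed point rather than a finite iteration, and the model check needs choice of representatives for infinitely many variables. A possibly cleaner alternative I would keep in mind is to avoid syntax entirely. One would obtain $F(X)$ as a subobject of a product of models: take a representative set of $\Sigbis$-models generated by $f(X)$, whose cardinality is bounded by the number of terms, form their product, and take the substructure generated by the image of $X$. This works because Horn classes are closed under products and substructures. That route replaces the fixed-point argument with a size bound.
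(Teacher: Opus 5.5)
Your proposal is correct. There is no proof in the paper to compare it with: the theorem is stated as standard background, with pointers to Hodges and Ad\'amek--Rosick\'y. The authors also explicitly decline to describe $F$ concretely, and afterwards use only its universal property and the presentation facts derived from it (Proposition~\ref{prop:main-fact-presentations}, Lemma~\ref{lem:pres:unique-iso}).

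Your argument therefore supplies a self-contained route, and it is worth comparing with the abstract one the cited literature suggests. That route would note that both categories of models are locally presentable, and that $U$ preserves limits and $\lambda$-filtered colimits once $\lambda$ is a regular cardinal exceeding the number of variables and premises of every sentence in $\Hornbis$. An adjoint functor theorem then makes $U$ a right adjoint, quickly but with no description of $F$.

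Your term model $\mathrm{Term}(X)/{\approx}$ costs some routine verifications, but the explicit description makes the paper's later presentation facts almost immediate:
\begin{itemize}
\item $F(X)$ is generated by the classes $[x]$ by construction.
\item Its true atoms are exactly the members of $\Theta$.
\item These are exactly the atoms $\atombis$ with $\Hornbis\models\forall\vec x.\,(\bigwedge\Phi_X)\Rightarrow\atombis$. One inclusion is your minimality argument, applied to an arbitrary model and parameters satisfying $\Phi_X$. The other uses that $F(X)$ is itself a model in which the parameters $[x]$ satisfy $\Phi_X$.
\end{itemize}

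Two small points for a write-up:
\begin{itemize}
\item State the substitution lemma $s^{F(X)}[x\mapsto[t_x]]=[s[\vec t/\vec x]]$ explicitly, since both your model check and your universal-property check rely on it.
\item Your product-of-models alternative is also sound, provided the index set ranges over pairs $(B,f)$, taken up to isomorphism, with $B$ generated by $f(X)$.
\end{itemize}
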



Rather than providing a concrete definition of the left adjoint $F$,
we will work with its abstract properties described in the next
Subsection~\ref{sec:presentations}.

\begin{remark}\label{example:non-strictly-monadic}
The forgetful functor $U$ of Theorem \ref{thm:horn:adj} 
is, in general, not strictly monadic
(Definition~\ref{def:strict-monadic-functor}).
For example, taking $\Sig=\Horn=\emptyset$, 
the category $\Mod_{\Sig}(\Horn)$ is the category $\Set$.
And by taking $\Sigbis=\{ \leq \}$ and $\Hornbis$ to be the Horn
implications defining partial orders, 
the category $\Mod_{\Sigbis}(\Hornbis)$ is the category $\mathbf{Pos}$
of partial orders and monotone maps. In this case $U$ is not strictly monadic.
\end{remark}



\subsubsection{Presentations of Structures}\label{sec:presentations}

We now review well-known material on presentations of structures
by generators and relations.
A standard reference is \cite[\S 9.2]{hodges93book}. 
The following notions are 
from~\cite[1.2.2, 1.4.1 and~\S 9.2]{hodges93book}
(see also \cite[\S 3.10]{Adamek_Rosicky_1994}).

\begin{definition}
An injective $\Sig$-homomorphism $h\colon A\to B$ is an \emph{embedding} if it
\emph{reflects} all relation symbols $R \in \Sig$, i.e.,
\(
\big(h(a_1),\dots, h(a_n)\big)\in R^B
\Rightarrow
(a_1,\dots, a_n)\in R^A
\).
We say that $A$ is a \emph{substructure} of $B$ if ${|A|} \sle {|B|}$,
and if moreover the inclusion function $|A| \into |B|$
is an embedding $A \to B$.
\end{definition}

\begin{definition}
Given a tuple of elements $\vec a$ from an $\Sig$-structure $A$,
we say that $\vec a$ \emph{generates} $A$ if 
the smallest substructure of $A$ that contains $\vec a$ is $A$ itself.
\end{definition}

\begin{definition}[Presentations]
An $\Sig$--\emph{presentation} is a pair $(\vec x, \vec\atom)$
where $\vec x = (x_j)_{j \in J}$ is a possibly infinite sequence of variables,
and where
$\vec\atom = (\atom_i)_{i \in I}$
is a possibly infinite sequence of atomic formulas in $\Sig$
with variables from $\vec x$.
\end{definition}


\begin{definition}
\label{def:horn:pres}
Let $\cat K \subseteq \Str(\Sig)$ be a class of $\Sig$-structures.
Fix a presentation
$((x_j)_{j \in J}, (\atom_i)_{i \in I})$ 
and a structure $A \in \cat K$.
Let $\vec a = (a_j)_{j \in J}$
be a sequence of elements in $A$, one for each $x_j$, referred to as \emph{parameters}.
We say that $(\vec x, \vec\atom)$
\emph{presents} the expanded structure $(A,\vec a)$ in $\cat K$
when the three following conditions hold:
\begin{enumerate}
\item
\label{item:horn:pres:gen}
The $\Sig$-structure $A$ is generated by $\vec a$.

\item
\label{item:horn:pres:sat}
$(A,\vec a)\models \atom_i$ for each $i \in I$,
where this notation means that each variable $x_j$ in $\alpha_i$
is interpreted by the parameter $a_j$.

\item
\label{item:horn:pres:univ}
For all $B \in \cat K$ and elements $\vec b=( b_j)_{j\in J}$ of $B$,
if $(B,\vec b)\models \atom_i$ holds for all  $i \in I$,
then there is a homomorphism of expanded $\Sig$-structures
$h\colon (A, \vec a) \to (B,\vec b)$,
that is,
an $\Sig$-homomorphism $h \colon A\to B$ such that $h(a_j)=b_j$ for all $j\in J$.
\end{enumerate}
\end{definition}

\begin{remark}
\label{remark:iso-from-presentation}
In item~(\ref{item:horn:pres:univ}) of Definition~\ref{def:horn:pres},
since $A$ is generated by $\vec a$,
the homomorphism of expanded structures $h$ is unique
whenever it exists.
%
%
As a consequence,
if $(A,\vec a)$ and $(B,\vec b)$
are both presented by $(\vec x,\vec\atom)$ in $\cat K$,
then the unique $h \colon (A, \vec a) \to (B,\vec b)$ 
is an isomorphism of expanded structures.
\end{remark}

For a proof of the following useful fact see~\cite[Lemma~9.2.1]{hodges93book}.

\begin{lemma}
\label{lem:horn:pres}
Let $\cat K$ be a class of $\Sig$-structures
and consider a presentation 
$((x_j)_{j \in J}, (\atom_i)_{i \in I})$.
The following are equivalent for each expanded structure $(A,\vec a)$
with $A \in \cat K$:
\begin{enumerate}
\item
\label{item:horn:pres}
$(\vec x, \vec\atom)$ presents $(A,\vec a)$ in $\cat K$.

\item
\label{item:horn:char}
$\vec a$ generates the $\Sig$-structure $A$
and for every atomic formula $\atombis(\vec x)$ in $\Sig$,
we have:
\begin{center}
\(
  (A,\vec a) \models \atombis
  \qquad \Longleftrightarrow \qquad
  \cat K \models \forall \vec x.
    \left(\bigwedge_{i\in I} \atom_i\right) \Rightarrow \atombis
\).
\end{center}
\end{enumerate}
\end{lemma}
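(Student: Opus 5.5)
We prove the two implications separately. Throughout, write $\Phi(\atombis)$ for the Horn sentence $\forall \vec x.\,(\bigwedge_{i\in I}\atom_i)\Rightarrow\atombis$.

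\emph{From (\ref{item:horn:pres}) to (\ref{item:horn:char}).} Assume $(\vec x,\vec\atom)$ presents $(A,\vec a)$ in $\cat K$. Generation is condition~(\ref{item:horn:pres:gen}) of Definition~\ref{def:horn:pres}, so only the displayed equivalence remains.

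For the direction from right to left, suppose $\cat K\models\Phi(\atombis)$. Since $A\in\cat K$ and $(A,\vec a)\models\atom_i$ for all $i$ by condition~(\ref{item:horn:pres:sat}), instantiating $\vec x$ by $\vec a$ gives $(A,\vec a)\models\atombis$.

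For the direction from left to right, suppose $(A,\vec a)\models\atombis$. Take any $B\in\cat K$ and any $\vec b$ with $(B,\vec b)\models\atom_i$ for all $i$. Condition~(\ref{item:horn:pres:univ}) yields an $\Sig$-homomorphism $h$ with $h(a_j)=b_j$. Homomorphisms preserve atomic formulas: by induction on terms, $h$ commutes with term evaluation, equalities are preserved because $h$ is a function, and relations are preserved by definition. Hence $(B,\vec b)\models\atombis$, and so $\cat K\models\Phi(\atombis)$.

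\emph{From (\ref{item:horn:char}) to (\ref{item:horn:pres}).} Generation is given. Each $\atom_i$ trivially satisfies $\cat K\models\Phi(\atom_i)$, so the equivalence gives $(A,\vec a)\models\atom_i$, which is condition~(\ref{item:horn:pres:sat}).

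The main step is condition~(\ref{item:horn:pres:univ}). Fix $B\in\cat K$ and $\vec b$ with $(B,\vec b)\models\atom_i$ for all $i$.
\begin{enumerate}
\item Since $\vec a$ generates $A$, the generated substructure consists of the values of terms. Note that in a relational-plus-functional signature the substructure generated by $\vec a$ has as carrier the set of term values $t^A(\vec a)$, with relations restricted. So every element of $|A|$ equals $t^A(\vec a)$ for some term $t(\vec x)$, which involves finitely many variables.
\item Define $h(t^A(\vec a)) := t^B(\vec b)$.
\item Well-definedness: if $t^A(\vec a)=s^A(\vec a)$, then the atomic formula $t=s$ holds in $(A,\vec a)$. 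By the equivalence, $\cat K\models\Phi(t=s)$, and instantiating in $B$ gives $t^B(\vec b)=s^B(\vec b)$.
\item Preservation of function symbols is immediate from the definition of $h$ on terms.
\item Preservation of relations: if $R^A(t_1^A(\vec a),\dots,t_n^A(\vec a))$ holds, apply the same argument to the atomic formula $R(t_1,\dots,t_n)$.
\item Clearly $h(a_j)=b_j$, taking $t=x_j$.
\end{enumerate}

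The one step needing care is the term description of the generated substructure in step~1. It does not depend on $\cat K$, since substructures are taken in $\Str(\Sig)$.
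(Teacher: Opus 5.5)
Your proof is correct and is essentially the standard argument behind \cite[Lemma~9.2.1]{hodges93book}, which the paper simply cites rather than proving. The forward direction uses preservation of atomic formulas under homomorphisms. The converse builds the homomorphism on term values $t^A(\vec a)\mapsto t^B(\vec b)$, with well-definedness and preservation of relations obtained from the displayed equivalence; this is Hodges' diagram-lemma construction, which you inline.
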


We apply this machinery to the setting of Theorem~\ref{thm:horn:adj}, where we have signatures $\Sigbis\supseteq \Sig$, Horn theories $\Hornbis\supseteq \Horn$
and the adjunction $F\dashv U \colon \Mod_\Sig(\Horn) \to \Mod_\Sigbis(\Hornbis)$.
\begin{definition}\label{def:presentation-operation}
    Given any $A\in \Mod_\Sig(\Horn)$ we define the presentation $\Pres(A)= (\vec{x},\Phi_A)$, where:
    \begin{itemize}
        \item $\vec{x} = ( x_a )_{a\in A}$, i.e., there is one variable $x_a$ for each $a\in A$,
        \item $\Phi_A$ is the set of all atomic $\mathcal{S}$-formulas $\atom(\vec x)$ such that $A,\vec{a}\models \atom(\vec{x})$, i.e., that are satisfied in $A$ when interpreting $x_a$ by $a$.
    \end{itemize}
\end{definition}

In model theory, the set of formulas $\Phi_A$ of $\Pres(A)$ is
often referred to as the \emph{positive diagram of $A$}
(see, e.g.,~\cite[Remark 5.33]{Adamek_Rosicky_1994} and \cite{hodges93book}).
Note that $\Pres(A)$ is a presentation in the signature $\Sig$
(because all formulas in $\Phi_A$ are atomic $\Sig$-formulas)
and therefore, \emph{a fortiori}, also a presentation in the extended signature $\Sigbis$. 
In the following result, which provides a logical characterisation of free objects, we use $\Pres(A)$ as a presentation in the extended signature $\Sigbis$. 

\begin{proposition}
\label{prop:main-fact-presentations}
Let $\Horn$ and $\Hornbis$ be Horn theories in signatures
$\Sig$ and $\Sigbis$, respectively, with $\Sig \sle \Sigbis$ and $\Horn \sle \Hornbis$.
Let $F\dashv U \colon \Mod_\Sig(\Horn) \to \Mod_\Sigbis(\Hornbis)$
be the corresponding adjoint pair, as in Theorem~\ref{thm:horn:adj}, with unit $\eta$. 
Fix any $A\in \Mod_\Sig (\Horn)$
and corresponding presentation $\Pres(A)= (\vec{x},\Phi_A)$. Then it holds that:
\begin{enumerate}
\item
\label{item:main-fact-presentations:pres}
The expanded structure
$\big( F A, (\eta_A(a) \mid a\in A) \big)$
is presented by $(\vec{x},\Phi_A)$ in  $\Mod_\Sigbis(\Hornbis)$.

\item
\label{item:main-fact-presentations:bridge}
For each atomic formula $\atombis(\vec{x})$ in the extended signature $\Sigbis$, 
\[
  \big( F A, (\eta_A(a) \mid a\in A) \big) \models \atombis(\vec{x})
  \qquad \Longleftrightarrow \qquad
  \Hornbis \models \forall{\vec{x}}. (\bigwedge \Phi_A) \Rightarrow \atombis.
\]
\end{enumerate}
\end{proposition}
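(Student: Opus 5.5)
We first prove item~(\ref{item:main-fact-presentations:pres}) by checking the three conditions of Definition~\ref{def:horn:pres} with $\cat K = \Mod_\Sigbis(\Hornbis)$ and parameters $\vec a = (\eta_A(a) \mid a\in A)$. Item~(\ref{item:main-fact-presentations:bridge}) then follows from Lemma~\ref{lem:horn:pres}, applied with the same $\cat K$ and the presentation $(\vec x,\Phi_A)$. Since $\cat K$ is the class of models of $\Hornbis$, the clause $\cat K \models \forall\vec x. (\bigwedge\Phi_A)\Rightarrow\atombis$ is exactly $\Hornbis \models \forall\vec x.(\bigwedge\Phi_A)\Rightarrow\atombis$.

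Condition~(\ref{item:horn:pres:sat}), satisfaction, holds because $\eta_A\colon A\to UFA$ is an $\Sig$-homomorphism. Each $\atom\in\Phi_A$ is an atomic $\Sig$-formula true in $(A,\vec a)$, and homomorphisms preserve atomic formulas: relations are preserved by definition, and equalities of terms are preserved because homomorphisms commute with term evaluation. Hence $(FA,\vec{\eta_A(a)})\models\atom$, interpreting the $\Sig$-symbols of $FA$ via $UFA$.

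Condition~(\ref{item:horn:pres:univ}), universality, comes from the adjunction. Take $B\in\Mod_\Sigbis(\Hornbis)$ and $(b_a)_{a\in A}$ with $(B,\vec b)\models\atom$ for all $\atom\in\Phi_A$. Define $f\colon |A|\to |UB|$ by $f(a)=b_a$. We claim $f$ is an $\Sig$-homomorphism. If $R^A(a_1,\dots,a_n)$ holds, then $R(x_{a_1},\dots,x_{a_n})\in\Phi_A$, so $R^B(b_{a_1},\dots,b_{a_n})$ holds. If $\sigma^A(a_1,\dots,a_n)=a$, then the equation $\sigma(x_{a_1},\dots,x_{a_n})=x_a\in\Phi_A$ gives $\sigma^B(f(a_1),\dots,f(a_n))=f(a)$. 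Since $UB\in\Mod_\Sig(\Horn)$, the adjunction yields $\bar f\colon FA\to B$ with $U\bar f\circ\eta_A=f$, which is the required map of expanded structures.

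The main obstacle is condition~(\ref{item:horn:pres:gen}): $FA$ must be generated by $\eta_A(|A|)$. Let $S$ be the smallest $\Sigbis$-substructure of $FA$ containing $\eta_A(|A|)$: close the set under all function symbols and give it the restricted relations. We first show $S\in\Mod_\Sigbis(\Hornbis)$. For a Horn sentence, a valuation in $S$ satisfying the antecedent also satisfies it in $FA$, because the inclusion is a homomorphism. The consequent then holds in $FA$, and reflects back to $S$ because $S$ carries the restricted relations and injective equality. So Horn sentences, even infinitary ones, are preserved under substructures. Next, $\eta_A$ corestricts to an $\Sig$-homomorphism $e\colon A\to US$, which induces $\bar e\colon FA\to S$ with $U\bar e\circ\eta_A=e$. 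Let $i\colon S\into FA$ be the inclusion. Then $i\circ\bar e$ and $\id_{FA}$ both correspond to $\eta_A$ under the adjunction, so they are equal by uniqueness of transposes. Hence $i$ is surjective and $S=FA$, which establishes the generation condition and completes the plan.
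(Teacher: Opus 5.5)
Your proof is correct and takes the same route as the paper: the universal property of the free object $FA$ gives item~(1), and Lemma~\ref{lem:horn:pres} with $\cat K = \Mod_\Sigbis(\Hornbis)$ gives item~(2). You also fill in details the paper leaves implicit, notably the generation condition, which you obtain from closure of Horn classes under substructures together with uniqueness of transposes.
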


\begin{proof}
The universal property of $F A$, as a free object, is used to establish the first point. 
The second point is a direct consequence of Lemma~\ref{lem:horn:pres}.
\end{proof}

\begin{remark}\label{remark:eta-parameters}
In Proposition \ref{prop:main-fact-presentations},
$\eta_A(a)$ is viewed as an element of the structure $F(A)$. 
Note that, formally, $\eta_A\colon A \to UF(A)$, and therefore $\eta_A(a)$,
is an element of $UF(A)$.
However, by definition of the forgetful functor $U$,
the two structures $F(A)\in \Mod_\Sigbis(\Hornbis)$
and $UF(A)\in \Mod_\Sig(\Horn)$ have the same carrier,
i.e., $|F(A)|= |UF(A)|$. Hence $\eta_A(a)$ is an element of $F(A)$.
\end{remark}

\begin{lemma}
\label{lem:pres:unique-iso}
Let $\Horn$ and $\Hornbis$ be Horn theories in signatures
$\Sig$ and $\Sigbis$, respectively, with $\Sig \sle \Sigbis$ and $\Horn \sle \Hornbis$.
Let $F\dashv U \colon \Mod_\Sig(\Horn) \to \Mod_\Sigbis(\Hornbis)$
be the corresponding adjoint pair, as in Theorem~\ref{thm:horn:adj}, with unit $\eta$. 
Fix $A\in  \Mod_\Sig(\Horn)$
and let $\vec{x}=(x_a \mid a\in A)$ be a collection of variables, one for each $a\in A$.
Let furthermore
$B\in \Mod_\Sigbis(\Hornbis)$ with parameters $ ( b_a \mid a \in A)$, such that:
\begin{enumerate}
\item
\label{item:pres:unique-iso:gen}
$B$ is generated by the parameters $ ( b_a \mid a \in A)$, and

\item
\label{item:pres:unique-iso:equiv}
for every atomic formula $\atombis(\vec{x})$ in the signature $\Sigbis$,
\[
  (F A, (\eta_A(a) \mid a\in A) ) \models \atombis(\vec{x})
  \qquad \Longleftrightarrow \qquad
  (B, ( b_a \mid a \in A)) \models \atombis(\vec{x}).
\]
\end{enumerate}

\noindent
Then there exists a unique homomorphism of expanded structures
$h\colon (F A, (\eta_A(a) \mid a\in A) \big)\to (B,\vec{b})$,
and $h$ is an isomorphism.
\end{lemma}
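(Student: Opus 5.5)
The plan is to show that $(B,\vec b)$ is itself presented by $\Pres(A) = (\vec x, \Phi_A)$ in $\Mod_\Sigbis(\Hornbis)$, and then to conclude with Remark~\ref{remark:iso-from-presentation}. By Proposition~\ref{prop:main-fact-presentations}(\ref{item:main-fact-presentations:pres}), the expanded structure $(FA, (\eta_A(a)\mid a\in A))$ is presented by $(\vec x,\Phi_A)$ in $\Mod_\Sigbis(\Hornbis)$. Once we also know that $(B,\vec b)$ is presented by the same presentation in the same class, Remark~\ref{remark:iso-from-presentation} gives both conclusions at once: there is a unique homomorphism of expanded structures $h\colon (FA,\vec\eta)\to(B,\vec b)$, and it is an isomorphism.

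To show that $(B,\vec b)$ is presented by $(\vec x,\Phi_A)$, I would use the characterisation of Lemma~\ref{lem:horn:pres}(\ref{item:horn:char}) with $\cat K = \Mod_\Sigbis(\Hornbis)$. It has two parts.

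\begin{enumerate}
\item The generation condition is exactly hypothesis~(\ref{item:pres:unique-iso:gen}).
\item We need, for every atomic $\Sigbis$-formula $\atombis(\vec x)$,
\[
(B,\vec b)\models \atombis \iff \Mod_\Sigbis(\Hornbis)\models \forall\vec x.\,(\textstyle\bigwedge\Phi_A)\Rightarrow\atombis .
\]
This is obtained by chaining two equivalences. Hypothesis~(\ref{item:pres:unique-iso:equiv}) gives $(B,\vec b)\models\atombis \iff (FA,\vec\eta)\models\atombis$. Proposition~\ref{prop:main-fact-presentations}(\ref{item:main-fact-presentations:bridge}) then gives $(FA,\vec\eta)\models\atombis \iff \Hornbis\models\forall\vec x.\,(\bigwedge\Phi_A)\Rightarrow\atombis$.
\end{enumerate}

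Lemma~\ref{lem:horn:pres} then yields that $(\vec x,\Phi_A)$ presents $(B,\vec b)$ in $\Mod_\Sigbis(\Hornbis)$.

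The only point needing care is the meaning of $\Hornbis\models\dots$ in Proposition~\ref{prop:main-fact-presentations}(\ref{item:main-fact-presentations:bridge}). We need it to mean validity in every model of $\Hornbis$, i.e.\ $\Mod_\Sigbis(\Hornbis)\models\dots$, which is the notion used in Lemma~\ref{lem:horn:pres}. I take this to be the intended semantic reading, so the two statements match.

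An alternative, more direct route is available if one wants to avoid citing Lemma~\ref{lem:horn:pres}. First, $B\in\Mod_\Sigbis(\Hornbis)$, and by hypothesis~(\ref{item:pres:unique-iso:equiv}) the parameters $\vec b$ satisfy $\Phi_A$, since $(FA,\vec\eta)$ does. The freeness of $FA$ (Proposition~\ref{prop:main-fact-presentations}(\ref{item:main-fact-presentations:pres}), universal clause) then yields $h$ with $h(\eta_A(a))=b_a$, and $h$ is unique because $FA$ is generated by $\vec\eta$. Next, $h$ is surjective because its image is a substructure containing $\vec b$, which generates $B$. Finally, $h$ is injective and reflects relations: every element of $FA$ is the value of a $\Sigbis$-term $t(\vec x)$ at $\vec\eta$, so applying hypothesis~(\ref{item:pres:unique-iso:equiv}) to the atomic formulas $t_1=t_2$ and $R(t_1,\dots,t_n)$ shows that $h$ reflects equality and all relations. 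A bijective homomorphism that reflects all relations is an isomorphism, with inverse a homomorphism.

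The main obstacle in this direct route is the fact that generated structures consist exactly of term values. I expect this to be routine here because substructures are closed under the function symbols.
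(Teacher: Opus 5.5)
The paper states this lemma without proof, so there is nothing to compare against directly. Your first route is correct and is evidently the argument the preceding machinery is set up to deliver: you show $(B,\vec b)$ is presented by $\Pres(A)$ in $\Mod_{\Sigbis}(\Hornbis)$ via Lemma~\ref{lem:horn:pres}, chaining hypothesis~(\ref{item:pres:unique-iso:equiv}) with Proposition~\ref{prop:main-fact-presentations}(\ref{item:main-fact-presentations:bridge}), and then conclude with Remark~\ref{remark:iso-from-presentation}. Your reading of $\Hornbis\models$ as validity in every model of $\Hornbis$ is the intended one.

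Your alternative direct route is also sound and somewhat more elementary. It needs only the universal property of $FA$, the description of generated substructures as sets of term values, and hypothesis~(\ref{item:pres:unique-iso:equiv}) applied to equations and relations between terms. It therefore bypasses Lemma~\ref{lem:horn:pres} entirely.
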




\section{Quantitative Algebras}\label{sec:quantitative-algebra-background}
Quantitative algebra, as originally introduced in \cite{DBLP:conf/lics/MardarePP16} (see also \cite{DBLP:journals/jacm/BacciMPP26}), studies algebraic theories over the category of metric spaces and nonexpansive maps.
The apparatus allows to explore the interactions between the metric structure and the algebraic structure.
In a similar but more general way, some works \cite{MioEA24,entics:16876} have considered
algebraic theories over the category of fuzzy relations. In this paper, inspired by both \cite{FordEA21} and \cite{MioEA24}, we adopt an even more general definition. The \emph{base} category, such as metric or fuzzy relation spaces, 
is assumed to be $\Mod_\Pi(\mathbb{B})$, the models of a Horn theory $\mathbb{B}$ over a purely relational (i.e., not having any constant or functions symbols) signature $\Pi$. The signature $\Pi$ can be infinite
but all relations $R\in \Pi$ must have finite arity.

\begin{example}\label{example:base-theory-fuzzy-relations}
The category $\FRel$ of fuzzy relations and nonexpansive maps is (isomorphic to) $\Mod_\Pi(\mathbb{B})$,
where $\Pi := {\{ =_\epsilon \}}_{\epsilon\in [0,1]}$ is an infinite set of binary relations indexed by real numbers $0 \leq \epsilon \leq 1$,
and  $\mathbb{B}$ is the Horn theory consisting of the following sentences:
\begin{itemize}
    \item (Monotonicity) $\forall x,y. (x=_\epsilon y) \Rightarrow  (x=_\delta y)$, for each $\epsilon, \delta\in [0,1]$ with $\epsilon <\delta$, 
    \item (Order Completeness) $\forall x,y. \displaystyle  \Big( \bigwedge_{\epsilon \in [0,1]\cap\mathbb{Q}, \epsilon > \delta}  x=_\epsilon y \Big)\Rightarrow x=_\delta y$,  for each $\delta \in [0,1]$.
\end{itemize}
Note that each instance of the \emph{order completeness}
implication has countably many premises. 
It can be shown that the category $\FRel$ requires infinitary formulas to be axiomatised. This follows, for example,
from the fact that $\FRel$ is locally presentable but not locally finitely presentable.
The category $\Met$ of $1$-bounded metric spaces and nonexpansive maps is obtained by adding to $\mathbb{B}$ additional Horn implications expressing the axioms of metric spaces, such as the triangle inequality (see, e.g.~\cite[\S 2.3]{MioEA24}). 
\end{example}

The algebraic part of the theory, on top of the base category, is formalised by considering a signature $\Sigma$ consisting only of function symbols  (i.e., no relations) of finite arity (constants are functions of arity $0$)
and a Horn theory $\mathbb{Q}$ over the extended signature $(\Sigma+\Pi)$ consisting of \emph{basic quantitative equations}, defined as follows.
\begin{definition}\label{def:basic-equation}
Let $(\Pi,\mathbb{B})$ be a base theory. Let $\Sigma$ consist only of constants and function symbols of finite arity. A \emph{basic quantitative equation} (BQE) 
is a Horn implication of the form:
\[
\forall \vec{x}. \Big( \bigwedge_{i \in I} \atom_i(\vec{x}) \Big) \Rightarrow \atomext(\vec{x})
\]
such that the function symbols in $\Sigma$ never appear in the antecedent. Equivalently:
\begin{itemize}
    \item the formulas $\atom_i(\vec{x})$ are atomic $\Pi$-formulas using variables from $\vec{x}$. So $\atom_i$ is either of the form $x_1=x_2$ or $R(x_1,\dots, x_n)$, with $R\in\Pi$,
\item the formula $\atomext(\vec{x})$ is an atomic $(\Pi +  \Sigma)$-formula using variables from $\vec{x}$. So $\atomext(\vec{x})$ is of the form $R(s_1(\vec{x}),\dots, s_n(\vec{x}))$, where $R$ is either a relation symbol $R\in \Pi$ or the equality symbol $(=)$, 
and $s_1,\dots, s_n$ are $\Sigma$--terms with variables from $\vec{x}$.
\end{itemize}
As usual, if the conjunction in the antecedent is empty, we just write $\forall \vec{x}. \atomext(\vec{x})$.
\end{definition}
\begin{remark}
The use of the adjective ``basic'' in Definition~\ref{example:base-theory-fuzzy-relations} follows the terminology adopted in the first paper on quantitative algebra~\cite{DBLP:conf/lics/MardarePP16}, where arbitrary Horn implications over the signature $\Pi+\Sigma$
are called \emph{quantitative inferences}, the subclass having no function symbols in the antecedent, as in Definition~\ref{def:basic-equation}, are called \emph{basic quantitative inferences},
and the further subclass of the form $\forall \vec{x}. \atomext(\vec{x})$ are called \emph{quantitative equations}. In later works
the theory has developed with a focus on \emph{basic quantitative inferences} (see e.g., \cite{DBLP:conf/lics/MardarePP17,FordEA21,MioEA24}),
rather than \emph{quantitative inferences}. Accordingly, in this work we only work with \emph{basic quantitative equations} in the sense of Definition~\ref{def:basic-equation}.
Somewhat confusingly,\footnote{See also Remark 2.12 in~\cite{entics:16876}.}\
in~\cite{entics:16876} the term \emph{quantitative equation} refers to a special class of basic quantitative equations (as in Definition~\ref{def:basic-equation}),
which we will refer to as \emph{antecedent complete BQE} (Definition~\ref{def:antecedente-complete} of Section~\ref{sec:compact-definition}). 
\end{remark}

\begin{definition}
A \emph{quantitative equational theory} $T$ is given by a base theory
$(\Pi,\mathbb{B})$ and an algebraic theory $(\Sigma,\mathbb{Q})$, 
with the signature $\Sigma$ consisting only of constants and function symbols and $\mathbb{Q}$ consisting of basic quantitative equations, as defined above. So we can write $T=\langle (\Pi,\mathbb{B}),  (\Sigma,\mathbb{Q})\rangle$. 
The category of models of the theory $T$, called \emph{quantitative algebras} (of the theory $T$), is $\Mod_{\Pi+\Sigma}(\mathbb{B}+\mathbb{Q})$.
\end{definition}
\begin{remark}
    Note that there is nothing necessarily ``quantitative'' in this definition. We adopt this adjective 
    merely to adhere with previous literature on this line of research, starting with~\cite{DBLP:conf/lics/MardarePP16}, where the focus 
    was on variants of the base theory of Example~\ref{example:base-theory-fuzzy-relations}. 
\end{remark}
\begin{example}\label{example:quantitative-convex-algebras}
Consider the base theory $(\Pi,\mathbb{B})$ of fuzzy relations,
from Example~\ref{example:base-theory-fuzzy-relations}.
Let $\Sigma = {\{+_p\}}_{p\in (0,1)}$ consist of a (uncountable) collection
of binary function symbols indexed by reals $0 < p < 1$.
Consider the set $\mathbb{Q}$ of basic quantitative equations given by:
\begin{itemize}
\item Axioms of convex algebras:
\[
\forall x,y. (x+_p x = x) \qquad \forall x,y.( x+_p y = y +_{1-p}x) \qquad \forall x,y,z. \big(x +_p (y +_q z) = (x+_{pq} y ) +_{\frac{(1-p)q}{1-pq}} z\big) 
\]
\item Interpolative barycentric axiom, for all $\epsilon,\delta \in [0,1]$ and $0<p<1$:
\[
\forall x,y,w,z. \big( x=_\epsilon w \ \wedge \ y=_\delta z \big) \Rightarrow x+_p y =_{\epsilon p + \delta(1-p)} w+_p z.
\]
\end{itemize}
The models of
$T=\langle (\Pi,\mathbb{B}), (\Sigma,\mathbb{Q})\rangle$,
i.e., the objects in $\Mod_{\Sigma+\Pi}(\mathbb{B}+\mathbb{Q})$,
are the \emph{quantitative convex algebras} of~\cite{entics:16876},
or equivalently structures
$(A,d\colon A^2\rightarrow [0,1], {\{+^A_p\}}_{p\in (0,1)})$
that are both fuzzy relations $(A,d)$ and convex algebras
$(A, {\{+^A_p\}}_{p\in (0,1)})$,
with the two being compatible in the sense of the interpolative barycentric axiom,
by interpreting $a=_\epsilon a'$ as $d(a,a')\leq \epsilon$.
Quantitative convex algebras over metric spaces where 
earlier introduced in~\cite{DBLP:conf/lics/MardarePP16},
and can be analogously formalised
by taking $\mathbb{B}$ to be the basic theory axiomatizing $1$-bounded metric spaces
(see Example~\ref{example:base-theory-fuzzy-relations}).
\end{example}

The following is a key result regarding quantitative equational theories.

\begin{theorem}%
\label{thm:stict-monadicity-quantitativetheory}
Let
$T = \langle (\Pi,\mathbb{B}),  (\Sigma,\mathbb{Q})\rangle$
be a quantitative equational theory. 
There is a forgetful functor
$U\colon \Mod_{\Pi+\Sigma}(\mathbb{B}+\mathbb{Q}) \rightarrow \Mod_\Pi(\mathbb{B})$,
from the category of models of $T$ to the base category,
that forgets the additional function symbols in $\Sigma$
and the Horn implications in $\mathbb{Q}$. Furthermore:
\begin{enumerate}
\item\label{item:stict-monadicity-quantitativetheory:ladj}
$U$ has a left adjoint $F\dashv U$,
which constructs free quantitative algebras generated by objects in the base category.

\item\label{item:stict-monadicity-quantitativetheory:monadic}
$U$ is strictly monadic (see Definition~\ref{def:strict-monadic-functor}), i.e., by denoting with $T$ the monad induced by $F\dashv U$,
the comparison functor
$K\colon \Mod_{\Pi+\Sigma}(\mathbb{B}+\mathbb{Q})\to \EM(T)$
is an isomorphism of categories.

\item\label{item:reflection-of-iso}
$U$ reflects isomorphisms, that is,
if $U(f)$ is an isomorphism then $f$ is an isomorphism.
\end{enumerate}
\end{theorem}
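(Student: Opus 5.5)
Item~\ref{item:stict-monadicity-quantitativetheory:ladj} is a special case of Theorem~\ref{thm:horn:adj}: take $\Sig=\Pi$, $\Sigbis=\Pi+\Sigma$, $\Horn=\mathbb{B}$ and $\Hornbis=\mathbb{B}+\mathbb{Q}$. This gives both the forgetful functor $U$ and its left adjoint $F$. Item~\ref{item:reflection-of-iso} follows from item~\ref{item:stict-monadicity-quantitativetheory:monadic} by Proposition~\ref{iso-reflection}; it can also be checked directly. If $U(f)$ is an isomorphism, then $f$ is a bijective $\Sigma$-homomorphism, so its inverse automatically commutes with the operations. Since $U(f)^{-1}$ is a $\Pi$-homomorphism, the inverse also preserves the relations. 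So all the work is in item~\ref{item:stict-monadicity-quantitativetheory:monadic}.

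To show that $K$ is an isomorphism of categories, I will construct an explicit inverse $L\colon \EM(T)\to \Mod_{\Pi+\Sigma}(\mathbb{B}+\mathbb{Q})$.
\begin{enumerate}
\item \emph{Construction of $L$.} Take a $T$-algebra $(A,\alpha\colon TA\to A)$ with $A\in\Mod_\Pi(\mathbb{B})$. For $\sigma\in\Sigma$ of arity $n$, define $\sigma^{L(A,\alpha)}(a_1,\dots,a_n):=\alpha\big(\sigma^{FA}(\eta_A a_1,\dots,\eta_A a_n)\big)$. The relations are those of $A$.
\item \emph{$L(A,\alpha)$ is a model of $\mathbb{B}+\mathbb{Q}$.} The axioms $\mathbb{B}$ hold because only $\Pi$ is involved. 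For a BQE $\forall\vec x.(\bigwedge\atom_i)\Rightarrow\atomext$, let $\vec a$ satisfy the antecedent in $A$. By the Eilenberg--Moore laws $\alpha\circ\eta=\id$ and $\alpha\circ T\alpha=\alpha\circ\mu$, together with the fact that $\mu_A=U\epsilon_{FA}$ is a $\Sigma$-homomorphism, an induction on $\Sigma$-terms gives $s^{L(A,\alpha)}(\vec a)=\alpha(s^{FA}(\eta_A\vec a))$. The antecedent is purely relational (it uses $\Pi$ and equality), and $\eta_A$ is a $\Pi$-homomorphism. Hence $(FA,\eta_A\vec a)$ satisfies the antecedent, and so it satisfies $\atomext$ because $FA$ is a model of $\mathbb{Q}$. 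Since $\alpha$ is a $\Pi$-homomorphism, applying $\alpha$ transfers $\atomext$ to $L(A,\alpha)$. This is exactly where the BQE restriction is used: if function symbols occurred in the antecedent, the antecedent could not be pulled back into $FA$.
\item \emph{Morphisms.} A morphism of $T$-algebras commutes with the operations defined in the first step, so it is a $(\Pi+\Sigma)$-homomorphism. Setting $L(f)=f$ makes $L$ a functor.
\item \emph{$KL=\Id$ and $LK=\Id$.} The equality $LK=\Id$ says that for $D$ a model, $\epsilon_D(\sigma^{FUD}(\eta\vec d))=\sigma^D(\vec d)$. This holds because $\epsilon_D$ is a $\Sigma$-homomorphism and $\epsilon_D\circ\eta_{UD}=\id$ by the triangle identity. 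The equality $KL=\Id$ says that for a $T$-algebra $(A,\alpha)$, the counit $\epsilon_{L(A,\alpha)}$ has underlying map $\alpha$. By the universal property, the counit is the unique $(\Pi+\Sigma)$-homomorphism $FA\to L(A,\alpha)$ extending $\id$ along $\eta_A$. The map $\alpha$ is such an extension: $\alpha\circ\eta_A=\id$, and $\alpha$ preserves the operations by the term identity from the second step applied to elements of $FA$. Every element of $FA$ is generated by $\eta_A A$ (Proposition~\ref{prop:main-fact-presentations}), which lets us write such elements as terms in $\eta_A\vec a$. On morphisms both composites are identities, since $K$ and $L$ act as the identity on underlying maps.
\end{enumerate}

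The main obstacle is showing that $\alpha$ is a $\Sigma$-homomorphism $FA\to L(A,\alpha)$, that is, $\alpha(\sigma^{FA}(t_1,\dots,t_n))=\alpha\big(\sigma^{FA}(\eta\alpha t_1,\dots,\eta\alpha t_n)\big)$. I would prove this with the multiplication law. Both $\sigma^{FA}(t_1,\dots,t_n)$ and $\sigma^{FA}(\eta\alpha t_1,\dots,\eta\alpha t_n)$ are images under $\mu_A$ and $T\alpha$, respectively, of the single element $\sigma^{FTA}(\eta_{TA}t_1,\dots,\eta_{TA}t_n)$ of $TTA$. Here one uses that $\mu_A$ and $T\alpha$ are $\Sigma$-homomorphisms and that $\mu_A\circ\eta_{TA}=\id$. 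Applying $\alpha$ and the law $\alpha\circ\mu=\alpha\circ T\alpha$ then gives the identity.
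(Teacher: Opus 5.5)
Your argument is correct, but for item~2 it takes a different route from the paper; items~1 and~3 are handled the same way.

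The paper does not construct $K^{-1}$ by hand. It applies Beck's strict monadicity theorem, showing in Proposition~\ref{prop:stict-monadicity-quantitativetheory:ladj} that $U$ creates coequalizers of $U$-split pairs. This requires three things:
\begin{enumerate}
\item transporting the $\Sigma$-operations along the absolute coequalizer $e$;
\item proving that the lifted structure is unique;
\item checking the universal property in $\Mod_{\Pi+\Sigma}(\mathbb{B}+\mathbb{Q})$.
\end{enumerate}
The BQE shape enters only through Lemma~\ref{lem:horn:splitepi}: models of $\mathbb{B}+\mathbb{Q}$ are closed under images along $(\Pi+\Sigma)$-homomorphisms that are split epic in $\Str(\Pi)$.

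Your direct inverse $L$, built from the Eilenberg--Moore laws, avoids Beck's theorem entirely and is more elementary. It nevertheless rests on the same transfer principle. Your multiplication-law computation shows that $\alpha\colon FA\to L(A,\alpha)$ is a $(\Pi+\Sigma)$-homomorphism, and it is split in $\Str(\Pi)$ by $\eta_A$, since $\alpha\comp\eta_A=\id$. So your step~2 is exactly Lemma~\ref{lem:horn:splitepi} with $e=\alpha$ and $s=\eta_A$.

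This suggests a tidier order for your proof. First establish $\alpha(\sigma^{FA}(t_1,\dots,t_n))=\alpha(\sigma^{FA}(\eta_A\alpha t_1,\dots,\eta_A\alpha t_n))$; your argument proves it for arbitrary $t_i\in TA$. Then deduce the model property of $L(A,\alpha)$ from the lemma, and $KL=\Id$ from uniqueness of the counit. The induction on terms and the appeal to generation of $FA$ in step~4 then become unnecessary.

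In exchange, the paper's route aligns with the standard treatments in the literature. It also isolates the logical content in a reusable lemma about Horn theories.
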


\begin{proof}
Item~(\ref{item:stict-monadicity-quantitativetheory:ladj})
is an instance of Theorem~\ref{thm:horn:adj}.
Item~(\ref{item:stict-monadicity-quantitativetheory:monadic}),
with differences depending on the exact mathematical
setting, has appeared in several works on quantitative algebra 
(\cite[Thm~4.13]{FordEA21}, \cite[Thm~2.17]{adamek22}, \cite[Thm~6.3]{MioEA24}).
We offer a proof in \S\ref{sec:proof:monadicity}.
Item~(\ref{item:reflection-of-iso}) follows
from~(\ref{item:stict-monadicity-quantitativetheory:monadic})
using Proposition~\ref{iso-reflection}.
\end{proof}

\begin{remark}
The restrictions imposed
on $\Sigma$ (purely functional signature) and $\mathbb{Q}$
(consisting of basic quantitative equations,
as opposed to arbitrary Horn implications in the signature $\Pi+\Sigma$)
are used for proving item~(\ref{item:stict-monadicity-quantitativetheory:monadic})
of Theorem~\ref{thm:stict-monadicity-quantitativetheory}.
Without appropriate conditions, item~(\ref{item:stict-monadicity-quantitativetheory:monadic}) is generally false 
(cf.\ Remark~\ref{example:non-strictly-monadic}). 
\end{remark}

As a direct consequence of Theorem~\ref{thm:stict-monadicity-quantitativetheory} we have the following.

\begin{corollary}\label{corollary:monad-of-a-theory}
To every quantitative equational theory $T = \langle (\Pi,\mathbb{B}),  (\Sigma,\mathbb{Q})\rangle$,
corresponds a monad on $\Mod_{\Pi}(\mathbb{B})$, also denoted by $T$,
induced by the adjoint pair $F\dashv U$, such that $\Mod_{\Pi+\Sigma}(\mathbb{B}+\mathbb{Q}) \cong \EM(T)$.
\end{corollary}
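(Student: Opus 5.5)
The statement to prove is Corollary~\ref{corollary:monad-of-a-theory}. It follows directly from Theorem~\ref{thm:stict-monadicity-quantitativetheory} together with the general Eilenberg--Moore material of \S\ref{sec:monad-background-comparison}. The plan has three steps.

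First, we produce the monad. By item~(\ref{item:stict-monadicity-quantitativetheory:ladj}) of Theorem~\ref{thm:stict-monadicity-quantitativetheory}, the forgetful functor $U\colon \Mod_{\Pi+\Sigma}(\mathbb{B}+\mathbb{Q})\to\Mod_\Pi(\mathbb{B})$ has a left adjoint $F$. Let $\eta$ and $\epsilon$ be the unit and counit of $F\dashv U$. The general construction recalled at the start of \S\ref{sec:monad-background-comparison} then gives a monad on $\Mod_\Pi(\mathbb{B})$, with underlying functor $UF$, unit $\eta$ and multiplication $U\epsilon_F$. By abuse of notation we call this monad $T$.

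Second, we obtain the isomorphism of categories. Proposition~\ref{prop:key-property-comparison} gives the comparison functor $K\colon \Mod_{\Pi+\Sigma}(\mathbb{B}+\mathbb{Q})\to\EM(T)$, defined by $K(D)=(UD,U\epsilon_D)$. By item~(\ref{item:stict-monadicity-quantitativetheory:monadic}) of Theorem~\ref{thm:stict-monadicity-quantitativetheory}, $U$ is strictly monadic, and by Definition~\ref{def:strict-monadic-functor} this means exactly that $K$ is an isomorphism of categories. Hence $\Mod_{\Pi+\Sigma}(\mathbb{B}+\mathbb{Q})\cong\EM(T)$.

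Third, as an additional remark, Proposition~\ref{prop:useful-fact-EM} gives $U=U_T K$ and $F_T=KF$. So the isomorphism $K$ commutes with the forgetful functors to the base category and sends free quantitative algebras to free $T$-algebras.

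There is no genuine obstacle here: all the substantive work lies in item~(\ref{item:stict-monadicity-quantitativetheory:monadic}) of Theorem~\ref{thm:stict-monadicity-quantitativetheory}, which we assume. The only point requiring care is bookkeeping: the same letter $T$ names both the theory and the monad, and this is justified because the monad is determined by the theory through the canonical free--forgetful adjunction. That adjunction is unique up to isomorphism, so the induced monad is well defined up to isomorphism.
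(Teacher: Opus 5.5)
Your proposal is correct and matches the paper, which presents this corollary as a direct consequence of Theorem~\ref{thm:stict-monadicity-quantitativetheory}: item~(\ref{item:stict-monadicity-quantitativetheory:ladj}) gives the adjunction $F\dashv U$ and hence the monad, and item~(\ref{item:stict-monadicity-quantitativetheory:monadic}) says that the comparison functor $K$ is an isomorphism $\Mod_{\Pi+\Sigma}(\mathbb{B}+\mathbb{Q})\cong\EM(T)$. Your extra remark via Proposition~\ref{prop:useful-fact-EM} is harmless but not needed for the statement.
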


\begin{example}
Consider the quantitative theory $T=\langle (\Pi,\mathbb{B}),  (\Sigma,\mathbb{Q})\rangle$ of Example~\ref{example:quantitative-convex-algebras} defining \emph{quantitative convex algebras}. 
Recall that, in this example, the base category $\Mod_\Pi(\mathbb{B})$ is (isomorphic to) $\mathbf{FRel}$. 
The corresponding monad $T\colon\FRel\to\FRel$ is the
\emph{Kantorovich Distribution monad}, which takes a fuzzy relation $(X,d)$
and maps it to the fuzzy relation $(D_f(X), K(d))$ of finite distributions on $X$ equipped with the Kantorovich lifting of the fuzzy relation $d$ (see~\cite{entics:16876}). 
If the base theory $(\Pi,\mathbb{B})$ is taken to be that of $1$-bounded metric spaces (see Example~\ref{example:base-theory-fuzzy-relations}), the resulting monad is the Kantorovich Distribution monad $T\colon\Met\to\Met$ studied in
the first article on quantitative algebras~\cite{DBLP:conf/lics/MardarePP16}.  
\end{example}

\begin{remark}
As already mentioned, the definitions presented in this section are inspired by both~\cite{FordEA21} and~\cite{MioEA24}.
As in~\cite{FordEA21}, the base category $\Mod_{\Pi}(\mathbb{B})$ is always symmetric monoidal closed and
therefore it can be enriched over itself.
But as in~\cite{MioEA24}, the resulting monad $T$ is in general not enriched over the base category.
The latter property holds if $\mathbb{Q}$ contains basic quantitative equations enforcing
that every operations $\sigma\in\Sigma$ satisfies the conditions for being a homomorphism in the base category.
See~\cite{MioEA24} for a detailed discussion.  
\end{remark}

\subsection{Proof of the Monadicity Theorem~\ref{thm:stict-monadicity-quantitativetheory}%
(\ref{item:stict-monadicity-quantitativetheory:monadic})}%
\label{sec:proof:monadicity}

Let $T = \langle (\Pi, \mathbb{B}), (\Sigma, \mathbb{Q})\rangle$
be a quantitative equational theory.
Consider the functor
\[
  U\colon \Mod_{\Pi + \Sigma}(\mathbb{B} + \mathbb{Q}) \rightarrow \Mod_\Pi(\mathbb{B})
\]
from the category of models of $\mathbb{B} + \mathbb{Q}$ to the base category,
that forgets the function symbols in~$\Sigma$ and the Horn implications in $\mathbb{Q}$.
By Theorem~\ref{thm:stict-monadicity-quantitativetheory}.(\ref{item:stict-monadicity-quantitativetheory:ladj}),
we know that $U$ has a left adjoint $F \dashv U$.  Write $T$ for the induced monad on $\Mod_{\Pi}(\mathbb{B})$.

We show that $U$ is \emph{strictly} monadic in the sense of~\cite[\S VI.3]{maclane98book},
namely that the comparison functor $K\colon \Mod_{\Pi + \Sigma}(\mathbb{B} + \mathbb{Q}) \to \EM(T)$
is an isomorphism of categories.
Similarly as for other variations of quantitative algebras (\cite{FordEA21,adamek22}),
we apply Beck's (Strict) Monadicity Theorem (\cite[Theorem~VI.7.1]{maclane98book}).
We are thus going to prove the following.

\begin{proposition}%
\label{prop:stict-monadicity-quantitativetheory:ladj}
The right adjoint $U$ creates coequalizers for those parallel pairs $(f,g)$ in $\Mod_{\Pi+\Sigma}(\mathbb{B}+\mathbb{Q})$
such that $(U f, U g)$ has a split coequalizer in $\Mod_\Pi(\mathbb{B})$.
\end{proposition}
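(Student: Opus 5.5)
Fix a parallel pair $f,g\colon A\to B$ in $\Mod_{\Pi+\Sigma}(\mathbb{B}+\mathbb{Q})$. Assume $(Uf,Ug)$ has a split coequalizer $e\colon UB\to C$ in $\Mod_\Pi(\mathbb{B})$, with sections $s\colon C\to UB$ and $t\colon UB\to UA$ satisfying
\[
es=\id_C,\qquad Uf\comp t=\id_{UB},\qquad Ug\comp t = se .
\]
Creation means three things. First, $C$ admits a unique $\Sigma$-structure that satisfies $\mathbb{Q}$ and makes $e$ a $\Pi+\Sigma$-homomorphism. Second, $e$ with this structure is a coequalizer of $(f,g)$ upstairs. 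The plan is to work entirely with the split data and elements, since all maps are functions on carriers.

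\emph{Step 1 (structure on $C$).} For $\sigma\in\Sigma$ of arity $n$, define $\sigma^C(c_1,\dots,c_n):= e(\sigma^B(s c_1,\dots,s c_n))$. Uniqueness is immediate: $e$ is surjective because $es=\id$, so requiring $e$ to be a homomorphism forces $\sigma^C(e b_1,\dots,e b_n)=e\sigma^B(b_1,\dots,b_n)$. The key point is to verify that this equation holds for arbitrary $b_i$, that is, $e\sigma^B(\vec b)=e\sigma^B(se\vec b)$. I would use $t$. Since $f,g$ are homomorphisms,
\[
e\sigma^B(\vec b)=e\sigma^B(f t\vec b)=e f\sigma^A(t\vec b)=e g\sigma^A(t\vec b)=e\sigma^B(g t\vec b)=e\sigma^B(se\vec b).
\]
Hence $\sigma^C$ is well-defined and $e$ preserves $\Sigma$. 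By induction on terms, $e(s^B(\vec b))=s^C(e\vec b)$ for every $\Sigma$-term $s$.

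\emph{Step 2 ($C\models\mathbb{Q}$).} This is the step I expect to be the main obstacle, and it is where the basic shape of the equations matters. Take a BQE $\forall\vec x.(\bigwedge_i\atom_i)\Rightarrow\atomext$ and an assignment $\vec c$ in $C$ satisfying all $\atom_i$. These are $\Pi$-atoms or equalities between variables. Lift the assignment via $s$ to $s\vec c$ in $B$. Because $s$ is a $\Pi$-homomorphism (a morphism of $\Mod_\Pi(\mathbb{B})$) and preserves equality, $(B,s\vec c)$ satisfies all $\atom_i$. Since $B\models\mathbb{Q}$, we get $B,s\vec c\models \atomext$, where $\atomext = R(s_1,\dots,s_n)$. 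Applying $e$, which preserves $R$ (or equality) and commutes with terms by Step 1, yields $C,es\vec c\models\atomext$. As $es\vec c=\vec c$, this gives $C\models\atomext$. Thus the absence of function symbols in antecedents is exactly what allows the transfer along $s$, which is only a $\Pi$-map. The theory $\mathbb{B}$ already holds in $C$.

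\emph{Step 3 (universality).} We have $ef=eg$ as functions, so $e$ coequalizes upstairs. Let $h\colon B\to D$ in $\Mod_{\Pi+\Sigma}(\mathbb{B}+\mathbb{Q})$ with $hf=hg$. Downstairs there is a unique $\Pi$-map $k\colon C\to UD$ with $ke=Uh$, namely $k=Uh\comp s$. It remains to check that $k$ preserves $\Sigma$:
\[
k\sigma^C(\vec c)=h e\sigma^B(s\vec c)\ \text{(reading $ke=h$)}=h\sigma^B(s\vec c)=\sigma^D(hs\vec c)=\sigma^D(k\vec c).
\]
Uniqueness upstairs follows from uniqueness downstairs, since $U$ is faithful. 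Together with Theorem~\ref{thm:horn:adj} for the left adjoint, Beck's theorem then gives strict monadicity.
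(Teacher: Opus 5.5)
Your proposal is correct and follows essentially the paper's argument. Step 2 is exactly Lemma~\ref{lem:horn:splitepi} (transport along the $\Pi$-map $s$, then along $e$), and Step 3 is the same factor-downstairs-then-check-$\Sigma$ computation; in Step 1, defining $\sigma^C = e\comp\sigma^B\comp(s\times\cdots\times s)$ and checking the homomorphism property via $t$ is just the unwound form of the paper's appeal to absoluteness of the split coequalizer in $\Set$ (small typos: in Step 3 the first term should read $ke\sigma^B(s\vec c)$, not $he\sigma^B(s\vec c)$, and the letter $s$ is overloaded for both the splitting map and the terms).
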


Recall from~\cite[\S VI.6]{maclane98book} that a \emph{split} coequalizer of a parallel pair
$f, g \colon a \to b$ is given by morphisms $e \colon b \to c$, $s \colon c \to b$ and $t \colon b \to a$
such that the diagram below commutes
\begin{equation}
\label{eq:splitcoeq}
\begin{tikzcd}
  b
  \arrow{d}{e}
  \arrow{r}{t}
& a
  \arrow{r}{f}
  \arrow{d}{g}
& b
  \arrow{d}{e}
\\
  c
  \arrow{r}{s}
& b
  \arrow{r}{e}
& c
\end{tikzcd}
\end{equation}

\noindent
and moreover both horizontal composites are identities.
Note that $e$ is always a coequalizer of $f$ and $g$.
Moreover, split coequalizers are absolute (i.e.\ preserved by all functors).


The proof of Proposition~\ref{prop:stict-monadicity-quantitativetheory:ladj}
is similar to the proof of the analogous~\cite[Thm~4.13]{FordEA21} and~\cite[Thm~2.17]{adamek22}.
Given a parallel pair $f,g \colon A \to B$ in $\Mod_{\Pi + \Sigma}(\mathbb{B} + \mathbb{Q})$
such that the pair $U f, U g \colon U A \to U B$
has a split coequalizer $e \colon U B \to X$ in $\Mod_{\Pi}(\mathbb{B})$,
we use the fact $e$ is absolute to obtain a (unique) $\Sigma$-structure on $X$,
say $C \in \Str(\Pi + \Sigma)$.
This follows the usual argument for the purely algebraic case
(see, e.g., \cite[Remark~3.4(7)]{Adamek_Rosicky_1994}).

In the purely algebraic case, the induced structure on $C \in \Str(\Pi + \Sigma)$
is ensured to satisfy the required equations,
since ordinary equations are preserved by homomorphisms.
However, the preservation of \emph{quantitative} equations requires a separate argument
(recall that arbitrary Horn sentences are \emph{not} preserved by homomorphisms).
This is provided by the following Lemma~\ref{lem:horn:splitepi}.
The proofs of~\cite[Thm~4.13]{FordEA21} and~\cite[Thm~2.17]{adamek22}
rely on similar arguments.

\begin{lemma}%
\label{lem:horn:splitepi}
Let $e \colon A \to B$ be a homomorphism in $\Str(\Pi + \Sigma)$
such that $e$ is split epic in $\Str(\Pi)$:
there is a $\Pi$-homomorphism $s \colon B \to A$
such that $e \comp s$ is the identity.
If the $(\Pi + \Sigma)$-structure $A$ is a model of $\mathbb{B} + \mathbb{Q}$,
i.e.\ $A \in \Mod_{\Pi + \Sigma}(\mathbb{B} + \mathbb{Q})$,
then $B \in \Mod_{\Pi + \Sigma}(\mathbb{B} + \mathbb{Q})$ as well.
\end{lemma}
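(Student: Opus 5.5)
We verify directly that $B$ satisfies every Horn sentence in $\mathbb{B}+\mathbb{Q}$, splitting into two cases according to the shape of the sentence. The key facts are three. First, $e$ is a surjective $(\Pi+\Sigma)$-homomorphism; it is surjective since $e\comp s=\id$. Second, $s$ is a $\Pi$-homomorphism but not necessarily a $\Sigma$-homomorphism. Third, the antecedents of all sentences in $\mathbb{B}$ and in $\mathbb{Q}$ are purely $\Pi$-atomic, with no function symbols. The conclusions of sentences in $\mathbb{B}$ are also $\Pi$-atomic, because $\Pi$ is purely relational.

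\emph{Sentences in $\mathbb{B}$.} Let $\forall\vec x.\,(\bigwedge_i\atom_i)\Rightarrow\atombis$ be in $\mathbb{B}$ and let $\vec b$ be an assignment in $B$ satisfying all $\atom_i$. Since $s$ is a $\Pi$-homomorphism and each $\atom_i$ is either a relation $R(x_{j_1},\dots,x_{j_n})$ with $R\in\Pi$ or an equality between variables, the assignment $s(\vec b)$ satisfies all $\atom_i$ in $A$. (Equalities are preserved by any function.) Since $A\models\mathbb{B}$, we get $(A,s(\vec b))\models\atombis$. Now $\atombis$ is again a $\Pi$-atom on variables, and $e$ preserves it, so $(B,e s(\vec b))=(B,\vec b)\models\atombis$. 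Equivalently, the $\Pi$-reduct of $B$ is a retract in $\Str(\Pi)$ of the $\Pi$-reduct of $A$.

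\emph{Sentences in $\mathbb{Q}$.} Let $\forall\vec x.\,(\bigwedge_i\atom_i)\Rightarrow R(t_1(\vec x),\dots,t_n(\vec x))$ be in $\mathbb{Q}$, where $R\in\Pi$ or $R$ is equality and the $t_k$ are $\Sigma$-terms. Take $\vec b$ in $B$ satisfying the antecedent. As before, $s(\vec b)$ satisfies the antecedent in $A$, because the antecedent is $\Pi$-atomic; this is exactly where the restriction to \emph{basic} quantitative equations is used. Hence $(t_1^A(s\vec b),\dots,t_n^A(s\vec b))\in R^A$. Since $e$ is a $(\Pi+\Sigma)$-homomorphism, an induction on terms gives $e(t_k^A(s\vec b))=t_k^B(e s\vec b)=t_k^B(\vec b)$. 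Because $e$ preserves $R$ (and trivially preserves equality), we obtain $(t_1^B(\vec b),\dots,t_n^B(\vec b))\in R^B$, as required.

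There is no real obstacle. The one point to handle carefully is that $s$ is not a $\Sigma$-homomorphism, so we may only ever apply $s$ to $\Pi$-atomic information, and all term evaluation must be pushed through $e$. Infinitely many variables and premises cause no difficulty, since the argument works pointwise on assignments.
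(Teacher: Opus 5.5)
Your proof is correct and follows essentially the same route as the paper's: move the assignment $\vec b$ along the $\Pi$-homomorphism $s$, which preserves the purely $\Pi$-atomic antecedent, then apply the sentence in $A$, and push the conclusion back along the $(\Pi+\Sigma)$-homomorphism $e$ using $e\circ s=\id$. The only difference is cosmetic: you split into the $\mathbb{B}$ and $\mathbb{Q}$ cases, whereas the paper handles both at once, noting that the conclusion is a $\Pi$-atom when the sentence comes from $\mathbb{B}$.
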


\begin{proof}
    Consider a sentence in $\mathbb{B} + \mathbb{Q}$, say
\[
\forall \vec{x}.
  \Big( \bigwedge_{i} \atom_i(\vec{x}) \Big) \Rightarrow \atombis(\vec{x})
\]
where $\atombis$ is an atomic formula in $\Pi + \Sigma$
and the $\atom_i$'s are atomic formulas in $\Pi$
(note that $\atombis \in \Pi$ if the above sentence is in $\mathbb{B}$).
Given $\vec b \in B$ such that $B \models \bigwedge_{i} \atom_i(\vec b)$,
we have $A \models \bigwedge_i \atom_i(s(\vec b))$
since $s$ is a homomorphism of $\Pi$-structures.
Hence $A \models \atombis(s(\vec b))$ since $A \in \Mod_{\Pi + \Sigma}(\mathbb{B} + \mathbb{Q})$.
It follows that $B \models \atombis((e \comp s)(\vec b))$
since $e$ is a homomorphism of $(\Pi + \Sigma)$-structures,
and we are done since $e \comp s$ is the identity function by assumption.
\end{proof}

We now prove Proposition~\ref{prop:stict-monadicity-quantitativetheory:ladj}.
Let $f,g \colon A \to B$ be a parallel pair in $\Mod_{\Pi + \Sigma}(\mathbb{B} + \mathbb{Q})$,
and assume that the pair $U f, U g \colon U A \to U B$
has a split coequalizer in $\Mod_{\Pi}(\mathbb{B})$, say
\[
  e \colon U B \to X
\]
where $X \in \Mod_{\Pi}(\mathbb{B})$, together with $s \colon X \to U B$ and $t \colon U B \to U A$
similarly as in Equation~\eqref{eq:splitcoeq}.
We have to show that $U$ creates the coequalizing diagram $e f = e g$,
i.e.\ that there exists exactly one $C \in \Mod_{\Pi+\Sigma}(\mathbb{B}+\mathbb{Q})$
and $q \colon B \to C$ such that $U C = X$ and $U q = e$,
and moreover that the unique such $q \colon B \to C$ is a coequalizer of
$f$ and $g$ in $\Mod_{\Pi + \Sigma}(\mathbb{B} + \mathbb{Q})$ (\cite[Definition V.1]{maclane98book}).

We first equip $X$ with a $\Sigma$-structure,
using that $e$ is an absolute coequalizer.
Proceeding similarly as in~\cite[Remark 3.4(7)]{Adamek_Rosicky_1994},
we show that there is a unique $(\Pi + \Sigma)$-structure $C$ on $X$
such that $e \colon U B \to X$ lifts to a $(\Pi + \Sigma)$-homomorphism $q \colon B \to C$.
For each function symbol $\sigma \in \Sigma$ of arity $n$,
since $e \comp U f = e \comp U g$ in $\Mod_\Pi(\mathbb{B})$
we have $e \comp U f \comp \sigma^A = e \comp U g \comp \sigma^A$ in $\Set$.
Hence, in $\Set$,
\[
  e \comp \sigma^B \comp (U f \times \cdots \times U f)
  =
  e \comp \sigma^B \comp (U g \times \cdots \times U g)
\]
since $f$ and $g$ are $(\Pi + \Sigma)$-homomorphisms.
It follows that $e \comp \sigma^B$
coequalizes $U f \times \cdots \times U f$
and $U g \times \cdots \times U g$ in $\Set$.
But since $e$ is an absolute coequalizer, it is preserved by
applying the carrier functor to $\Set$ and then the $n$-fold product functor on $\Set$.
Hence $e \times \cdots \times e$
is a coequalizer of 
$U f \times \cdots \times U f$
and $U g \times \cdots \times U g$ in $\Set$.
As a result, there is a unique $\sigma^C$ such that
\begin{equation}
\label{eq:e:hom}
  e \comp \sigma^B
  =
  \sigma^C \comp
  (e \times \cdots \times e)
\end{equation}
This yields an interpretation of each function symbol $\sigma \in \Sigma$,
thus lifting the $\Pi$-structure on $X$ to a $(\Pi + \Sigma)$-structure
$C \in \Str(\Pi+\Sigma)$, with $U C = X$.
By construction of $C$, $e$ lifts to a homomorphism of $(\Pi + \Sigma)$-structures $q \colon B \to C$
with $U q = e$.

This $(\Pi + \Sigma)$-structure $C$ on $X$ is unique
such that $U C = X$ and $U q = e$.
This uses the faithfulness of $U$ and the fact that $e \comp s$ is the identity.
Given any $\Sigma$-structure $C'$ on $X$
and any $(\Pi + \Sigma)$-homomorphism $q' \colon B \to C'$
such that $U C' = X$ and $U q' = e$,
for each $\sigma \in \Sigma$,
we must have, in $\Set$,
\[
  \sigma^{C'} 
  =
  \sigma^{C'} \comp (e \times \cdots \times e) \comp (s \times \cdots \times s)
  =
  e \comp \sigma^B \comp (s \times \cdots \times s)
  =
  \sigma^C \comp (e \times \cdots \times e) \comp (s \times \cdots \times s)
  =
  \sigma^C
\]
while $q' = q$ since $U q' = e = U q$ with $U$ faithful.











Moreover, $C \in \Mod_{\Pi+\Sigma}(\mathbb{B}+\mathbb{Q})$
by Lemma~\ref{lem:horn:splitepi}.
We thus have unique $C \in \Mod_{\Pi+\Sigma}(\mathbb{B}+\mathbb{Q})$ and $q \colon B \to C$
such that $U C = X$ and $U q = e$.

It remains to show that $q \colon B \to C$ is a coequalizer of
$f$ and $g$ in $\Mod_{\Pi + \Sigma}(\mathbb{B} + \mathbb{Q})$.
This follows from easy manipulations, using that
$U q = e$ is a coequalizer in $\Mod_\Pi(\mathbb{B})$
and that $q\colon B \to C$ is a homomorphism in $\Mod_{\Pi + \Sigma}(\mathbb{B} + \mathbb{Q})$.
Let $p \colon B \to K$ in $\Mod_{\Pi + \Sigma}(\mathbb{B} + \mathbb{Q})$
such that $p f = p g$.
We thus have $(U p) \comp (U f) = (U p) \comp (U g)$,
and since $e$ is a coequalizer of $U f$ and $U g$ in $\Mod_\Pi(\mathbb{B})$,
there is a unique $h \colon U C \to U K$ in $\Mod_\Pi(\mathbb{B})$
such that $U p = h e$.

We show that $h$ is a homomorphism of $\Sigma$-structures.
Given $\sigma \in \Sigma$, we have to show that, in $\Set$,
\[
  h \comp \sigma^C
  =
  \sigma^K \comp (h \times \cdots \times h)
\]

\noindent
But by Equation~\eqref{eq:e:hom} above,
and since $p$ is a $\Sigma$-homomorphism,
we have, in $\Set$,
\[
\begin{array}{l l l}
  h \comp \sigma^C
& =
& h \comp \sigma^C \comp (e \times \cdots \times e) \comp (s \times \cdots \times s)
\\

& =
& h \comp e \comp \sigma^B
  \comp (s \times \cdots \times s)
\\

& =
& p \comp \sigma^B
  \comp (s \times \cdots \times s)
\\

& =
& \sigma^K \comp (p \times \cdots \times p)
  \comp (s \times \cdots \times s)
\\

& =
& \sigma^K \comp (h \times \cdots \times h) \comp (e \times \cdots \times e)
  \comp (s \times \cdots \times s)
\\

& =
& \sigma^K \comp (h \times \cdots \times h)
\end{array}
\]

In other words,
$h$ lifts to a $(\Pi+\Sigma)$-homomorphism $k \colon C \to K$ with $U k = h$.
Moreover $k$ is unique such that $k q = p$
since $k' q = p$ implies $U k' = h = U k$,
and thus $k' = k$ since $U$ is faithful.

This concludes the proof of
Proposition~\ref{prop:stict-monadicity-quantitativetheory:ladj}.
\qed\@

\section{Compact Theories of Quantitative Algebras}\label{sec:compact-definition}

The notion of compact theory of quantitative algebras has been introduced in \cite{entics:16876}.
The mathematical setting of \cite{entics:16876} is a special case of that considered in this paper (in Section~\ref{sec:quantitative-algebra-background}), where:
\begin{enumerate}
    \item The base theory $(\Pi, \mathbb{B})$ is fixed to be that of fuzzy relations, as described in Example~\ref{example:base-theory-fuzzy-relations}.
    \item The basic quantitative equations in $\mathbb{Q}$ are required to be \emph{antecedent complete} (see Definition~\ref{def:antecedente-complete} below). As we will discuss in Remark~\ref{remark:AC-not-true-restriction}, this is not truly a restriction, but the notion of \emph{antecedent complete basic quantitative equation} is key for the definition of compactness. 
\end{enumerate}
In what follows, we extend the definition of compact theories from \cite{entics:16876} to the more general setting of Section~\ref{sec:quantitative-algebra-background}. The definition coincides
with that of \cite{entics:16876} when the two restrictions above are adopted.

Recall from Definition~\ref{def:presentation-operation} that, for every $A\in \Mod_\Pi(\mathbb{B})$, we have a presentation $\Pres(A)=(\vec x, \Phi_A )$,
where $\vec x = (x_a)_a$ consists of one variable $x_a$ for each $a\in A$, and $\Phi_A$ is the set of all atomic $\Pi$-formulas that hold in $A$.
\begin{definition}[Antecedent Complete Basic Quantititive Equations]\label{def:antecedente-complete}
A basic quantitative equation 
\[
  \forall \vec{x}. \Big( \bigwedge_{i \in I} \atom_i(\vec{x}) \Big)
  \Rightarrow \atomext(\vec{x})
\]
is called \emph{antecedent complete} (or just ``AC'') if
$(\vec x, \{\atom_i(\vec{x})\}_{i \in I}) = \Pres(A)$,
for some $A\in  \Mod_\Pi(\mathbb{B})$. 
\end{definition}

In other words, in AC basic quantitative equations,
the variables are in one-to-one correspondance with the elements of a model
$A\in \Mod_\Pi(\mathbb{B})$ in the base category, 
and the the formulas in the antecedent are a complete description of the relations
which hold in $A$.
Note that both $\vec{x}$ and $\{\atom_i(\vec{x})\}_{i \in I}$
can be infinite.

\begin{remark}\label{remark:AC-not-true-restriction}
Alternatively, a basic quantitative equation is AC precisely when
\begin{enumerate}
    \item for every $\Pi$-atomic formula $\atom'(\vec{x})$, if $\atom'(\vec{x})$ is a 
    logical consequence of the assumptions $\{ \atom_i(\vec{x})\}_{i \in I}$ under the base theory $\mathbb{B}$, 
then it must be already contained in the assumptions, i.e.:
\[
\mathbb{B} + \{ \atom_i(\vec{x})\}_{i \in I}  \models \atom'(\vec{x}) \qquad \textnormal{implies} \qquad  \atom'(\vec{x}) \in \{ \atom_i(\vec{x})\}_{i \in I} 
\]


\item 
and in $\{ \atom_i(\vec{x})\}_{i \in I}$,
there is no formula $x=x'$ with $x, x'$ distinct variables.
\end{enumerate}
 Consequently, for each $T= \langle (\Pi,\mathbb{B}), (\Sigma,\mathbb{Q})\rangle$, one can always find a  $\mathbb{Q'}$, consisting only of AC basic quantitative equations,
such that $T'= \langle (\Pi,\mathbb{B}), (\Sigma,\mathbb{Q'})\rangle$ is equivalent to $T$, in the sense that $\Mod_{\Pi+\Sigma}(\mathbb{B}+\mathbb{Q})= \Mod_{\Pi+\Sigma}(\mathbb{B}+\mathbb{Q'})$. 
Such $\mathbb{Q'}$ is obtained by ``completing'' (i.e., adding all deducible atomic formulas to) the antecedents of all basic quantitative equations in $\mathbb{Q}$ under the theory $\mathbb{B}$,
and identifying variables $x=x'$ in the quantifications, if $x=x'$ is deducible.
\end{remark}
\begin{remark}
In \cite{entics:16876} and \cite{MioEA24}, since the base theory is fixed to be that of fuzzy relations (i.e., $\Mod_\Pi(\mathbb{B}) = \FRel$), AC basic quantitative equations 
are written as ``$\forall (X,d_X). \atomext(\vec{x})$'', to explicitly indicate that the atomic formulas of the antecedent are exactly those holding true in the fuzzy relation $(X,d_X)\in \FRel$.
\end{remark}

In order to introduce the notion of compact quantitative equational theory, from \cite{entics:16876}, we need the following technical definition.
\begin{definition}[Finitary Restriction of a Theory]\label{def:finitary-restriction}
Given a quantitative theory $T= \langle (\Pi,\mathbb{B}), (\Sigma,\mathbb{Q})\rangle$, we define its \emph{finitary restriction} to be:
\[
T_f= \langle (\Pi,\mathbb{B}_f), (\Sigma,\mathbb{Q}_f)\rangle
\]
where $\mathbb{B}_f \subseteq \mathbb{B}$ and $\mathbb{Q}_f\subseteq \mathbb{Q}$ are the subsets of formulas, in $\mathbb{B}$ and $\mathbb{Q}$ respectively,
that are expressed in ordinary first order logic, i.e.,
that involve only a finite number of variables and formulas in the antecedent.
\end{definition}

\begin{example}
Consider the theory $T= \langle (\Pi,\mathbb{B}), (\Sigma,\mathbb{Q})\rangle$ of quantitative convex algebras over fuzzy relations, from Example~\ref{example:quantitative-convex-algebras}.
In this case $\mathbb{B}_f$ is a proper subset of $\mathbb{B}$, obtained by removing all instances of the \emph{order completeness} axioms (see Example~\ref{example:base-theory-fuzzy-relations}) which are infinitary. 
By contrast, $\mathbb{Q}_f= \mathbb{Q}$, because all basic quantitative equations in $\mathbb{Q}$ are finitary.
\end{example}

We are now ready to formally define compact quantitative theories.
\begin{definition}[Compact Quantitative Theories]\label{compactness_definition}
A theory $T= \langle (\Pi,\mathbb{B}), (\Sigma,\mathbb{Q})\rangle$ is \emph{compact} if, for all AC basic quantitative equations $H$ of the form $\forall \vec{x}. \Big( \bigwedge_{i \in I} \alpha_i(\vec{x}) \Big) \Rightarrow \atomext(\vec{x})$,
the following holds:
\[
  \mathbb{B}+\mathbb{Q}\models H
  \qquad\Longrightarrow\qquad
  \mathbb{B}_f+\mathbb{Q}_f\models H
\]
i.e., if $H$ is a logical consequence of the theory $\mathbb{B}+\mathbb{Q}$
then it must also be a consequence of the weaker theory
$\mathbb{B}_f+\mathbb{Q}_f\subseteq  \mathbb{B}+\mathbb{Q}$
(note that the other direction $(\Longleftarrow)$ is always trivially true).
\end{definition}

\begin{remark}
Note that, in the definition of compact theory, the conservativity requirement is only relative to AC basic quantitative equations $H$. 
It may therefore be possible, for a theory $T$,
to be compact and that the implication
$\mathbb{B}+\mathbb{Q}\models H \not\Rightarrow \mathbb{B}_f+\mathbb{Q}_f\models H$
fails for some non--AC basic quantitative equation $H$.
\end{remark}
\begin{remark}\label{remark:finite-proofs}
By ordinary logical reasoning, the following holds:
\[
   \mathbb{B}_f+\mathbb{Q}_f \models
   \forall \vec{x}. \Big( \bigwedge_{i \in I} \atom_i(\vec{x}) \Big)
   \Rightarrow \atomext(\vec{x})
   \qquad\Longleftrightarrow\qquad
   \mathbb{B}_f+\mathbb{Q}_f + \{ \atom(\vec{x})\}_{i \in I} \models  \atombis(\vec{x})
\]

\noindent
with $\vec{x}$ fresh in $\mathbb{B}_f$ and $\mathbb{Q}_f$, and
where free variables in sequents are interpreted as being universally quantified,
as customary.
Note that while $\vec{x}$ and the index set $I$ can be infinite,
the right-hand side expresses logical entailment between ordinary first order logic
formulas, because all function and relation symbols $\Pi$ and $\Sigma$ have finite arity.
Therefore, by the compactness theorem of first order logic,
the right-hand side holds if and only if there exists a \emph{finite} proof
of the sequent
$\mathbb{B}_f+\mathbb{Q}_f + \{ \alpha_i(\vec{x})\}_{i \in I} \vdash \atomext(\vec{x})$,
necessarily using only a finite subset of the formulas on the left-hand side
of the sequent.
In \cite{entics:16876},
this ``finite proof'' formulation for the compactness of a theory is adopted,
whence the name due to the involvement of the compactness theorem of first order logic.
It is equivalent to the formulation in Definition~\ref{compactness_definition}. 
\end{remark}

\begin{example}
The main result of \cite{entics:16876} is that the quantitative equational
theory of convex algebras (in Example~\ref{example:quantitative-convex-algebras})
is compact. 
\end{example}

\section{Main Result}\label{sec:main-result}

Recall from Definition~\ref{def:finitary-restriction} that,
for any quantitative theory $T= \langle (\Pi,\mathbb{B}), (\Sigma,\mathbb{Q})\rangle$
we have a corresponding subtheory, its finitary restriction
$T_f= \langle (\Pi,\mathbb{B}_f), (\Sigma,\mathbb{Q}_f)\rangle$. 
Since
$\mathbb{B}_f \subseteq \mathbb{B}$
and
$\mathbb{B}_f + \mathbb{Q}_f \subseteq \mathbb{B} + \mathbb{Q}$,
the category $\Mod_\Pi (\mathbb{B})$ is a full subcategory of $\Mod_\Pi (\mathbb{B}_f)$
and
$\Mod_\Pi (\mathbb{B} + \mathbb{Q})$ is a full subcategory of 
$\Mod_\Pi (\mathbb{B}_f + \mathbb{Q}_f)$
(see Theorem~\ref{thm:horn:adj}(\ref{item:horn:adj:forget})).
Denote with $G$ and $I$ the inclusion functors
\[
  G\colon \Mod_\Pi (\mathbb{B}) \hookrightarrow \Mod_\Pi (\mathbb{B}_f)
  \qquad\text{and}\qquad 
  I\colon
  \Mod_{\Pi+\Sigma} (\mathbb{B}+\mathbb{Q})
  \hookrightarrow
   \Mod_{\Pi+\Sigma} (\mathbb{B}_f+\mathbb{Q}_f).
\]

Let 
$\left( T:=U_TF_T \colon \Mod_\Pi(\mathbb{B})\to \Mod_\Pi (\mathbb{B}) , \eta^{T}, \mu^{T}\right)$ 
be the monad arising from the adjunction of
Theorem~\ref{thm:stict-monadicity-quantitativetheory} and also,
as any monad,
from the Eilenberg--Moore adjunction (see Proposition~\ref{prop:key-property-comparison}).
The two are connected by the comparison functor $K_T$ (see Section~\ref{sec:monad-background-comparison}) which, by Theorem~\ref{thm:stict-monadicity-quantitativetheory}, is an isomorphism of categories. Similarly for the monad $T_f$. We thus have the following functors involved:
\[\begin{tikzcd}
	{\Mod_{\Pi+\Sigma}(\mathbb{B}+\mathbb{Q})} & {\Mod_{\Pi}(\mathbb{B)}} && {\Mod_\Pi(\mathbb{B}_f)} & {\Mod_{\Pi+\Sigma}(\mathbb{B}_f+\mathbb{Q}_f)} \\
	\\
	{\EM(T)} &&&& {\EM(T_f)}
	\arrow["{F_T}", shift left=3, tail reversed, no head, from=1-1, to=1-2]
	\arrow["{K_T}", from=1-1, to=3-1]
	\arrow["{U_T}", tail reversed, no head, from=1-2, to=1-1]
	\arrow["G", hook, from=1-2, to=1-4]
	\arrow["{U^{\star}_T}", shift left=3, tail reversed, no head, from=1-2, to=3-1]
	\arrow["{F_{T_f}}", shift left=3, from=1-4, to=1-5]
	\arrow["{F^{\star}_{T_f}}"{pos=0.7}, from=1-4, to=3-5]
	\arrow["{U_{T_{f}}}", from=1-5, to=1-4]
	\arrow["{K_{T_f}}", from=1-5, to=3-5]
	\arrow["{F^\star_T}", tail reversed, no head, from=3-1, to=1-2]
	\arrow["{U^{\star}_{T_f}}"{pos=0.3}, shift left=3, from=3-5, to=1-4]
  \arrow["I", hook, bend left=25, swap, from=1-1, to=1-5]
\end{tikzcd}\]
with the following commutations:
$K_T F_T = F^\star_T$,
$K_{T_f} F_{T_f} = F^\star_{T_f}$,
$U^\star_T K_T = U_T$,
$U^\star_{T_f} K_{T_f} = U_{T_f}$,
by Proposition~\ref{prop:useful-fact-EM},
and $U_{T_f} I  = G U_T$ (i.e., $I$ lifts $G$)
because all four functors are defined as forgetful functors.
The last observation has the following consequence.

\begin{proposition}\label{unique-nat-trans}
There exists a unique natural $\hat{\theta}\colon  F_{T_f} G \to I F_T$
such that $U_{T_f}(\hat{\theta}) \circ \eta^{T_f} G = G(\eta^T)$.
\end{proposition}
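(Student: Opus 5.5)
The plan is to obtain $\hat\theta$ from the universal property of the free functor $F_{T_f}\dashv U_{T_f}$, with unit $\eta^{T_f}$. Fix $A\in\Mod_\Pi(\mathbb{B})$. The free object $F_{T_f}(GA)$ comes with the universal arrow $\eta^{T_f}_{GA}\colon GA\to U_{T_f}F_{T_f}GA$. The target is $IF_TA\in\Mod_{\Pi+\Sigma}(\mathbb{B}_f+\mathbb{Q}_f)$, and since $I$ lifts $G$ we have $U_{T_f}IF_TA = GU_TF_TA = G(TA)$. Hence $G(\eta^T_A)\colon GA\to G(TA)$ is a morphism $GA\to U_{T_f}(IF_TA)$ in $\Mod_\Pi(\mathbb{B}_f)$. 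By the universal property there is a unique morphism
\[
  \hat\theta_A\colon F_{T_f}GA\to IF_TA
  \qquad\text{with}\qquad
  U_{T_f}(\hat\theta_A)\circ\eta^{T_f}_{GA}=G(\eta^T_A).
\]
This is the required equation, read componentwise. Uniqueness of the whole family, and so of $\hat\theta$ as a natural transformation, follows at once: any natural $\hat\theta'$ satisfying the equation gives, at each $A$, a morphism with the same defining property, so $\hat\theta'_A=\hat\theta_A$.

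It remains to check naturality of the family $(\hat\theta_A)_A$. Let $f\colon A\to B$ in $\Mod_\Pi(\mathbb{B})$. We must show
\[
  IF_T(f)\circ\hat\theta_A=\hat\theta_B\circ F_{T_f}G(f)\colon F_{T_f}GA\to IF_TB .
\]
Both sides are morphisms out of the free object $F_{T_f}GA$, so by the uniqueness part of the universal property it suffices that they agree after applying $U_{T_f}$ and precomposing with $\eta^{T_f}_{GA}$.

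For the left side, write $U_{T_f}IF_T(f)=GU_TF_T(f)=G(Tf)$. Then
\[
  G(Tf)\circ G(\eta^T_A)=G(\eta^T_B\circ f)
\]
by naturality of $\eta^T$. For the right side, $U_{T_f}F_{T_f}G(f)\circ\eta^{T_f}_{GA}=\eta^{T_f}_{GB}\circ G(f)$ by naturality of $\eta^{T_f}$. Composing with $U_{T_f}(\hat\theta_B)$ then gives $G(\eta^T_B)\circ G(f)$ by the defining equation of $\hat\theta_B$. The two sides coincide, which proves naturality.

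There is no real obstacle here. The one point needing care is the identification $U_{T_f}IF_T=GU_TF_T=GT$, including on morphisms. This holds strictly, not merely up to isomorphism, because all these functors are forgetful or inclusion functors, as noted just before the statement. That strictness is exactly what lets $G(\eta^T_A)$ be fed into the universal property and what makes the naturality computation go through.
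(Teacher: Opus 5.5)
Your proposal is correct and follows the paper's own argument: both use $U_{T_f}I = GU_T$ to retype $G(\eta^T)$ as a map $G\Rightarrow U_{T_f}IF_T$ and then obtain $\hat\theta$, together with its uniqueness, from the universal property of $F_{T_f}\dashv U_{T_f}$. The only difference is that you check naturality of the componentwise transposes explicitly, a standard step the paper leaves implicit.
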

\begin{proof}
The natural transformation $G(\eta^{T})$ has type $G\to G(U_T F_T)$ but, since $I$ lifts $G$, its type is equivalently $G\to U_{T_f} I  F_T$. 
Therefore, by the universal property of the adjunction $F_{T_f}\dashv U_{T_f}$,
there exists 
a unique $\hat{\theta}\colon  F_{T_f} G \to I F_T$
satisfying $U_{T_f}(\hat{\theta}) \circ \eta^{T_f}G = G(\eta^T)$.  
\end{proof}

The following is our main result.

\begin{definition}\label{def:compatibility-of-nat-trans}
  We say that a monad morphism $(G,\gamma\colon T_f G\Rightarrow G T)$ (see Definition~\ref{def:monad_morphism}) is \emph{compatible} with the inclusion functor $I$ if the corresponding canonical functor $\Gamma_\gamma\colon\EM(T)\to \EM(T_f)$ (see Proposition~\ref{prop:from_morphism_to_functor})
  coincides with $I$ under the comparison functors, i.e., $I =  K^{-1}_{T_f}\Gamma_\gamma K_T $. 
\end{definition}

\begin{theorem}\label{thm:main-result}
For every quantitative theory $T=\langle (\Pi,\mathbb{B}), (\Sigma,\mathbb{Q})\rangle$, the following are equivalent:
\begin{enumerate}
\item
\label{item:main-result:compact}
The theory $T$ is compact (see Definition~\ref{compactness_definition}).

\item
\label{item:main-result:theta}
The natural transformation $\hat{\theta}\colon  F_{T_f} G \to I F_T$
in Proposition~\ref{unique-nat-trans} is a natural isomorphism. 

\item
\label{item:main-result:monad}
There is a natural isomorphism $\gamma$ yielding a monad restriction $(G,\gamma\colon T_f G\Rightarrow G T)$ such that $\gamma$ (which is essentially unique by Proposition \ref{prop:restriction_is_unique}) is compatible with the inclusion functor $I$. 
\end{enumerate}
\end{theorem}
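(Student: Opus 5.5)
I will prove the cycle (\ref{item:main-result:compact})$\Leftrightarrow$(\ref{item:main-result:theta})$\Leftrightarrow$(\ref{item:main-result:monad}). The first equivalence is logical and goes through presentations. The second is categorical and goes through Corollary~\ref{cor:restriction_lemma_to_be_used}.

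\emph{(\ref{item:main-result:compact})$\Leftrightarrow$(\ref{item:main-result:theta}).} Fix $A\in\Mod_\Pi(\mathbb{B})$ and its positive diagram $\Pres(A)=(\vec x,\Phi_A)$. Because $GA$ has the same carrier and relations as $A$, it has the same presentation.

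Apply Proposition~\ref{prop:main-fact-presentations} twice, with parameters $\eta^{T_f}_{GA}(a)$ and $\eta^T_A(a)$ respectively:
\begin{itemize}
\item for the pair $\mathbb{B}_f\subseteq\mathbb{B}_f+\mathbb{Q}_f$, the structure $F_{T_f}GA$ satisfies an atomic $(\Pi+\Sigma)$-formula $\atombis(\vec x)$ iff $\mathbb{B}_f+\mathbb{Q}_f\models\forall\vec x.\bigwedge\Phi_A\Rightarrow\atombis$;
\item for the pair $\mathbb{B}\subseteq\mathbb{B}+\mathbb{Q}$, the structure $F_TA$ satisfies $\atombis(\vec x)$ iff $\mathbb{B}+\mathbb{Q}\models\forall\vec x.\bigwedge\Phi_A\Rightarrow\atombis$.
\end{itemize}
The sentences $\forall\vec x.\bigwedge\Phi_A\Rightarrow\atombis$ are exactly the AC basic quantitative equations. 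Hence compactness says precisely that these two sets of atomic formulas coincide, the inclusion from the first set into the second being automatic.

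By the defining equation of Proposition~\ref{unique-nat-trans}, $\hat\theta_A$ is the unique homomorphism of expanded structures sending $\eta^{T_f}_{GA}(a)\mapsto\eta^T_A(a)$. Both structures are generated by their parameters, by Proposition~\ref{prop:main-fact-presentations}(\ref{item:main-fact-presentations:pres}) and Definition~\ref{def:horn:pres}(\ref{item:horn:pres:gen}).
\begin{itemize}
\item If compactness holds, Lemma~\ref{lem:pres:unique-iso} (with $B=IF_TA$) makes $\hat\theta_A$ an isomorphism.
\item Conversely, if $\hat\theta_A$ is an isomorphism, then it and its inverse preserve all atomic formulas on the parameters. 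So the two satisfaction sets coincide, which gives compactness.
\end{itemize}
Naturality of $\hat\theta$ is given, so componentwise invertibility suffices.

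\emph{(\ref{item:main-result:theta})$\Leftrightarrow$(\ref{item:main-result:monad}).} Set $\Gamma:=K_{T_f}IK_T^{-1}\colon\EM(T)\to\EM(T_f)$. It lifts $G$, since $U^\star_{T_f}\Gamma=U_{T_f}IK_T^{-1}=GU_TK_T^{-1}=GU^\star_T$.

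The key computation identifies the canonical transformation $\widehat\gamma^\Gamma$ of Proposition~\ref{prop:from_functor_to_morphism} with $K_{T_f}\hat\theta$. Here $F^\star_{T_f}G=K_{T_f}F_{T_f}G$ and $\Gamma F^\star_T=K_{T_f}IF_T$. Both $\widehat\gamma^\Gamma$ and $K_{T_f}\hat\theta$ are $T_f$-algebra maps out of the free algebra $F^\star_{T_f}GA$, and precomposing either with the unit $\eta^{T_f}_{GA}$ gives $G\eta^T_A$. For $\widehat\gamma^\Gamma$ this follows from the formula $\xi\circ T_f(G\eta^T)\circ\eta^{T_f}=\xi\circ\eta^{T_f}\circ G\eta^T=G\eta^T$, using the algebra unit law. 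By the uniqueness in Proposition~\ref{unique-nat-trans}, the two agree. I expect this bookkeeping of comparison functors to be the main (though routine) obstacle.

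With this identification the two directions go as follows.
\begin{itemize}
\item \emph{(\ref{item:main-result:theta})$\Rightarrow$(\ref{item:main-result:monad}).} If $\hat\theta$ is an isomorphism, then so is $\widehat\gamma^\Gamma$. Corollary~\ref{cor:restriction_lemma_to_be_used} then yields a restriction $\gamma=\gamma^\Gamma=U_{T_f}\hat\theta$. It is compatible with $I$ because $\Gamma_{\gamma^\Gamma}=\Gamma$ by Proposition~\ref{prop:construction-inversion}, i.e.\ $K^{-1}_{T_f}\Gamma_\gamma K_T=I$.
\item \emph{(\ref{item:main-result:monad})$\Rightarrow$(\ref{item:main-result:theta}).} Compatibility gives $\Gamma_\gamma=\Gamma$, hence $\gamma=\gamma^{\Gamma}$ by Proposition~\ref{prop:construction-inversion}. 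By Proposition~\ref{prop:isomoprhism-preservation}, $\widehat\gamma^\Gamma=K_{T_f}\hat\theta$ is an isomorphism. Since $K_{T_f}$ is an isomorphism of categories, $\hat\theta$ is an isomorphism.
\end{itemize}
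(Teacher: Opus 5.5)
Your proposal is correct and follows essentially the same route as the paper. The (\ref{item:main-result:compact})$\Leftrightarrow$(\ref{item:main-result:theta}) argument via Proposition~\ref{key_lemma_3} and Lemma~\ref{lem:pres:unique-iso} is the paper's, and your identification $\widehat\gamma^\Gamma = K_{T_f}\hat\theta$ for $\Gamma = K_{T_f} I K_T^{-1}$ (uniqueness in Proposition~\ref{unique-nat-trans} via unit compatibility) is precisely the content of Lemma~\ref{lemma-simplification-proofs-2-and-3}, with the only difference being that you conclude through Corollary~\ref{cor:restriction_lemma_to_be_used} and Proposition~\ref{prop:isomoprhism-preservation} where the paper uses that $U_{T_f}$ reflects isomorphisms.
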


In particular, from
\(
\text{(\ref{item:main-result:compact})}
\Rightarrow
\text{(\ref{item:main-result:monad})}
\)
in Theorem~\ref{thm:main-result}
we obtain the following corollary.

\begin{corollary}\label{cor:consequence-of-main-theorem}
If the theory $T$ is compact then the monad $T$ is a restriction of $T_f$ along $G$. 
\end{corollary}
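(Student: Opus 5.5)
I will prove (\ref{item:main-result:compact})$\Leftrightarrow$(\ref{item:main-result:theta}) by a logical argument based on presentations, and (\ref{item:main-result:theta})$\Leftrightarrow$(\ref{item:main-result:monad}) by a categorical argument based on Corollary~\ref{cor:restriction_lemma_to_be_used} and the comparison isomorphisms.

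\emph{(\ref{item:main-result:compact})$\Leftrightarrow$(\ref{item:main-result:theta}).} Fix $A\in\Mod_\Pi(\mathbb{B})$ with $\Pres(A)=(\vec x,\Phi_A)$. Since $A$ is a model of $\mathbb{B}$, its positive diagram is the same whether $A$ is viewed in $\Mod_\Pi(\mathbb{B})$ or, via $G$, in $\Mod_\Pi(\mathbb{B}_f)$. Hence Proposition~\ref{prop:main-fact-presentations}(\ref{item:main-fact-presentations:bridge}) applies to both adjunctions and gives, for every atomic $(\Pi+\Sigma)$-formula $\atombis$,
\[
(F_{T_f}GA,\eta^{T_f}_{GA})\models\atombis \iff \mathbb{B}_f+\mathbb{Q}_f\models\forall\vec x.\textstyle\bigwedge\Phi_A\Rightarrow\atombis,
\]
\[
(F_TA,\eta^T_A)\models\atombis \iff \mathbb{B}+\mathbb{Q}\models\forall\vec x.\textstyle\bigwedge\Phi_A\Rightarrow\atombis.
\]
The right-hand sides are exactly the AC basic quantitative equations with antecedent $\Pres(A)$, and every AC BQE has this form for some $A$. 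So compactness says precisely that, for every $A$, the two expanded structures satisfy the same atomic formulas; the implication from the first to the second always holds.

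By the defining equation of $\hat\theta$, the component $\hat\theta_A$ is the unique homomorphism of expanded structures $(F_{T_f}GA,\eta^{T_f})\to(IF_TA,\eta^T)$. If compactness holds, then $F_TA$ is generated by $\eta^T_A$, by Proposition~\ref{prop:main-fact-presentations}(\ref{item:main-fact-presentations:pres}). Lemma~\ref{lem:pres:unique-iso}, applied to the $T_f$ adjunction with $B=IF_TA$, then shows that $\hat\theta_A$ is an isomorphism. Conversely, if $\hat\theta_A$ is an isomorphism of structures fixing the parameters, both directions of satisfaction of atomic formulas transfer, which yields compactness. One point needs checking here: an isomorphism in $\Mod_{\Pi+\Sigma}(\mathbb{B}_f+\mathbb{Q}_f)$ is a bijection that also reflects relations, since its inverse is a homomorphism.

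\emph{(\ref{item:main-result:theta})$\Leftrightarrow$(\ref{item:main-result:monad}).} Set $\Gamma:=K_{T_f}IK_T^{-1}\colon\EM(T)\to\EM(T_f)$. It lifts $G$, because $U^\star_{T_f}K_{T_f}I=U_{T_f}I=GU_T=GU^\star_TK_T$. I then expect the canonical transformation of Proposition~\ref{prop:from_functor_to_morphism} to satisfy $\widehat\gamma^\Gamma=K_{T_f}\hat\theta$, using $F^\star_{T_f}=K_{T_f}F_{T_f}$ and $\Gamma F^\star_T=K_{T_f}IF_T$. To check this, apply $U^\star_{T_f}$: the underlying map $\xi_A\circ T_f(G\eta^T_A)$ precomposed with $\eta^{T_f}$ equals $G\eta^T_A$ by naturality of $\eta^{T_f}$ and the unit law. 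The uniqueness in Proposition~\ref{unique-nat-trans} then gives the equality.

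Since $K_{T_f}$ is an isomorphism, (\ref{item:main-result:theta}) holds iff $\widehat\gamma^\Gamma$ is invertible. Corollary~\ref{cor:restriction_lemma_to_be_used} then yields the restriction $\gamma:=\gamma^\Gamma$, and by Proposition~\ref{prop:construction-inversion} we get $\Gamma_\gamma=\Gamma$, which is compatibility with $I$. For the converse, given a compatible restriction $\gamma$, compatibility means $\Gamma_\gamma=\Gamma$. By Propositions~\ref{prop:construction-inversion} and~\ref{prop:isomoprhism-preservation}, $\widehat\gamma^\Gamma$ is an isomorphism, and hence so is $\hat\theta$.

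I expect the main obstacle to be the identification $\widehat\gamma^\Gamma=K_{T_f}\hat\theta$. It needs careful bookkeeping of the comparison functors and of the algebra structure $\xi_A=U_{T_f}\epsilon^{T_f}_{IF_TA}$.
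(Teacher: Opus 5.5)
Your proposal is correct and follows the paper's route: compactness makes $\hat\theta_A$ invertible through the presentation facts and Lemma~\ref{lem:pres:unique-iso}, and $\Gamma = K_{T_f} I K_T^{-1}$ then yields the $I$-compatible restriction $\gamma^\Gamma$ (only $(1)\Rightarrow(2)\Rightarrow(3)$ is needed for the corollary). Your direct identification $\widehat\gamma^\Gamma = K_{T_f}\hat\theta$, obtained from the unit computation and the uniqueness in Proposition~\ref{unique-nat-trans}, is exactly the content of the paper's Lemma~\ref{lemma-simplification-proofs-2-and-3}, which invokes the unit compatibility of $\gamma$ instead, so only the bookkeeping differs.
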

\begin{remark}\label{remark:open-problem}
Implication
\(
  \text{(\ref{item:main-result:monad})}
  \Rightarrow
  \text{(\ref{item:main-result:compact})}
\)
in Theorem~\ref{thm:main-result}
does not seem to
readily imply the converse of the implication in Corollary~\ref{cor:consequence-of-main-theorem}.
It is indeed \emph{a priori} conceivable that monad restrictions
$(G,\gamma\colon T_f G\Rightarrow G T)$ may exist and yet none of them
are compatible with $I$
(Proposition~\ref{prop:restriction_is_unique} states that they are all equal
up to isomorphism).
We do not know if the converse of the implication in Corollary~\ref{cor:consequence-of-main-theorem} holds.
\end{remark}

The rest of this paper is devoted to the proof of Theorem~\ref{thm:main-result}. 
We prove direction 
\(
\text{(\ref{item:main-result:compact})}
\Rightarrow
\text{(\ref{item:main-result:theta})}
\)
in Subsection~\ref{proof-direction-1-to-2},
direction
\(
\text{(\ref{item:main-result:theta})}
\Rightarrow
\text{(\ref{item:main-result:compact})}
\)
in Subsection~\ref{proof-direction-2-to-1},
direction
\(
\text{(\ref{item:main-result:monad})}
\Rightarrow
\text{(\ref{item:main-result:theta})}
\)
in Subsection~\ref{proof-direction-3-to-2}
and direction
\(
\text{(\ref{item:main-result:theta})}
\Rightarrow
\text{(\ref{item:main-result:monad})}
\)
in Subsection~\ref{proof-direction-2-to-3}.
Before doing so, we state as propositions the following instances,
to the present situation and notation, of previously established facts.

\begin{proposition}[from Corollary~\ref{cor:restriction_lemma_to_be_used}]
  \label{key_lemma_1}
$T$ is a monad restriction of $T_f$ along $G$ if, and only if, there exists a functor $\Gamma\colon \EM(T)\to \EM(T_f)$ lifting $G$ (i.e., $ U^{\star}_{T_f}\circ \Gamma = G\circ U^\star_{T}$) and such that the canonical natural transformation
 $\widehat{\gamma}^{\Gamma}\colon F^\star_{T_f}\circ G \Rightarrow \Gamma \circ F^\star_T$ (as specified in Proposition~\ref{prop:from_functor_to_morphism}) 
is a natural isomorphism.
\end{proposition}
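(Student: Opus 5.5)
The plan is to obtain this as a direct instance of Corollary~\ref{cor:restriction_lemma_to_be_used}. I take $\mathcal{C} := \Mod_\Pi(\mathbb{B})$ with the monad $N := T$, and $\mathcal{D} := \Mod_\Pi(\mathbb{B}_f)$ with the monad $M := T_f$. The functor is the inclusion $G\colon \Mod_\Pi(\mathbb{B}) \hookrightarrow \Mod_\Pi(\mathbb{B}_f)$. Under this dictionary, the Eilenberg--Moore data $F_N, U_N, F_M, U_M$ of the corollary become $F^\star_T, U^\star_T, F^\star_{T_f}, U^\star_{T_f}$. Condition~(1) of the corollary then reads ``$T$ is a monad restriction of $T_f$ along $G$'', and condition~(2) is exactly the existence of $\Gamma$ lifting $G$ (i.e.\ $U^\star_{T_f}\Gamma = G U^\star_T$) with $\widehat{\gamma}^{\Gamma}$ a natural isomorphism.

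First I check the only hypothesis of the corollary that is not mere notation: that $G$ is an embedding of categories. I use Theorem~\ref{thm:horn:adj}(\ref{item:horn:adj:forget}) with $\Sig = \Sigbis = \Pi$ and $\Horn = \mathbb{B}_f \subseteq \Hornbis = \mathbb{B}$. It says that the forgetful functor $\Mod_\Pi(\mathbb{B}) \to \Mod_\Pi(\mathbb{B}_f)$, which is precisely $G$, is the inclusion of a full subcategory. An inclusion of a full subcategory is full and faithful, and it is injective on objects because it is the identity on the underlying structures.

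Second, I confirm that the monads in play are the ones the corollary refers to. The monad $T = U_T F_T$ arising from the quantitative-algebra adjunction of Theorem~\ref{thm:stict-monadicity-quantitativetheory} is the same monad as the one induced by its Eilenberg--Moore adjunction $F^\star_T \dashv U^\star_T$ (Section~\ref{sec:monad-background-comparison}). The same holds for $T_f$. Hence $\EM(T)$, $\EM(T_f)$ and the canonical transformation of Proposition~\ref{prop:from_functor_to_morphism} are literally those appearing in Corollary~\ref{cor:restriction_lemma_to_be_used}.

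With these two checks done, the equivalence follows verbatim from the corollary. There is no real obstacle. The only point needing care is bookkeeping: the roles must be assigned correctly, with $N = T$ as the ``smaller'' monad on the subcategory and $M = T_f$ on the ambient category, so that $\gamma \colon T_f G \Rightarrow G T$ has the right direction.
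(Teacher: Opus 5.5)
Your proposal is correct and matches the paper, which gives no separate argument and obtains this proposition simply as the instance of Corollary~\ref{cor:restriction_lemma_to_be_used} with $\mathcal{C}=\Mod_\Pi(\mathbb{B})$, $N=T$, $\mathcal{D}=\Mod_\Pi(\mathbb{B}_f)$, $M=T_f$ and $G$ the inclusion. Your checks are exactly the bookkeeping the paper leaves implicit: that $G$ is an embedding, via Theorem~\ref{thm:horn:adj}(\ref{item:horn:adj:forget}), and that the roles are assigned so that $\gamma\colon T_f G\Rightarrow GT$ has the right direction.
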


\begin{proposition}[from
Proposition~\ref{prop:main-fact-presentations}(\ref{item:main-fact-presentations:pres})]
\label{key_lemma_2}
For all $A\in \Mod_{\Pi}(\mathbb{B})$, let $\Pres(A)=(\vec x , \Phi_A)$. Then: 
\begin{enumerate}
\item
\label{item:key_lemma_2:inf}
$\big( F_T(A), (\eta^{T}_A(a) \mid a\in A) \big)$
is presented by
$\Pres(A)$ in $\Mod_{\Pi+\Sigma}(\mathbb{B}+\mathbb{Q})$, and

\item
\label{item:key_lemma_2:fin}
$\big( F_{T_f}(GA), (\eta^{T_f}_{GA}(a) \mid a\in A) \big)$
is presented by
$\Pres(A)$ in $\Mod_{\Pi+\Sigma}(\mathbb{B}_f+\mathbb{Q}_f)$.
\end{enumerate}
\end{proposition}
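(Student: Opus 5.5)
Both items are direct instances of Proposition~\ref{prop:main-fact-presentations}(\ref{item:main-fact-presentations:pres}), applied to two different pairs of Horn theories. The plan is to check that the hypotheses of that proposition hold in each case, and that the presentation it produces is literally $\Pres(A)$.

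For item~(\ref{item:key_lemma_2:inf}), instantiate Proposition~\ref{prop:main-fact-presentations} with $\Sig := \Pi$, $\Horn := \mathbb{B}$, $\Sigbis := \Pi+\Sigma$ and $\Hornbis := \mathbb{B}+\mathbb{Q}$.
\begin{itemize}
\item The inclusions $\Sig \sle \Sigbis$ and $\Horn \sle \Hornbis$ are immediate.
\item The adjunction given by Theorem~\ref{thm:horn:adj} is exactly $F_T \dashv U_T$ from Theorem~\ref{thm:stict-monadicity-quantitativetheory}, whose unit is $\eta^T$. Both are the left adjoint to the same forgetful functor, and we fix this choice of left adjoint.
\item Proposition~\ref{prop:main-fact-presentations}(\ref{item:main-fact-presentations:pres}) then states that $\big(F_T(A), (\eta^T_A(a) \mid a \in A)\big)$ is presented by $\Pres(A)$ in $\Mod_{\Pi+\Sigma}(\mathbb{B}+\mathbb{Q})$.
\item As in Remark~\ref{remark:eta-parameters}, each $\eta^T_A(a)$ is regarded as an element of $F_T(A)$.
\end{itemize}

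For item~(\ref{item:key_lemma_2:fin}), instantiate the same proposition with $\Horn := \mathbb{B}_f$ and $\Hornbis := \mathbb{B}_f + \mathbb{Q}_f$, with the same signatures, and apply it to the object $GA \in \Mod_\Pi(\mathbb{B}_f)$. The relevant adjunction is $F_{T_f} \dashv U_{T_f}$ with unit $\eta^{T_f}$. This gives that $\big(F_{T_f}(GA), (\eta^{T_f}_{GA}(c) \mid c \in GA)\big)$ is presented by $\Pres(GA)$ in $\Mod_{\Pi+\Sigma}(\mathbb{B}_f+\mathbb{Q}_f)$.

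The one point needing care is to identify $\Pres(GA)$ with $\Pres(A)$. Since $G$ is the inclusion of a full subcategory (Theorem~\ref{thm:horn:adj}(\ref{item:horn:adj:forget}) with $\Sig=\Sigbis=\Pi$), the following hold.
\begin{itemize}
\item $GA$ and $A$ are the same $\Pi$-structure, with the same carrier and the same interpretation of every $R \in \Pi$.
\item The variables $(x_c)_{c \in GA}$ are therefore exactly $(x_a)_{a\in A}$, and the parameters $\eta^{T_f}_{GA}(c)$ range over $a \in A$ as written in the statement.
\item Satisfaction of an atomic $\Pi$-formula (a relation $R(x_{a_1},\dots,x_{a_n})$ or an equality $x_a = x_{a'}$) under the assignment $x_a \mapsto a$ depends only on that $\Pi$-structure. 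Hence $\Phi_{GA} = \Phi_A$.
\end{itemize}

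No real obstacle is expected. The only potential subtlety is bookkeeping: making sure the left adjoints chosen in Theorem~\ref{thm:horn:adj} coincide with $F_T$ and $F_{T_f}$ together with their units. This is resolved by fixing those adjunctions as the ones used throughout the section.
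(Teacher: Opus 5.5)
Your proposal is correct and is essentially the paper's proof. Item~(\ref{item:key_lemma_2:inf}) is a direct instantiation of Proposition~\ref{prop:main-fact-presentations}(\ref{item:main-fact-presentations:pres}), and item~(\ref{item:key_lemma_2:fin}) is the analogous instantiation at $GA$, followed by the observation that, since $G$ is an inclusion, $A$ and $GA$ are the same $\Pi$-structure, so $\Pres(GA)$ and $\Pres(A)$ (and the indexing of the parameters) coincide.
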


\begin{proof}
The first item is a direct instantiation of the first item in
Proposition~\ref{prop:main-fact-presentations} by taking
$\Mod_\Sig(\Horn)=\Mod_{\Pi}(\mathbb{B})$
and $\Mod_\Sigbis(\Hornbis)=\Mod_{\Pi+\Sigma}(\mathbb{B}+\mathbb{Q})$.
For the second item, a similar instantiation formally gives: 
\begin{center}
 ``$ \big( F_{T_f}(GA), (\eta^{T_f}_{GA}(a') \mid a'\in GA) \big)$ is presented by $\Pres(GA)$ in $\Mod_{\Pi+\Sigma}(\mathbb{B}_f+\mathbb{Q}_f)$''
\end{center}
but since the inclusion $G$ is a forgetful functor (i.e., $A$ and $GA$ are the same $\Pi$-structure), the notation $a'\in GA$ and $a\in A$, and similarly $\Pres(A)$ and $\Pres(GA)$, coincide.
Hence the statement~(\ref{item:key_lemma_2:fin}).
\end{proof}

\begin{proposition}[from
Proposition~\ref{prop:main-fact-presentations}(\ref{item:main-fact-presentations:bridge})]
\label{key_lemma_3}
For every AC quantitative equation (see Definition~\ref{def:antecedente-complete})
$\forall \vec{x}. \Big( \bigwedge_{i \in I} \atom_i(\vec{x}) \Big) \Rightarrow \atomext(\vec{x})$, where $(\vec x, \{ \atom_i(\vec{x})\}_{i \in I}) = \Pres(A)$ for some $A\in\Mod_{\Pi}(\mathbb{B})$, 
it holds that:
\begin{enumerate}
\item $
  \big( F_T(A), (\eta^{T}_A(a))_{a\in A} \big) \models \atomext(\vec{x})
  \qquad \Longleftrightarrow \qquad
  \mathbb{B}+\mathbb{Q} \models \forall \vec{x}. \Big( \bigwedge_{i \in I} \atom_i(\vec{x}) \Big) \Rightarrow \atomext(\vec{x})$, and
\item $
   \big( F_{T_f}(GA), (\eta^{T_f}_{GA}(a))_{a\in A} \big) \models \atomext(\vec{x})
  \qquad \Longleftrightarrow \qquad
  \mathbb{B}_f+\mathbb{Q}_f \models \forall \vec{x}. \Big( \bigwedge_{i \in I} \atom_i(\vec{x}) \Big) \Rightarrow \atomext(\vec{x})$.
  \end{enumerate}
\end{proposition}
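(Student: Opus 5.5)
The plan is to obtain both items as direct instantiations of Proposition~\ref{prop:main-fact-presentations}(\ref{item:main-fact-presentations:bridge}), in the same way that Proposition~\ref{key_lemma_2} was obtained from item~(\ref{item:main-fact-presentations:pres}) of that proposition. The first step is to use the AC hypothesis. By Definition~\ref{def:antecedente-complete}, $(\vec x,\{\atom_i(\vec x)\}_{i\in I})=\Pres(A)=(\vec x,\Phi_A)$. Hence the conjunction $\bigwedge_{i\in I}\atom_i(\vec x)$ is literally $\bigwedge\Phi_A$, and the variables $\vec x$ are indexed by the elements $a\in A$. The formula $\atomext(\vec x)$ is an atomic $(\Pi+\Sigma)$-formula in these variables, so it is an admissible $\atombis$ in Proposition~\ref{prop:main-fact-presentations}(\ref{item:main-fact-presentations:bridge}).

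For item~1, I will instantiate Proposition~\ref{prop:main-fact-presentations} with $\Sig=\Pi$, $\Horn=\mathbb{B}$, $\Sigbis=\Pi+\Sigma$ and $\Hornbis=\mathbb{B}+\mathbb{Q}$. The hypotheses $\Sig\sle\Sigbis$ and $\Horn\sle\Hornbis$ hold trivially, and the adjunction of Theorem~\ref{thm:horn:adj} is exactly $F_T\dashv U_T$ with unit $\eta^T$. Reading the parameters $\eta^T_A(a)$ as elements of $F_T(A)$ (Remark~\ref{remark:eta-parameters}) then gives the stated equivalence verbatim.

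For item~2, I will instantiate with $\Horn=\mathbb{B}_f$, $\Hornbis=\mathbb{B}_f+\mathbb{Q}_f$, the same signatures, and the object $GA$. This object lies in $\Mod_\Pi(\mathbb{B}_f)$ because $\mathbb{B}_f\sle\mathbb{B}$. The instantiation produces the equivalence for $\big(F_{T_f}(GA),(\eta^{T_f}_{GA}(a'))_{a'\in GA}\big)$ with antecedent $\bigwedge\Phi_{GA}$.

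The only point needing care, and the sole mild obstacle, is to identify $\Pres(GA)$ with $\Pres(A)$. This holds because $G$ is an inclusion: $A$ and $GA$ are the same $\Pi$-structure. The positive diagram of Definition~\ref{def:presentation-operation} depends only on the $\Pi$-structure, not on which base theory it is regarded as a model of, so $\Phi_{GA}=\Phi_A=\{\atom_i\}_{i\in I}$. This is exactly the observation already used in the proof of Proposition~\ref{key_lemma_2}(\ref{item:key_lemma_2:fin}), and with it item~2 follows.
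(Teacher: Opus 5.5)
Your proposal is correct and follows essentially the same route as the paper: both items are direct instantiations of Proposition~\ref{prop:main-fact-presentations}(\ref{item:main-fact-presentations:bridge}) with the appropriate theories, using the AC hypothesis to identify the antecedent with $\Phi_A$. As in the paper, you identify $\Pres(GA)$ with $\Pres(A)$ because $G$ is merely an inclusion, exactly as in the proof of Proposition~\ref{key_lemma_2}.
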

\begin{proof}
On top of the same argument used in the proof of Proposition~\ref{key_lemma_2},
it is sufficient to observe that quantifying over all $A\in \Mod_{\Pi}(\mathbb{B})$
and atomic formulas $\atomext(\vec{x})$,
as in the statement of
Proposition~\ref{prop:main-fact-presentations}(\ref{item:main-fact-presentations:bridge}),
is equivalent to quantifying over all AC basic quantitative equations
(see Definition~\ref{def:antecedente-complete}).
\end{proof}


\subsection{Proof of Direction
\(
  \text{(\ref{item:main-result:compact})}
  \Rightarrow
  \text{(\ref{item:main-result:theta})}
\)
in Theorem~\ref{thm:main-result}}
\label{proof-direction-1-to-2}


We need to prove that, for each $A\in\Mod_\Pi(\mathbb{B})$, the component  $\widehat{\theta}_A$ of the natural transformation $\widehat{\theta}$ of Proposition~\ref{unique-nat-trans} is an
isomorphism.
Denote with $\theta = U_{T_f}(\widehat{\theta})$ the underlying morphism which satisfies
$\theta_A \circ \eta^{T_f}_{GA}  = G(\eta^{T}_A)$.
We are going to prove that $\theta_A$ is an isomorphism using
Lemma~\ref{lem:pres:unique-iso}. This, in turn, will imply that 
$\widehat{\theta}_A$ is an isomorphism, because $U_{T_f}$ reflects isomorphisms
(Theorem~\ref{thm:stict-monadicity-quantitativetheory}(\ref{item:reflection-of-iso})).

To invoke Lemma~\ref{lem:pres:unique-iso}, we need the following four observations.
The assumption that the theory $T$ is compact is used in Observation 4.

\emph{Observation 1.} As noted in Proposition~\ref{unique-nat-trans},
the natural transformation $G(\eta^T_A)$
has, equivalently, type $GA \to  U_{T_f} (IF_T (A))$. This means that $G(\eta^T_A)(a)$ is a point of the
model
$I F_T(A) \in \Mod_{\Pi+\Sigma}(\mathbb{B}_f+\mathbb{Q}_f)$,
for each $a\in GA$.
Or, equivalently, for each $a\in A$, because  $G$ is a forgetful functor (i.e., $A$ and $GA$ are the same structures). 
Therefore we have the following expanded $(\Pi+\Sigma)$-structure:
\[
\left( I F_T(A), (G(\eta^T_A)(a) \mid a\in A) \right)
\]

\emph{Observation 2.} 
The property $\theta_A \circ \eta^{T_f}_{GA}  = G(\eta^{T}_A)$ means exactly that
${\theta}_A$ is a homomorphism
of expanded structures, i.e., it preserves all the parameters:
\[
  {\theta}_A\colon
  \big( U_{T_f} F_{T_f}(GA)  , (\eta^{T_f}_{GA}(a)\mid a\in A) \big)
  ~\to~
  \big( U_{T_f} I F_T(A) , (G(\eta^T_A)(a) \mid a\in A) \big)
\]

\emph{Observation 3.}
The parameters  $(G(\eta^T_A)(a) \mid a\in A)$ generates $I F_T(A)$.
This is because, by Proposition~\ref{key_lemma_2}(\ref{item:key_lemma_2:inf}),
the $(\Pi+\Sigma)$-structure
$(F_T(A), (\eta^T_A(a)\mid a\in A))$ is presented by $\Pres(A)$ in $\Mod_{\Pi+\Sigma}(\mathbb{B}+\mathbb{Q})$ and, by Definition~\ref{def:horn:pres}, this implies that
$(\eta^T_A(a)\mid a\in A)$ generate $F_T(A)$. But since $I$ and $G$ are defined as forgetful functors (i.e., $IF_T(A)$ and $F_T(A)$ are the same structures and $\eta^T_A(a)$ and $G(\eta^T_A)(a)$
are the same parameters) we deduce that $(G(\eta^T_A)(a) \mid a\in A)$ generate $I F_T(A)$.\\

\emph{Observation 4.} By Definition~\ref{compactness_definition}, the assumption that $T$ is compact means that for all AC basic quantitative equations $H$ of the form  
$\forall \vec{x}. \Big( \bigwedge_{i \in I} \atom_i(\vec{x}) \Big) \Rightarrow \atomext(\vec{x})$, where $(\vec x, \{ \atom_i(\vec{x})\}_{i \in I}) = \Pres(B)$
for some $B\in\Mod_{\Pi}(\mathbb{B})$, it holds that $\mathbb{B}_f+\mathbb{Q}_f\models H  \Rightarrow \mathbb{B} +\mathbb{Q} \models H$.
In fact, since the reverse implication trivially holds (because $\mathbb{B}_f+\mathbb{Q}_f\subseteq \mathbb{B}+\mathbb{Q})$, we have:
\[
  \mathbb{B}_f+\mathbb{Q}_f\models H
  \qquad\Longleftrightarrow\qquad
   \mathbb{B} +\mathbb{Q} \models H.
\]

In particular, by fixing $B$ to be $A$, and using Proposition~\ref{key_lemma_3}, the assumption implies that
\[
 \big( F_{T_f}(GA), (\eta^{T_f}_{GA}(a))_{a\in A} \big) \models \atomext(\vec{x})
 \qquad
 \Longleftrightarrow 
\qquad 
\big( F_T(A), (\eta^{T}_A(a))_{a\in A} \big) \models \atomext(\vec{x})
\]
holds for all atomic $\atomext(\vec{x})$ in the signature $\Pi+\Sigma$. Again noting that $G$ and $I$ are forgetful functors, we can rewrite the left-hand side and obtain, for all atomic $\atomext(\vec{x})$ in the signature $\Pi+\Sigma$:
\[
  \big( IF_T(A), (G(\eta^{T}_A)(a))_{a\in A} \big) \models \atomext(\vec{x})
  \qquad\Longleftrightarrow\qquad
  \big( F_{T_f}(GA), (\eta^{T_f}_{GA}(a))_{a\in A} \big) \models \atomext(\vec{x}).
 \]

We can now conclude that $\theta_A$ is an isomorphism by instantiating the statement of Lemma~\ref{lem:pres:unique-iso} as follows:
 $\Sig = \Pi$, $\Sigbis =\Pi+\Sigma$, $\Horn=\mathbb{B}_f$, $\Hornbis= \mathbb{B}_f+  \mathbb{Q}_f$, $F=F_{T_f}$, $U=U_{T_f}$, $B= IF_T(A)$ and 
 the list of parameters $(b_a \mid a\in A)$ as  $(G(\eta^{T}_A)(a) \mid a\in A)$. The required assumptions of Lemma~\ref{lem:pres:unique-iso} are
 satisfied by Observations 1,3 and 4. Therefore $\theta_A$ which, by Observation 2,
is a homomorphism of expanded structures, is the only such one and is an isomorphism.

\subsection{Proof of Direction
\(
  \text{(\ref{item:main-result:theta})}
  \Rightarrow
  \text{(\ref{item:main-result:compact})}
\)
in Theorem~\ref{thm:main-result}}
\label{proof-direction-2-to-1}

To prove that the theory $T$ is compact (see Definition~\ref{compactness_definition})
we need to show that for every AC basic quantitative equation $H$ of the form
\(
  \forall \vec{x}. \Big( \bigwedge_{i \in I} \atom_i(\vec{x}) \Big)
  \Rightarrow \atomext(\vec{x})
\),
it holds that
\[
  \mathbb{B}_f + \mathbb{Q}_f\models H
  \qquad\Longrightarrow\qquad
  \mathbb{B} + \mathbb{Q}\models H.
\tag{$\dagger$}
\]
From Definition~\ref{def:antecedente-complete} of AC basic quantitative equations, the variables $\vec{x}$ and formulas $\atom_i(\vec{x})$ come from the presentation of 
some $A\in\Mod_\Pi(\mathbb{B})$, i.e., $\Pres(A)=( \vec{x} , \Phi_A )$, and $\Phi_A =  \{ \atom_i(\vec{x})\}_{i \in I}$. 

Hence, using Proposition~\ref{key_lemma_3}, we can rewrite ($\dagger$) as:
\[
  \big( F_T(A), (\eta^{T}_A(a))_{a\in A} \big) \models \atomext(\vec{x})
  \qquad\Longrightarrow\qquad
  \big( F_{T_f}(GA), (\eta^{T_f}_{GA}(a))_{a\in A} \big) \models \atomext(\vec{x}).
\]
Recall that $I$ and $G$ are defined as forgetful functors, so that $IF_T(A)$ and $F_T(A)$ are the same $(\Pi+\Sigma)$--structure and $G(\eta^{T}_{A})(a)$ and  $\eta^{T}_A(a)$ the same parameters.
Hence we can rewrite the above as: 
\[
  \big( IF_T(A),(G(\eta^{T}_{A})(a))_{a\in A} \big) \models \atomext(\vec{x})
  \qquad\Longrightarrow\qquad
  \big( F_{T_f}(GA), (\eta^{T_f}_{GA}(a))_{a\in A} \big) \models \atomext(\vec{x}).
\]
By assumption~(\ref{item:main-result:theta}),
we have a natural isomorphism
$\hat{\theta}\colon F_{T_f}G  \Rightarrow IF_T$
satisfying $U_{T_f}(\hat{\theta})\circ \eta^{T_f}G = G(\eta^{T})$.
This property means, equivalently, that the component
$\hat{\theta}_A\colon F_{T_f}(GA)\to IF_T(A)$
preserves the parameters and therefore it is an isomorphism of expanded structures: 
\[
  \hat{\theta}_A\colon
  \big( F_{T_f}(GA), (\eta^{T_f}_{GA}(a))_{a\in A} \big)
  \stackrel{\cong}{\longrightarrow}
  \big( IF_T(A),(G(\eta^{T}_{A})(a))_{a\in A} \big) 
\]
Hence ($\dagger$) holds. $\qed$

\subsection{Proof of Direction
$\text{(\ref{item:main-result:monad})} \Rightarrow \text{(\ref{item:main-result:theta})}$
in Theorem~\ref{thm:main-result}}
\label{proof-direction-3-to-2}

We first establish the following lemma, also used in the proof of direction
\(
  \text{(\ref{item:main-result:theta})}
  \Rightarrow
  \text{(\ref{item:main-result:monad})}
\)
in Subsection~\ref{proof-direction-2-to-3}.

\begin{lemma}\label{lemma-simplification-proofs-2-and-3}
Let $\gamma\colon T_{f}G\Rightarrow G T$ be a monad morphism
(not necessarily an isomorphism) compatible with~$I$.
Let $\hat{\theta}\colon  F_{T_f} G \to I F_T$ be the (uniquely determined)
natural transformation from Proposition~\ref{unique-nat-trans} satisfying $U_{T_f}(\hat{\theta}) \circ \eta^{T_f}G = G(\eta^T)$. Then it holds that:
$U_{T_f}(\hat{\theta}) = \gamma$.
\end{lemma}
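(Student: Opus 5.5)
The plan is to use the uniqueness part of Proposition~\ref{unique-nat-trans}. It suffices to exhibit a natural transformation $\hat{\gamma}\colon F_{T_f}G \Rightarrow IF_T$ with $U_{T_f}(\hat{\gamma}) = \gamma$ and $U_{T_f}(\hat{\gamma})\circ \eta^{T_f}G = G(\eta^T)$. Uniqueness then forces $\hat{\gamma} = \hat{\theta}$, hence $U_{T_f}(\hat{\theta}) = \gamma$.

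\textbf{Construction of $\hat{\gamma}$.} Proposition~\ref{prop:from_functor_to_morphism} and Proposition~\ref{prop:construction-inversion}, applied to the functor $\Gamma_\gamma\colon \EM(T)\to\EM(T_f)$, give a natural transformation $\widehat{\gamma}^{\Gamma_\gamma}\colon F^\star_{T_f}G \Rightarrow \Gamma_\gamma F^\star_T$ with $U^\star_{T_f}(\widehat{\gamma}^{\Gamma_\gamma}) = \gamma^{\Gamma_\gamma} = \gamma$. We now transport it along the comparison isomorphisms. Compatibility gives $\Gamma_\gamma K_T = K_{T_f} I$. Combined with $F^\star_T = K_T F_T$, this yields $\Gamma_\gamma F^\star_T = \Gamma_\gamma K_T F_T = K_{T_f} I F_T$. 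Also $F^\star_{T_f} G = K_{T_f}F_{T_f}G$ by Proposition~\ref{prop:useful-fact-EM}. Hence
\[
\hat{\gamma} := K_{T_f}^{-1}\big(\widehat{\gamma}^{\Gamma_\gamma}\big)\colon F_{T_f}G \Rightarrow IF_T
\]
is a well-typed natural transformation. Since $U_{T_f} = U^\star_{T_f}K_{T_f}$, we get $U_{T_f}(\hat{\gamma}) = U^\star_{T_f}(\widehat{\gamma}^{\Gamma_\gamma}) = \gamma$.

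\textbf{Unit condition.} It remains to check $\gamma \circ \eta^{T_f}G = G(\eta^T)$. This is precisely the unit compatibility axiom of Definition~\ref{def:monad_morphism} for the monad morphism $(G,\gamma)$, read with $N = T$ on $\Mod_\Pi(\mathbb{B})$ and $M = T_f$ on $\Mod_\Pi(\mathbb{B}_f)$. Here we use that the unit of $T_f$ is $\eta^{T_f}$ in both the Eilenberg--Moore and the original adjunction. The uniqueness clause of Proposition~\ref{unique-nat-trans} then gives $\hat{\gamma} = \hat{\theta}$, and so $U_{T_f}(\hat{\theta}) = \gamma$.

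\textbf{Main obstacle.} The only delicate step is the bookkeeping of types, namely checking that $K^{-1}_{T_f}$ sends $\widehat{\gamma}^{\Gamma_\gamma}$ to something with domain $F_{T_f}G$ and codomain $IF_T$. This is exactly where compatibility with $I$ is used, as $\Gamma_\gamma K_T = K_{T_f}I$. A cheaper alternative avoids $\hat{\gamma}$ altogether. One checks directly that each $\gamma_A$ underlies a $T_f$-algebra morphism $F_{T_f}GA \to IF_TA$, using the multiplication axiom and the fact that the algebra structure of $IF_TA$ is $G\mu^T_A\circ\gamma_{TA}$ by compatibility. The conclusion then follows from the universal property of $F_{T_f}$.
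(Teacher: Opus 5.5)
Your proposal is correct and follows the paper's proof essentially step for step: transport the canonical $\widehat{\gamma}\colon F^\star_{T_f}G\Rightarrow\Gamma_\gamma F^\star_T$ along $K_{T_f}^{-1}$, using compatibility with $I$ to get type $F_{T_f}G\Rightarrow IF_T$, note that its underlying transformation is $\gamma$, and conclude by unit compatibility and the uniqueness in Proposition~\ref{unique-nat-trans}. Your ``cheaper alternative'' is also sound, since it just unpacks what Propositions~\ref{prop:from_functor_to_morphism} and~\ref{prop:construction-inversion} provide, with the algebra-morphism condition reducing exactly to multiplication compatibility.
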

\begin{proof}
The \emph{unit compatibility} property of the monad morphism $\gamma$ (see Definition~\ref{def:monad_morphism}) states that:
\[
\gamma\circ (\eta^{T_{f}} G) = G(\eta^{T})  
\tag{Unit compatibility of $\gamma$}
\] 
From $\gamma$, by application of Proposition~\ref{prop:from_morphism_to_functor} we obtain a canonical functor $\Gamma_\gamma\colon \EM(T)\to \EM(T_f)$ lifting $G$ (i.e., $GU^\star_{T} = U^\star_{T_f}\Gamma_\gamma$).
And in turn, combining Proposition~\ref{prop:from_functor_to_morphism} and Proposition~\ref{prop:construction-inversion}, from $\Gamma_\gamma$ we obtain a canonical natural transformation $\widehat{\gamma}\colon F^{\star}_{T_f}G\Rightarrow  \Gamma_\gamma F^\star_{T}$ such that $U^\star_{T_{f}}(\widehat{\gamma})=\gamma$.

The assumption that $\gamma$ is compatible with $I$ is, by Definition \ref{def:compatibility-of-nat-trans}, that $K^{-1}_{T_f}\Gamma_\gamma K_T = I$.

Let $\widehat{\tau}= K^{-1}_{T_f}\widehat{\gamma}\colon F_{T_f}G\Rightarrow  I F_{T}$ be the natural transformation corresponding to $\widehat{\gamma}$ via the comparison functor.
Note that, by definition of $K_{T_f}$ on morphisms (see Section~\ref{sec:monad-background-comparison}) we have that $U_{T_f}(\widehat{\tau}) = U^\star_{T_f}(\widehat{\gamma})=\gamma$.
This implies that the equation $U_{T_f}(\widehat{\tau}) \circ \eta^{T_f}G = G(\eta^T)$ holds. 
By Proposition~\ref{unique-nat-trans}, there exists only one such natural transformation.
Hence $\widehat{\tau}=\widehat{\theta}$.
And therefore $\gamma = U_{T_f}(\widehat{\tau}) = U_{T_f}(\widehat{\theta})$.
\end{proof}

We now return to the proof of Direction
$\text{(\ref{item:main-result:monad})} \Rightarrow \text{(\ref{item:main-result:theta})}$
in Theorem~\ref{thm:main-result}. From the assumption, we have a monad morphism
$\gamma\colon T_{f} G \Rightarrow G T$
which is a natural isomorphism and such that $\gamma$ is compatible with $I$.
By Lemma~\ref{lemma-simplification-proofs-2-and-3},
$\gamma$ is the underlying morphism of $\widehat{\theta}$.
Reasoning similarly as in the beginning of~\S\ref{proof-direction-1-to-2},
we get that $\widehat{\theta}$ is a natural isomorphism.

\subsection{Proof of Direction
$\text{(\ref{item:main-result:theta})}\Rightarrow\text{(\ref{item:main-result:monad})}$
in Theorem~\ref{thm:main-result}}
\label{proof-direction-2-to-3}

We need to show that there is a monad restriction $(G,\gamma)$,
i.e.\ a monad morphism that is a natural isomorphism,
such that $\gamma$ is compatible with $I$. 

First, observe that it suffices to show that a monad morphism $\gamma$
(not necessarily an isomorphism) such that $\gamma$ is compatible with $I$ exists. 
Indeed, if such $\gamma$ exists, by Lemma~\ref{lemma-simplification-proofs-2-and-3}
it is the underlying morphism of~$\widehat{\theta}$.
Assumption~(\ref{item:main-result:theta})
is that $\widehat{\theta}$ is a natural isomorphism and this implies that $\gamma$ is also a natural isomorphism and therefore $\gamma$ is a monad restriction.

To construct the monad morphism $\gamma$ compatible with $I$ we proceed as follows.
Define the functor $\Gamma\colon \EM(T)\to EM(T_f)$ as the functor $\Gamma =  K_{T_f}\circ I\circ (K_T)^{-1}$, corresponding to $I$ via the comparison functors.
Since $I$ is a lifting of $G$ (i.e., $G U_T = U_{T_f}I$), using Proposition~\ref{prop:useful-fact-EM} we derive that also $\Gamma$ is a lifting of $G$  (i.e., $G U^\star_T = U^\star_{T_f}\Gamma$).
Therefore from $\Gamma$, using Proposition~\ref{prop:from_functor_to_morphism} we obtain a canonical monad morphism $\gamma^{\Gamma}$. 
Define $\gamma$ to be $\gamma^{\Gamma}$. Proposition~\ref{prop:construction-inversion} says that the canonical functor (from Proposition~\ref{prop:from_morphism_to_functor}) 
$\Gamma_{\gamma}$ induced by $\gamma$ is exactly the functor $\Gamma$ we started with. Therefore $\gamma$ is compatible with $I$, 
because $\Gamma_{\gamma} = \Gamma = K_{T_f}\circ I\circ (K_T)^{-1}$. $\qed$

\section{Conclusions and Future Work}\label{sec:conclusions}

In this work,
we have taken the purely logical definition of compact quantitative equational theory
of~\cite{entics:16876}, extended to the more general 
setting of quantitative algebra described in Section~\ref{sec:quantitative-algebra-background}, and provided a categorical characterisation formulated as Theorem~\ref{thm:main-result}.

As mentioned in Remark~\ref{remark:open-problem},
we do not know if the reverse of the implication
in Corollary~\ref{cor:consequence-of-main-theorem} holds.
We leave this as a technical open question for future investigation.



Another interesting direction of future work is towards generalisations. Our notion of compactness (Definition~\ref{compactness_definition}),
expressed as the logical implication 
($
 \mathbb{B}+\mathbb{Q}\models H \Rightarrow \mathbb{B}_f+\mathbb{Q}_f\models H  
$), is based on comparing the theory of $\mathbb{B}+\mathbb{Q}$ to its finitary restriction $\mathbb{B}_f+\mathbb{Q}_f$. A more general approach can, potentially, be based on the notion of
\emph{interpretation} of a theory in another theory
(cf e.g.~\cite{kamsma23aml}).
If the quantitative equational theory  $T= \langle (\Pi,\mathbb{B}), (\Sigma,\mathbb{Q})\rangle$ 
can be interpreted in a finitary (i.e., only having finite formulas) $T' =  \langle (\Pi',\mathbb{B}'), (\Sigma',\mathbb{Q}')\rangle$,
 then we might define 
$\mathbb{B}+\mathbb{Q}$ to be compact \emph{via the interpretation} if 
($
\mathbb{B}+\mathbb{Q}\models H \Rightarrow \mathbb{B}'+\mathbb{Q}' \models \iota(H)
$) where $\iota(H)$ is the interpretation of the basic quantitative equation $H$.
This formulation adds more flexibility, still retaining the key 
fact that validity of judgements $\mathbb{B}'+\mathbb{Q}' \models \iota(H)$
is witnessed by finite proofs due to the compactness theorem of first order logic
(see Remark~\ref{remark:finite-proofs}).
This generalisation might be developed in the context of Morita equivalence
(in the sense of~\cite[D1.4]{johnstone02book}).

In another orthogonal direction, one focusing on the general connection between logic and monad restrictions rather than on finite proofs, 
one could relax the requirement and instead compare $\mathbb{B}+\mathbb{Q}$ to its $\kappa$-ary restriction $\mathbb{B}_\kappa+\mathbb{Q}_\kappa$, defined as expected for any regular cardinal $\kappa$.

\newpage

\bibliographystyle{entics}
\bibliography{biblio}

\begin{thebibliography}{10}
\providecommand{\url}[1]{\texttt{#1}}
\providecommand{\urlprefix}{ }
\providecommand{\eprint}[2][]{\url{#2}}

\bibitem{adamek22}
Ad\'amek, J., \emph{Varieties of quantitative algebras and their monads}, in:
  \emph{37th Annual ACM/IEEE Symposium on Logic in Computer Science (LICS'22)},
  ACM (2022).
\newline\urlprefix\url{https://doi.org/10.1145/3531130.3532405}

\bibitem{adv23}
Ad{\'{a}}mek, J., M.~Dost{\'{a}}l and J.~Velebil, \emph{Strongly finitary
  monads for varieties of quantitative algebras}, in: \emph{10th Conference on
  Algebra and Coalgebra in Computer Science (CALCO'23)}, volume 270 of
  \emph{LIPIcs}, pages 10:1--10:14, Schloss Dagstuhl - Leibniz-Zentrum
  f{\"{u}}r Informatik (2023).
\newline\urlprefix\url{https://doi.org/10.4230/LIPICS.CALCO.2023.10}

\bibitem{Adamek_Rosicky_1994}
Adamek, J. and J.~Rosicky, \emph{Locally Presentable and Accessible
  Categories}, London Mathematical Society Lecture Note Series, Cambridge
  University Press (1994).
\newline\urlprefix\url{https://doi.org/10.1017/CBO9780511600579}

\bibitem{DBLP:journals/lmcs/BacciBLM18}
Bacci, G., G.~Bacci, K.~G. Larsen and R.~Mardare, \emph{A complete quantitative
  deduction system for the bisimilarity distance on markov chains}, Log.
  Methods Comput. Sci. \textbf{14} (2018).
\newline\urlprefix\url{https://doi.org/10.23638/LMCS-14(4:15)2018}

\bibitem{bmpp18}
Bacci, G., R.~Mardare, P.~Panangaden and G.~Plotkin, \emph{An algebraic theory
  of markov processes}, in: \emph{33rd Annual ACM/IEEE Symposium on Logic in
  Computer Science (LICS'18)}, pages 679--688, ACM (2018).
\newline\urlprefix\url{https://doi.org/10.1145/3209108.3209177}

\bibitem{DBLP:journals/corr/abs-2302-01224}
Bacci, G., R.~Mardare, P.~Panangaden and G.~D. Plotkin, \emph{Propositional
  logics for the lawvere quantale}, in: M.~Kerjean and P.~B. Levy, editors,
  \emph{Proceedings of the 39th Conference on the Mathematical Foundations of
  Programming Semantics, {MFPS} XXXIX, Indiana University, Bloomington, IN,
  USA, June 21-23, 2023}, volume~3 of \emph{{EPTICS}}, EpiSciences (2023).
\newline\urlprefix\url{https://doi.org/10.46298/ENTICS.12292}

\bibitem{DBLP:journals/lmcs/BacciMPP24}
Bacci, G., R.~Mardare, P.~Panangaden and G.~D. Plotkin, \emph{Sum and tensor of
  quantitative effects}, Log. Methods Comput. Sci. \textbf{20} (2024).
\newline\urlprefix\url{https://doi.org/10.46298/LMCS-20(4:9)2024}

\bibitem{DBLP:journals/jacm/BacciMPP26}
Bacci, G., R.~Mardare, P.~Panangaden and G.~D. Plotkin, \emph{Quantitative
  equational logic}, J. {ACM} \textbf{73}, pages 26:1--26:52 (2026).
\newline\urlprefix\url{https://doi.org/10.1145/3818603}

\bibitem{BarrWells2005}
Barr, M. and C.~Wells, \emph{Toposes, triples and theories}, Reprints in Theory
  and Applications of Categories pages 1--288 (2005). Originally published by
  Springer-Verlag, New York, 1985.
\newline\urlprefix\url{http://www.tac.mta.ca/tac/reprints/articles/12/tr12.pdf}

\bibitem{DAngeloEA24}
D'Angelo, K., S.~Gurke, J.~M. Kirss, B.~K{\"{o}}nig, M.~Najafi, W.~Rozowski and
  P.~Wild, \emph{Behavioural metrics: Compositionality of the kantorovich
  lifting and an application to up-to techniques}, in: R.~Majumdar and
  A.~Silva, editors, \emph{35th International Conference on Concurrency Theory,
  {CONCUR} 2024, September 9-13, 2024, Calgary, Canada}, volume 311 of
  \emph{LIPIcs}, pages 20:1--20:19, Schloss Dagstuhl - Leibniz-Zentrum
  f{\"{u}}r Informatik (2024).
\newline\urlprefix\url{https://doi.org/10.4230/LIPICS.CONCUR.2024.20}

\bibitem{FordEA21}
Ford, C., S.~Milius and L.~Schr{\"{o}}der, \emph{Monads on categories of
  relational structures}, in: F.~Gadducci and A.~Silva, editors, \emph{9th
  Conference on Algebra and Coalgebra in Computer Science, {CALCO} 2021, August
  31 to September 3, 2021, Salzburg, Austria}, volume 211 of \emph{LIPIcs},
  pages 14:1--14:17, Schloss Dagstuhl - Leibniz-Zentrum f{\"{u}}r Informatik
  (2021).
\newline\urlprefix\url{https://doi.org/10.4230/LIPICS.CALCO.2021.14}

\bibitem{ForsterEA25}
Forster, J., L.~Schr{\"{o}}der, P.~Wild, H.~Beohar, S.~Gurke, B.~K{\"{o}}nig
  and K.~Messing, \emph{Quantitative graded semantics and spectra of
  behavioural metrics}, in: J.~Endrullis and S.~Schmitz, editors, \emph{33rd
  {EACSL} Annual Conference on Computer Science Logic, {CSL} 2025, February
  10-14, 2025, Amsterdam, Netherlands}, volume 326 of \emph{LIPIcs}, pages
  33:1--33:21, Schloss Dagstuhl - Leibniz-Zentrum f{\"{u}}r Informatik (2025).
\newline\urlprefix\url{https://doi.org/10.4230/LIPICS.CSL.2025.33}

\bibitem{Gavazzo18}
Gavazzo, F., \emph{Quantitative behavioural reasoning for higher-order
  effectful programs: Applicative distances}, in: \emph{Logic in Computer
  Science, {LICS} 2018}, pages 452--461, {ACM} (2018).
\newline\urlprefix\url{https://doi.org/10.1145/3209108.3209149}

\bibitem{DBLP:journals/pacmpl/GavazzoF23}
Gavazzo, F. and C.~D. Florio, \emph{Elements of quantitative rewriting}, Proc.
  {ACM} Program. Lang. \textbf{7}, pages 1832--1863 (2023).
\newline\urlprefix\url{https://doi.org/10.1145/3571256}

\bibitem{GoncharovEA23}
Goncharov, S., D.~Hofmann, P.~Nora, L.~Schr{\"{o}}der and P.~Wild,
  \emph{Kantorovich functors and characteristic logics for behavioural
  distances}, in: O.~Kupferman and P.~Sobocinski, editors, \emph{Foundations of
  Software Science and Computation Structures -- 26th International Conference,
  FoSSaCS 2023, Held as Part of the European Joint Conferences on Theory and
  Practice of Software, {ETAPS} 2023, Paris, France, April 22-27, 2023,
  Proceedings}, volume 13992 of \emph{LNCS}, pages 46--67, Springer (2023).
\newline\urlprefix\url{https://doi.org/10.1007/978-3-031-30829-1\_3}

\bibitem{hodges93book}
Hodges, W., \emph{Model Theory}, volume~42 of \emph{Encyclopedia of Mathematics
  and its Applications}, Cambridge University Press (1993).
\newline\urlprefix\url{https://doi.org/10.1017/CBO9780511551574}

\bibitem{johnstone02book}
Johnstone, P., \emph{{Sketches of an Elephant: A Topos Theory Compendium}},
  Oxford Logic Guides, Clarendon Press (2002).
\newline\urlprefix\url{https://doi.org/https://doi.org/10.1093/oso/9780198515982.001.0001}

\bibitem{JurkaEA24}
Jurka, J., S.~Milius and H.~Urbat, \emph{Algebraic reasoning over relational
  structures}, in: V.~De~Paiva and A.~Simpson, editors, \emph{Proc.~40th
  Conference on Mathematical Foundations of Programming Semantics (MFPS)},
  volume~4 of \emph{ENTICS}, pages 13:1--13:20 (2024).
\newline\urlprefix\url{https://doi.org/10.48550/ARXIV.2401.08445}

\bibitem{kamsma23aml}
Kamsma, M., \emph{Type space functors and interpretations in positive logic},
  Arch. Math. Logic. \textbf{62}, pages 1--28 (2023).
\newline\urlprefix\url{https://doi.org/https://doi.org/10.1007/s00153-022-00825-7}

\bibitem{DBLP:conf/fscd/LagoHLP22}
Lago, U.~D., F.~Honsell, M.~Lenisa and P.~Pistone, \emph{On quantitative
  algebraic higher-order theories}, in: A.~P. Felty, editor, \emph{7th
  International Conference on Formal Structures for Computation and Deduction,
  {FSCD} 2022, August 2-5, 2022, Haifa, Israel}, volume 228 of \emph{LIPIcs},
  pages 4:1--4:18, Schloss Dagstuhl -- Leibniz-Zentrum f{\"{u}}r Informatik
  (2022).
\newline\urlprefix\url{https://doi.org/10.4230/LIPICS.FSCD.2022.4}

\bibitem{maclane98book}
Mac~Lane, S., \emph{{Categories for the Working Mathematician}}, Springer, 2nd
  edition (1998).
\newline\urlprefix\url{https://doi.org/https://doi.org/10.1007/978-1-4757-4721-8}

\bibitem{DBLP:conf/lics/MardarePP16}
Mardare, R., P.~Panangaden and G.~D. Plotkin, \emph{Quantitative algebraic
  reasoning}, in: M.~Grohe, E.~Koskinen and N.~Shankar, editors,
  \emph{Proceedings of the 31st Annual {ACM/IEEE} Symposium on Logic in
  Computer Science, {LICS} '16, New York, NY, USA, July 5-8, 2016}, pages
  700--709, {ACM} (2016).
\newline\urlprefix\url{https://doi.org/10.1145/2933575.2934518}

\bibitem{DBLP:conf/lics/MardarePP17}
Mardare, R., P.~Panangaden and G.~D. Plotkin, \emph{On the axiomatizability of
  quantitative algebras}, in: \emph{32nd Annual {ACM/IEEE} Symposium on Logic
  in Computer Science, {LICS} 2017, Reykjavik, Iceland, June 20-23, 2017},
  pages 1--12, {IEEE} Computer Society (2017).
\newline\urlprefix\url{https://doi.org/10.1109/LICS.2017.8005102}

\bibitem{MiliusUrbat19}
Milius, S. and H.~Urbat, \emph{Equational axiomatization of algebras with
  structure}, in: M.~Boja\'nczyk and A.~Simpson, editors,
  \emph{Proc.~Foundations of Software Science and Computation Structures
  (FoSSaCS)}, volume 11425 of \emph{Lecture Notes Comput.~Sci. (ARCoSS)}, pages
  400--417, Springer (2019).
\newline\urlprefix\url{https://doi.org/10.1007/978-3-030-17127-8\_23}

\bibitem{entics:16876}
Mio, M., \emph{Compact quantitative theories of convex algebras}, Electronic
  Notes in Theoretical Informatics and Computer Science \textbf{Volume 5 -
  Proceedings of MFPS XLI}, 15 (2025), ISSN 2969-2431.
\newline\urlprefix\url{https://doi.org/10.46298/entics.16876}

\bibitem{DBLP:conf/lics/MioSV21}
Mio, M., R.~Sarkis and V.~Vignudelli, \emph{Combining nondeterminism,
  probability, and termination: Equational and metric reasoning}, in:
  \emph{36th Annual {ACM/IEEE} Symposium on Logic in Computer Science, {LICS}
  2021, Rome, Italy, June 29 - July 2, 2021}, pages 1--14, {IEEE} (2021).
\newline\urlprefix\url{https://doi.org/10.1109/LICS52264.2021.9470717}

\bibitem{MioEA22}
Mio, M., R.~Sarkis and V.~Vignudelli, \emph{Beyond nonexpansive operations in
  quantitative algebraic reasoning}, in: \emph{Proc.~37th {A}nnual {ACM}/{IEEE}
  {S}ymposium on {L}ogic in {C}omputer {S}cience (LICS'22)}, pages 52:1--52:13,
  ACM (2022).
\newline\urlprefix\url{https://doi.org/10.1145/3531130.3533366}

\bibitem{MioEA24}
Mio, M., R.~Sarkis and V.~Vignudelli, \emph{Universal quantitative algebra for
  fuzzy relations and generalised metric spaces}, Log.~Methods Comput.~Sci.
  \textbf{20}, pages 19:1--19:56 (2024).
\newline\urlprefix\url{https://doi.org/10.46298/LMCS-20(4:19)2024}

\bibitem{DBLP:conf/concur/MioV20}
Mio, M. and V.~Vignudelli, \emph{Monads and quantitative equational theories
  for nondeterminism and probability}, in: I.~Konnov and L.~Kov{\'{a}}cs,
  editors, \emph{31st International Conference on Concurrency Theory, {CONCUR}
  2020, September 1-4, 2020, Vienna, Austria (Virtual Conference)}, volume 171
  of \emph{LIPIcs}, pages 28:1--28:18, Schloss Dagstuhl - Leibniz-Zentrum
  f{\"{u}}r Informatik (2020).
\newline\urlprefix\url{https://doi.org/10.4230/LIPICS.CONCUR.2020.28}

\bibitem{Sarkis2024}
Sarkis, R., \emph{Lifting Algebraic Reasoning to Generalized Metric Spaces},
  Phd thesis, ENS de Lyon (2024). Available at
  \url{https://doi.org/10.5281/zenodo.14001076}.

\bibitem{STREET1972149}
Street, R., \emph{The formal theory of monads}, Journal of Pure and Applied
  Algebra \textbf{2}, pages 149--168 (1972), ISSN 0022-4049.
\newline\urlprefix\url{https://doi.org/https://doi.org/10.1016/0022-4049(72)90019-9}

\bibitem{WildS20}
Wild, P. and L.~Schr{\"{o}}der, \emph{Characteristic {L}ogics for {B}ehavioural
  {M}etrics via {F}uzzy {L}ax {E}xtensions}, in: I.~Konnov and L.~Kov{\'{a}}cs,
  editors, \emph{31st International Conference on Concurrency Theory, {CONCUR}
  2020, September 1-4, 2020, Vienna, Austria (Virtual Conference)}, volume 171
  of \emph{LIPIcs}, pages 27:1--27:23, Schloss Dagstuhl - Leibniz-Zentrum
  f{\"{u}}r Informatik (2020). \eprint{2007.01033}.
\newline\urlprefix\url{https://doi.org/10.4230/LIPIcs.CONCUR.2020.27}

\bibitem{WildS21}
Wild, P. and L.~Schr{\"{o}}der, \emph{A quantified coalgebraic van benthem
  theorem}, in: S.~Kiefer and C.~Tasson, editors, \emph{Foundations of Software
  Science and Computation Structures - 24th International Conference, {FOSSACS}
  2021, Held as Part of the European Joint Conferences on Theory and Practice
  of Software, {ETAPS} 2021, Luxembourg City, Luxembourg, March 27 - April 1,
  2021, Proceedings}, volume 12650 of \emph{LNCS}, pages 551--571, Springer
  (2021).
\newline\urlprefix\url{https://doi.org/10.1007/978-3-030-71995-1\_28}

\end{thebibliography}


\end{document}